\newtheorem{theorem}{Theorem}[section]
\newtheorem{corollary}[theorem]{Corollary}
\newtheorem{lemma}[theorem]{Lemma}
\def\det{\mathop{\rm det}}
\def\proj{\mathop{\rm proj}}
\def\Re{\mathbb{R}}
\def\tr{^{\intercal}}
\def\mynorm#1{\ensuremath{\left|\left| #1 \right|\right|}}
\def\one{\ensuremath{\mathds{1}}}
\def\mycolora{black}
\def\mycolorb{black}
\def\mycolorc{black}
\def\mycolord{black}
\def\mycolore{black}
\def\mycolorf{black}
\def\mycolorg{black}
\def\ie{\textit{i.e.},\xspace}
\newcommand{\gurobi}{{\color{\mycolora}\text{Gurobi}}}
\newcommand{\onlsup}{{\color{\mycolora}\text{Online Supplement}}}
\def\yes{{\color{\mycolorb}{\tt yes}}}
\def\no{{\color{\mycolorb}{\tt no}}}
\def\line{{\color{\mycolora}\ell}}
\def\tmin{{\color{\mycolora}l}}
\def\tmax{{\color{\mycolora}u}}
\def\start{{\color{\mycolora}x}}
\def\dir{{\color{\mycolora}d}}
\def\radius{{\color{\mycolora}r}}
\def\normp{{\color{\mycolora}p}}
\def\normq{{\color{\mycolora}q}}
\def\theneedle#1#2{{\color{\mycolorc}\mathcal{N}_{#1,#2}}}
\def\vachi{{\color{\mycolord}\chi}}
\def\vatau{{\color{\mycolord}\tau}}
\def\varho{{\color{\mycolord}\rho}}
\def\vaxi{{\color{\mycolord}\xi}}
\def\vasigma{{\color{\mycolord}\sigma}}
\def\Systwoneedles{{\color{\mycolore}FR}}
\def\dualvar{{\color{\mycolore}v}}
\def\dualvarzer{{\color{\mycolore}v^0}}
\def\dualvarzerbar{{\color{\mycolore}\bar{v}^0}}
\def\dualvarzerp{{\color{\mycolore}v^0_+}}
\def\dualvarone{{\color{\mycolore}v^1}}
\def\dualvartwo{{\color{\mycolore}v^2}}
\def\dualvarthr{{\color{\mycolore}v^3}}
\def\dualvarfou{{\color{\mycolore}v^4}}
\def\dualvarfiv{{\color{\mycolore}v^5}}
\def\dualvarsix{{\color{\mycolore}v^6}}
\def\dualvarsev{{\color{\mycolore}v^7}}
\def\nv#1{\ensuremath{{\color{\mycolora}n_v(#1)}}}
\def\SysneedleP{{\color{\mycolore}FP}}
\def\polylhs{{\color{\mycolorf}A}}
\def\polyrhs{{\color{\mycolorf}b}}
\def\dualvarw{{\color{\mycolore}w}}
\def\dualvarwone{{\color{\mycolore}w^1}}
\def\dualvarwtwo{{\color{\mycolore}w^2}}
\def\dualvarwtwop{{\color{\mycolore}w^2_+}}
\def\dualvarwthr{{\color{\mycolore}w^3}}
\def\dualvarwfou{{\color{\mycolore}w^4}}
\def\dualvarwfiv{{\color{\mycolore}w^5}}
\def\nw#1{\ensuremath{{\color{\mycolora}n_w(#1)}}}
\def\sn{\ensuremath{\mathcal{{\color{\mycolora}S}}}}
\def\vox{\ensuremath{\mathcal{{\color{\mycolora}V}}}}
\def\dwell{\ensuremath{\mathcal{{\color{\mycolora}D}}}}
\def\ints{\ensuremath{\mathcal{{\color{\mycolora}T}}}}
\def\tumor{{\color{\mycolora}c}}
\def\oars{{\color{\mycolora}o}}
\def\Tmax{\ensuremath{{\color{\mycolora}T_{\max}}}}
\def\Nmax{\ensuremath{{\color{\mycolora}N_{\max}}}}
\def\prescr{\ensuremath{{\color{\mycolorc}\delta}}}
\def\doselinear{{\color{\mycolora}\eta}}
\def\dwtimes{\ensuremath{{\color{\mycolord}t}}}
\def\dosevar{\ensuremath{{\color{\mycolord}y}}}
\def\pickneedle{\ensuremath{{\color{\mycolord}w}}}
\def\cand{\ensuremath{\mathcal{{\color{\mycolora}G}}}}
\def\oncand{\ensuremath{\mathcal{{\color{\mycolora}C}}}}
\def\Up{\ensuremath{{\color{\mycolora}\zeta}}}
\def\ndlvar{\ensuremath{{\color{\mycolord}w}}}
\def\Imax{\ensuremath{{\color{\mycolora}I_{\max}}}}
\def\dn{\ensuremath{\mathcal{{\color{\mycolorc}Q}}}}
\def\en{\ensuremath{\mathcal{{\color{\mycolorc}E}}}}
\def\points{\ensuremath{\mathcal{{\color{\mycolora}I}}}}
\def\needles{\ensuremath{\mathcal{{\color{\mycolora}J}}}}
\def\polyhyd{\ensuremath{\mathcal{{\color{\mycolora}M}}}}
\def\dwpt{{\color{\mycolora}\omega}}
\def\mdla{{\color{\mycolord}\square}}
\def\mdlb{{\color{\mycolord}\triangle}}
\def\asgt{\ensuremath{{\color{\mycolord}q}}}
\def\stept{\ensuremath{{\color{\mycolord}\iota}}}
\def\Dis{{\color{\mycolora}D}}
\def\Mmdl{\ensuremath{{\color{\mycolora}M(\mdla,\mdlb)}}}
\def\Mdlassg{\ensuremath{{\color{\mycolorc}A}}}
\def\Rmdlorig#1{\ensuremath{{\color{\mycolora}R^{#1}(\mdla,\mdlb)}}}
\def\Rmdl#1{\ensuremath{{\color{\mycolora}R^{#1}(\mdla,\mdlb,\dn_e)}}}
\def\RmdlI#1{\ensuremath{{\color{\mycolora}R_1^{#1}(\mdla,\mdlb,\dn_e)}}}
\def\RmdlII#1{\ensuremath{{\color{\mycolora}R_2^{#1}(\mdla,\mdlb,\dn_e)}}}
\def\RmdlIII#1{\ensuremath{{\color{\mycolora}R_3^{#1}(\mdla,\mdlb,\dn_e)}}}
\def\polycons{\ensuremath{{\color{\mycolora}p}}}
\def\totalpolycons{\ensuremath{{\color{\mycolora}p}}}
\def\fix#1{\underline{\mathsf{#1}}}
\def\init#1{\tilde{\mathsf{#1}}}
\def\optval{{\color{\mycolorg}\varphi}}
\def\totoptval{{\color{\mycolorg}\psi}}
\def\K{\ensuremath{{\color{\mycolora}k}}} 
\def\circle{{\color{\mycolora}\delta}}
\def\MC{{\color{\mycolorc}MC}}
\def\star{{\color{\mycolora}*}}
\newcommand{\qedjp}{\em $\Halmos$}
\begin{document}

\ARTICLEAUTHORS{%
\AUTHOR{Nasim Mirzavand Boroujeni}
\AFF{Department of Industrial and Systems Engineering, University of Minnesota, 207 Church Street SE, Minneapolis, MN 55455, USA}

\AUTHOR{Jean-Philippe P. Richard}
\AFF{Department of Industrial and Systems Engineering, University of Minnesota, 207 Church Street SE, Minneapolis, MN 55455, USA} 

\AUTHOR{David Sterling}
\AFF{Department of Radiation Oncology, University of Minnesota, 516 Delaware Street SE, Minneapolis MN, 55455, USA}

\AUTHOR{Christopher Wilke}
\AFF{Department of Radiation Oncology, University of Minnesota, 516 Delaware Street SE, Minneapolis MN, 55455, USA}
}

\RUNAUTHOR{Mirzavand Boroujeni, Richard, Sterling, and Wilke}

\RUNTITLE{Optimizing needle placement in 3D-printed masks for HDR-BT}

\TITLE{Optimization models for needle placement in
3D-printed masks for high dose rate brachytherapy}

\ABSTRACT{
High dose rate brachytherapy (HDR-BT) is an appealing treatment option for superficial cancers that permits the delivery of higher local doses than other radiation modalities without a significant increase in toxicity. 
In order for HDR-BT to be used in these situations, needles through which the radiation source is passed must be strategically placed in close proximity to the patient's body. 
Currently, this crucial step is performed manually by physicians or medical physicists. 
The use of 3D-printed masks customized for individual patients has been advocated as a way to more precisely and securely position these needles, with the potential of producing better and safer treatment plans. 
In this paper, we propose optimization approaches for positioning needles within 3D-printed masks for HDR-BT, focusing on skin cancers. 
We numerically show that the models we propose efficiently generate more homogeneous plans than those derived manually and provide an alternative to manual placement that can save planning time and enhance plan quality.
}

\KEYWORDS{skin cancer, high dose rate brachytherapy,  needle placement, 3D printing, mask, homogeneity.}

\maketitle

\section{Introduction}
\label{section:intro}

\subsection{Background}
\label{section:intro:background}

High dose rate brachytherapy (HDR-BT) is an effective treatment for skin cancer, especially when surgery alone is not sufficient to remove all cancer cells. 
To treat superficial tumors, an afterloader is used to push a radioactive source through several needles placed in close proximity to the cancer cells on the skin's surface~\citep{holm2016heuristics}, stopping at selected dwell positions for the duration specified in the treatment plan. 
Compared to external beam radiation treatments, HDR-BT minimizes damage to nearby healthy tissues by delivering a comparatively smaller radiation dose to non-tumor regions~\citep{skowronek2015brachytherapy}.
Thus, the risk of side effects is reduced with HDR-BT.

A challenging issue in using HDR-BT for superficial cases is to ensure that needles stay correctly positioned throughout the treatment. 
Tape is sometimes used to secure needles to the skin.
Current clinical practice also employs Freiburg flaps, which are flexible two-dimensional mesh-like applicators composed of a grid of small silicon spheres through which needles can be inserted. 
Such applicators can conform fairly well to complicated patient geometries. 
However, they have two important limitations. 
First, they restrict the needles to follow two predefined directions in the mesh, restricting possible treatment plans. 
Recent research indicates that nonparallel needle configurations can help in better achieving dose objectives and in avoiding critical structures near tumor tissues~\citep{garg2012initial}. 
Second, air gaps can form between the skin and these applicators, which may compromise the accuracy of dose computations and impact the quality of plans.

Recently, 3D printing has been proposed as a solution for addressing these issues. 
In this approach, rapid prototyping technology is used for the physical production of applicators with the use of 3D computer aided design~\citep{chua2020introduction} to create 
customized masks tailored to individual patients along with internal channels for needles.  
Studies have shown that 3D-printed technologies facilitate the easier positioning of needles in HDR-BT~\citep{marar2022applying}.
Further, they allow a tight fit of the applicator to the skin and enable the use of complex needle configurations.
Currently, the placement of these needles must be determined manually.
This can pose a significant challenge in clinical settings due to the intricate geometry of patients, especially in areas like the face with few flat surfaces. 
In this paper, we propose optimization approaches to determining needle placements that result in high-quality plans while avoiding intersections with given structures. 
Although several studies have examined needle placement in interstitial HDR-BT, we are not aware of work that has addressed this problem for 3D-printed masks for skin cancer.

\subsection{Literature review}
\label{section:intro:litreview}

The problem we consider in this paper reduces to that of deciding the positions of channels since the actual geometry of the mask can be created around the selected channels.
Since needles will be inserted in the channels, we use these terms interchangeably. 
Traditionally, needle positions 
have often been determined by adhering to high-level recommendations and without the aid of optimization; see~\cite{kovacs2005gec} for prostate cancer. 

The use of optimization for needle placement has primarily been investigated for interstitial cancer cases. 
Some studies have proposed two-phase approaches in which the location of needles is optimized in the first phase with respect to target point coverage, and  dwell times for selected needles are optimized in the second phase; see~\cite{siauw2011ipip} for the case of prostate cancer.   
These approaches produce plans quickly, although they are limited by the fact that dose criteria are not considered in the needle selection phase.
Similarly, \cite{poulin2013adaptation} use a Centroidal Voronoi Tesselations (CVT) algorithm to position parallel needles in prostate cancer cases. 

Some studies have sought to optimize needle locations and dwell times simultaneously;
see~\cite{holm2016heuristics} and \cite{wang2021simultaneous} for the case of prostate cancer. 
The model proposed in the former article is a difficult mixed integer program (MIP) that is solved with heuristics to obtain good-quality solutions in the allocated time. 
This model does not contain constraints to prevent needles from being placed in the urethra.
The model of the latter article
has an objective function that includes a penalty term for the total number of needles used in addition to dosimetric objectives for the tumor and healthy organs.  
It only considers parallel needle configurations.
Further, the choice of penalty parameters is difficult and significantly impacts the quality of plans.

The approaches described above have the following limitations. 
First, most of them utilize parallel needles, which limit possible treatment plans. 
Second, they often require long computational times to produce sub-optimal dose results. 
Third, they allow needles to intersect with tumor tissues, which is undesirable in superficial skin cancer treatments.

To overcome these limitations, we introduce several optimization approaches 
that can be implemented using commercial optimization software and  
produce high-quality needle configurations.
These approaches compare favorably in terms of dosimetric indices with the clinical plan used in a case of skin cancer treated using a Freiburg flap.  
Further, they significantly reduce the time needed to devise plans and improve their quality, two challenging practical aspects of HDR-BT planning.

The rest of the paper organized as follows.
In Section~\ref{section:modelfeatures}, we describe how we model salient problem features. 
In Section~\ref{section:modelsmethods}, we introduce models and solution methods for optimizing needle positions. 
In Section~\ref{section:procedures}, we present our experimental setup. 
In Section~\ref{section:results}, we describe numerical results for a practical case. 
We conclude in Section~\ref{section:conclusion} with directions of future research.

\section{Problem features and modeling constructs}
\label{section:modelfeatures}

In this paper, we only consider creating straight channels in masks. 
While it is technologically possible to use curved channels, sharp curvatures may lead to increased risk of source retainment during delivery and thus would be
unsafe to deliver clinically.
Straight needles also have simpler geometric descriptions and allow for simpler dose calculations.

We model needles as  
the union of all balls of radius $\radius$ centered at each point of a line segment, which we refer to as the needle \textit{core}.
Thus, every needle $\line$ can be specified through five parameters: 
($i$) a reference point $\start \in \Re^3$ on its core, 
($ii$) its direction $\dir \in \Re^3$, 
($iii$) the step size $\tmin$ needed to reach one end of the core from $\start$, 
($iv$) the step size $\tmax$ needed to reach the other end of the core from $\start$, 
and ($v$) the radius $\radius$ of the needle.  
Given these parameters, the set of points $\vachi$ in $\line$ is
\begin{align}
\theneedle{\normp}{\radius}(\start,\dir,\tmin,\tmax) = \proj_{\vachi} \left \{ 
\begin{array}{c|c}
(\vachi,\vatau,\varho) \in \Re^3 \times \Re \times \Re^3 \,\,&\,\, 
\begin{array}{l}
\vachi= \start +\vatau \dir + \varho \\
\tmin \le \vatau  \le \tmax  \\
||\varho||_{\normp} \le \radius 
\end{array}
\end{array}
\right\},\label{def:cylindricalneedle}
\end{align}
where $\normp \ge 1$ is a norm-order.
When $\radius=0$, the needle reduces to its core. 
When $\tmin=-\infty$ and $\tmax=\infty$, the core is a line rather than a line segment. 
We say that the needle is \textit{infinite} in this case. 
We refer to needles as being \textit{finite} when $\tmin>-\infty$ and $\tmax<\infty$.
In \eqref{def:cylindricalneedle}, choosing $\normp=2$ when $\line$ is finite yields a hemispherical cylinder. 
Choosing $\normp$ to be $1$ or $\infty$ yields polyhedral objects. 
As, for all $v \in \Re^3$, $||v||_1 \le \sqrt{3} ||v||_{2}$ and $||v||_\infty \le ||v||_2$, it also holds that 
$\theneedle{2}{\radius}(\start,\dir,\tmin,\tmax) \subseteq \theneedle{1}{\sqrt{3}\radius}(\start,\dir,\tmin,\tmax)$ and
$\theneedle{2}{\radius} (\start,\dir,\tmin,\tmax) \subseteq \theneedle{\infty}{\radius}(\start,\dir,\tmin,\tmax)$, yielding polyhedral approximations of typical needle geometries.


Next, we describe three salient problem features that models aiming to determine channel positions in 3D-printed masks must consider if their solutions are to be implemented in practice. 
We also present mathematical models for these features that will be used  in later sections.


\subsection{Exiting planes}
\label{section:modelfeatures:exitingplanes}

To deliver a plan, each needle must be connected to the afterloader through its own transfer tube.  
As plans utilize multiple needles,  it is crucial to position them so that needle connections to the transfer tubes are possible. 
To ensure this, we introduce the concept of an \textit{exiting plane}. 
Exiting planes are flat surfaces around the patient's body where, at least conceptually, transfer tubes can be connected to the afterloader. 
We require each needle to intersect with at least one exiting plane and consider the intersection to be the needle starting point. 
Figure~\ref{figure:face} depicts the contours of the head of a patient with skin cancer on their nose.
For this case, we define the five exiting planes 
of Figure~\ref{figure:exiting}.

\begin{figure}[!htb]
\centering
\begin{subfigure}{0.45\textwidth}
\centering
\includegraphics[scale=0.45]{ 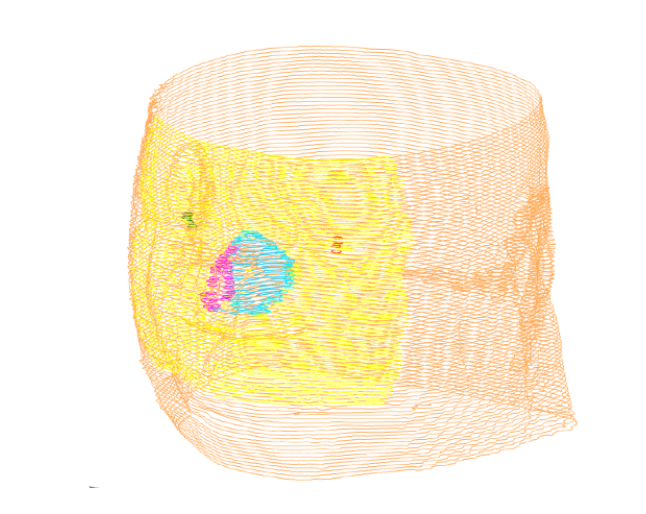}
\captionsetup{justification=centering,font=scriptsize}
\caption{3D representation of the patient's head}
\label{figure:face}
\end{subfigure}
\hfill
\begin{subfigure}{0.4\textwidth}
\centering
\includegraphics[scale=0.4]{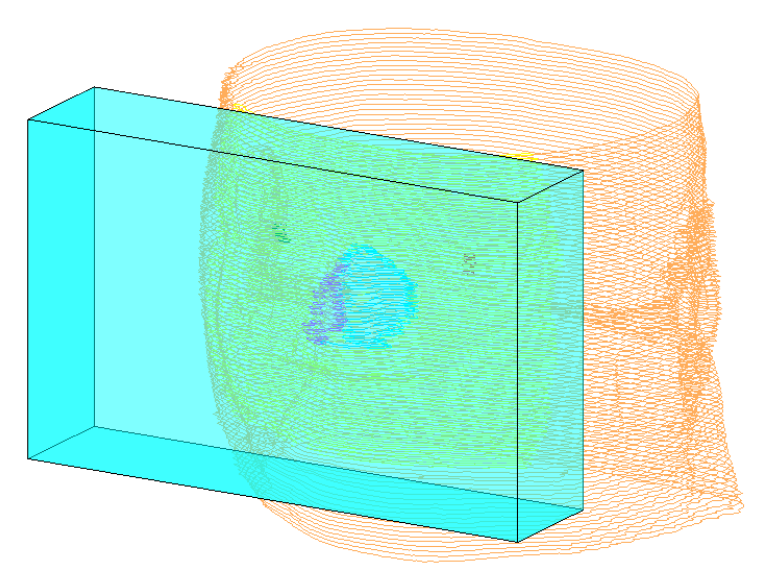}
\captionsetup{justification=centering,font=scriptsize}
\caption{3D representation of exiting planes}
\label{figure:exiting}
\end{subfigure}
\label{exitingm}
\caption{Case study of a patient with skin cancer on the nose}
\end{figure}

\subsection{Intersection between two needles}
\label{section:modelfeatures:needle-needle}

Within a 3D-printed mask, it is crucial that the path of the radioactive source within each channel is smooth and free of obstruction, as any obstruction could cause the radioactive source to become lodged inside of the channel.
It is thus preferred for channels and the needles within to not intersect to prevent treatment misadministrations.

Proposition~\ref{proposition:twofiniteneedles} presents a modeling construct that allows to determine or to impose that two needles, defined as in \eqref{def:cylindricalneedle},  do not intersect. 
This construction requires the introduction of $\nv{\leftrightarrow}:=12$ variables that we denote by $\dualvar$. 
It is obtained by requiring the set of intersection points between two needles to be empty and then by formulating its conic alternative system (see Proposition~2.4.2 in \cite{ben2001lectures}).  
The proof is given in \onlsup~\ref{section:os:needle-needle}.   
For $\normp \ge 1$, we define $C_{\normp}=\{(r,s) \in \Re^3 \times \Re \,|\, ||r||_{\normp} \le s \}$ to be the $\normp$-order cone of dimension $4$.
The dual cone of $C_{\normp}$, which we denote by $C_{\normp}^*$, is the $\normq$-order cone of dimension $4$ where $\normq$ is such that $\frac{1}{\normp} + \frac{1}{\normq}=1$.

\begin{restatable}{proposition}{Lemmatwofiniteneedles}\label{proposition:twofiniteneedles}
Let $(\start^i,\dir^i,\tmin^i,\tmax^i)$ and $(\start^j,\dir^j,\tmin^j,\tmax^j)$ be the parameters of two finite needles $\line^i$ and $\line^j$. 
Also let $\normp \ge 1$ and $\radius>0$.
The system
\begin{align}
\Systwoneedles^{\leftrightarrow}_{\normp,\radius} \left[\begin{array}{c}\start^i,\dir^i,\tmin^i,\tmax^i\\ \start^j,\dir^j,\tmin^j,\tmax^j \\ \dualvar \end{array}\right]:  
\left\{ 
\begin{array}{ll}
\multicolumn{2}{l}{
(\start^i-\start^j)\tr \dualvarzer   - \tmin^i \dualvarone  + \tmax^i \dualvartwo} \\
\multicolumn{2}{l}{ \qquad\qquad - \tmin^j \dualvarfou + \tmax^j \dualvarfiv  + \radius \dualvarthr  + \radius \dualvarsix \le -\epsilon,} \\
\multicolumn{2}{l}{
2\,\one\tr\dualvarzerp+\dualvarone+\dualvartwo+\dualvarthr+\dualvarfou+\dualvarfiv+\dualvarsix\leq 1,} \\
-\dualvarzer \le \dualvarzerp &\dualvarzer \le \dualvarzerp ,\\
 -(\dir^i)\tr \dualvarzer - \dualvarone + \dualvartwo = 0, \qquad & 
 (\dir^j)\tr \dualvarzer - \dualvarfou + \dualvarfiv = 0 ,\\
 (-\dualvarzer,\dualvarthr) \in C_{\normp}^*, &
 (\dualvarzer,\dualvarsix) \in C_{\normp}^*, \\
 \dualvarone \ge 0, \,\, \dualvartwo \ge 0, \,\, \dualvarthr \ge 0, & \dualvarfou \ge 0, \,\, \dualvarfiv \ge 0,  \,\, \dualvarsix \ge 0,
 \end{array}
 \right.
 \label{sys:twofiniteneedles}
\end{align}
in variables $\dualvar=(\dualvarzer;\dualvarzerp;\dualvarone,\dualvartwo,\dualvarthr,\dualvarfou,\dualvarfiv,\dualvarsix) \in \Re^3 \times \Re^3 \times \Re^6$, ensures that needles $\line^i$ and $\line^j$ do not intersect when $\epsilon$ is chosen positive. 
\end{restatable}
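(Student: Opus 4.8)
The plan is to follow the route announced before the statement: realize $\line^i\cap\line^j$ as the image under $\proj_{\vachi}$ of an explicit conic feasibility set, eliminate the common point $\vachi$ to obtain a conic system in the remaining variables, and recognize the system $\Systwoneedles^{\leftrightarrow}_{\normp,\radius}$ of \eqref{sys:twofiniteneedles} as (a normalized form of) the alternative system associated with it via Proposition~2.4.2 of \cite{ben2001lectures}. Since the statement asserts only one direction — feasibility of \eqref{sys:twofiniteneedles} with $\epsilon>0$ forces $\line^i\cap\line^j=\emptyset$ — it suffices to establish the weak-duality half of the alternative, which can be done by a direct aggregation argument.

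First I would record from \eqref{def:cylindricalneedle} that a point lies in $\line^i\cap\line^j$ exactly when there are scalars $\vatau^i\in[\tmin^i,\tmax^i]$, $\vatau^j\in[\tmin^j,\tmax^j]$ and vectors $\varho^i,\varho^j\in\Re^3$ with $\mynorm{\varho^i}_{\normp}\le\radius$ and $\mynorm{\varho^j}_{\normp}\le\radius$ such that $\start^i+\vatau^i\dir^i+\varho^i=\start^j+\vatau^j\dir^j+\varho^j$. Eliminating the common point, the needles intersect if and only if the conic system in $(\vatau^i,\vatau^j,\varho^i,\varho^j)$ consisting of the vector equality $(\start^i-\start^j)+\vatau^i\dir^i-\vatau^j\dir^j+\varho^i-\varho^j=0$, the half-line constraints $\vatau^i-\tmin^i\ge 0$, $\tmax^i-\vatau^i\ge 0$, $\vatau^j-\tmin^j\ge 0$, $\tmax^j-\vatau^j\ge 0$, and the conic constraints $(\varho^i,\radius)\in C_{\normp}$, $(\varho^j,\radius)\in C_{\normp}$ is feasible. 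The components of $\dualvar$ are precisely the multipliers attached, in this order, to the three equalities ($\dualvarzer$), to the four half-line constraints ($\dualvarone,\dualvartwo,\dualvarfou,\dualvarfiv\ge 0$), and to the two conic constraints; stationarity in $\vatau^i$, $\vatau^j$, $\varho^i$, $\varho^j$ then produces the two scalar equations and the two memberships $(-\dualvarzer,\dualvarthr)\in C_{\normp}^*$, $(\dualvarzer,\dualvarsix)\in C_{\normp}^*$ of \eqref{sys:twofiniteneedles} (using $C_{\normp}^*=C_{\normq}$ with $\tfrac1\normp+\tfrac1\normq=1$), while $\dualvarzerp$ and the bound $2\,\one\tr\dualvarzerp+\dualvarone+\dualvartwo+\dualvarthr+\dualvarfou+\dualvarfiv+\dualvarsix\le 1$ constitute the normalization that makes the alternative exact in \cite{ben2001lectures}; they play no role in the direction at hand.

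For that direction, assume $\dualvar$ satisfies \eqref{sys:twofiniteneedles} with $\epsilon>0$ and, for contradiction, that $(\vatau^i,\vatau^j,\varho^i,\varho^j)$ is feasible for the primal system above. Taking the inner product of the vector equality with $\dualvarzer$ and substituting $(\dir^i)\tr\dualvarzer=\dualvartwo-\dualvarone$ and $(\dir^j)\tr\dualvarzer=\dualvarfou-\dualvarfiv$ gives $(\start^i-\start^j)\tr\dualvarzer+\vatau^i(\dualvartwo-\dualvarone)-\vatau^j(\dualvarfou-\dualvarfiv)+(\varho^i)\tr\dualvarzer-(\varho^j)\tr\dualvarzer=0$. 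Bounding $\vatau^i(\dualvartwo-\dualvarone)\le\tmax^i\dualvartwo-\tmin^i\dualvarone$ and $-\vatau^j(\dualvarfou-\dualvarfiv)\le-\tmin^j\dualvarfou+\tmax^j\dualvarfiv$ using $\tmin^i\le\vatau^i\le\tmax^i$, $\tmin^j\le\vatau^j\le\tmax^j$ and $\dualvarone,\dualvartwo,\dualvarfou,\dualvarfiv\ge 0$, and bounding $(\varho^i)\tr\dualvarzer\le\radius\,\dualvarthr$ and $-(\varho^j)\tr\dualvarzer\le\radius\,\dualvarsix$ from $\langle u,w\rangle\ge 0$ for $u\in C_{\normp}$, $w\in C_{\normp}^*$ applied to $u=(\varho^i,\radius)$, $w=(-\dualvarzer,\dualvarthr)$ and to $u=(\varho^j,\radius)$, $w=(\dualvarzer,\dualvarsix)$, we get $0\le(\start^i-\start^j)\tr\dualvarzer-\tmin^i\dualvarone+\tmax^i\dualvartwo-\tmin^j\dualvarfou+\tmax^j\dualvarfiv+\radius\dualvarthr+\radius\dualvarsix\le-\epsilon<0$, a contradiction; hence $\line^i\cap\line^j=\emptyset$.

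The only delicate point I anticipate is bookkeeping of signs in the aggregation — in particular the asymmetry between $\varho^i$, which enters the equality with a $+$ and is controlled through the cone membership of $-\dualvarzer$, and $\varho^j$, which enters with a $-$ and is controlled through the cone membership of $+\dualvarzer$ — together with the correct identification $C_{\normp}^*=C_{\normq}$ that legitimizes the two conic inequalities. Proving the converse (that an intersecting pair rules out feasibility of \eqref{sys:twofiniteneedles} only in the limit $\epsilon\downarrow 0$) would additionally invoke the exactness half of Proposition~2.4.2 of \cite{ben2001lectures}, and thus the normalization constraints and a closedness argument for the relevant image cone, but this is not needed for the statement as given.
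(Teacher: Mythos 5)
Your proof is correct and rests on the same underlying idea as the paper's --- viewing \eqref{sys:twofiniteneedles} as a normalized form of the conic alternative of the intersection system --- but you execute the key step differently. The paper keeps the common point $\vachi$ as a variable, writes down the full conic dual $DR_{\normp,\radius}(\line^i,\line^j)$, invokes Proposition~2.4.2 of \cite{ben2001lectures} as a black box for the implication ``dual feasible $\Rightarrow$ primal infeasible,'' and then spends the remainder of the argument on bookkeeping: adding the normalization, eliminating $\dualvarsev$ via $\dualvarzer+\dualvarsev=0$, and linearizing $\|\dualvarzer\|_1$ with $\dualvarzerp$. You instead eliminate $\vachi$ up front and prove the needed implication directly by aggregation: pairing the vector equality with $\dualvarzer$, substituting the two stationarity equations, and bounding the step-size and cone terms via $\langle u,w\rangle\ge 0$ for $u\in C_{\normp}$, $w\in C_{\normp}^*$, arriving at $0\le\cdots\le-\epsilon<0$. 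Your sign bookkeeping for $\varho^i$ versus $\varho^j$ is right and matches the asymmetric memberships $(-\dualvarzer,\dualvarthr)\in C_{\normp}^*$ and $(\dualvarzer,\dualvarsix)\in C_{\normp}^*$. What your route buys is a self-contained verification of soundness that does not depend on the cited proposition; what the paper's route buys is an account of where every constraint of \eqref{sys:twofiniteneedles} comes from (including the normalization and the role of $\dualvarzerp$), which underpins the later remark that the representation is ``almost tight'' for small $\epsilon$. You correctly observe that the normalization constraints play no role in the one direction actually asserted by the statement.
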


In Proposition~\ref{proposition:twofiniteneedles}, 
alternative theorems would naturally result in a right-hand side that is strictly less than $0$ for the first inequality, rather than being less than or equal to $-\epsilon$.
We opted for using a nonstrict inequality 
as strict inequalities cannot be implemented in solvers.
It can be established that, when $\epsilon$ is sufficiently small, System~\eqref{sys:twofiniteneedles} not only guarantees that needles do not intersect but also provides a fairly tight representation of this requirement. 
This is because when the alternative system that yields \eqref{sys:twofiniteneedles} is infeasible, the original system is ``almost feasible'';
see Proposition~2.4.2 in \cite{ben2001lectures}.
We acknowledge that choosing a suitable value of $\epsilon$ can be difficult in practice, especially for situations where the parameters of $\line^i$ and $\line^j$ are design variables.

When needles are infinite in length, System \eqref{sys:twofiniteneedles} becomes simpler as there is no need to introduce the dual variables $\dualvarone$, $\dualvartwo$, $\dualvarfou$, and $\dualvarfiv$.  
The resulting system, which we denote by $\Systwoneedles^{\infty}_{\normp,\radius}\left[\cdot\right]$ and describe explicitly in Corollary~\ref{lemma:twoinfiniteneedles} of \onlsup~\ref{section:os:needle-needle},
only requires the addition of $\nv{\infty}:=8$ variables $\dualvar$.
Considering needles to be infinite is, however, more restrictive 
as it prevents intersection of needles even outside of the restricted volume where the procedure takes place.  


When the needles parameters $(\start^i,\dir^i,\tmin^i,\tmax^i)$ and $(\start^j,\dir^j,\tmin^j,\tmax^j)$ are fixed, System~\eqref{sys:twofiniteneedles} has linear constraints and some of its variables are required to belong to a $\normq$-order cone.
Models with these types of constraints can be solved using commercial software such as \gurobi, for the cases where $\normp \in \{1,2,\infty\}$. 
When the needle parameters are not fixed and are considered to be design variables, then \eqref{sys:twofiniteneedles} has bilinear and nonconvex constraints that multiply variables $(\start,\dir,\tmin,\tmax)$ with dual variables $\dualvar$. 
Such constraints can also be handled by \gurobi\xspace as they are special types of nonconvex quadratic constraints. 
The systems $\Systwoneedles^{\leftrightarrow}_{\normp,\radius}\left[\cdot\right]$ and $\Systwoneedles^{\infty}_{\normp,\radius}\left[\cdot\right]$ differ in the number of bilinear terms they have. 
System $\Systwoneedles^{\infty}_{\normp,\radius}\left[\cdot\right]$, which has fewer bilinear terms, tend to yield simpler models.

An alternative approach to ensure that infinite-length needles are spaced sufficiently apart to prevent intersections is to impose a minimum distance of $2\radius$ between their cores. 
 This approach is similar to the one described earlier, but it offers a simplified formula that can be used to create different formulations. 
We obtain the following result, whose proof can be found in \onlsup~\ref{section:os:needle-needle}.

\begin{restatable}{proposition}{Lemmaminimumdistanceofcores}\label{proposition:twoskewneedles}
Let $(\start^i,\dir^i)$ and $(\start^j,\dir^j)$ be the parameters of two skew needles $\line^i$ and $\line^j$.
The euclidean distance between the cores of needles $\line^i$ and $\line^j$ is at least $2\radius$ if their parameters satisfy the following constraint:
\begin{align}
\Systwoneedles^{\times}_{2,\radius} \left[\begin{array}{c}\start^i,\dir^i,\cdot,\cdot\\ \start^j,\dir^j,\cdot,\cdot \\ \cdot \end{array}\right]:    
\left\{ 
\begin{array}{ll}
\det\left(M[\start^i,\dir^i,\start^j,\dir^j]\right) \ge 2\radius \sqrt{
\begin{array}{l}
(\dir^i_2 \dir^j_3-\dir^i_3 \dir^j_2)^2 \\
\quad + (\dir^i_3 \dir^j_1-\dir^i_1 \dir^j_3)^2 \\
\quad \quad + (\dir^i_1 \dir^j_2-\dir^i_2 \dir^j_1)^2 
\end{array},}
\label{distanceskew}
\end{array}
\right.
\end{align}
where
$$M[\start^i,\dir^i,\start^j,\dir^j] = \left[ 
\begin{array}{ccc} 
\start_1^i - \start_1^j & \start_2^i - \start_2^j & \start_3^i - \start_3^j \\
\dir_1^i & \dir_2^i & \dir_3^i \\
\dir_1^j & \dir_2^j & \dir_3^j \\
\end{array}
\right].$$
\end{restatable}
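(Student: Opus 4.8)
The plan is to reduce the statement to the classical formula for the distance between two skew lines and then clear denominators. Recall that the cores of $\line^i$ and $\line^j$ are the lines $\{\start^i+\vatau\dir^i : \vatau \in \Re\}$ and $\{\start^j+\vatau'\dir^j : \vatau' \in \Re\}$. For two skew lines, the standard result from analytic geometry is that their euclidean distance equals
\begin{align*}
\frac{\left| (\start^i-\start^j)\tr (\dir^i \times \dir^j) \right|}{\mynorm{\dir^i \times \dir^j}},
\end{align*}
where $\times$ denotes the cross product in $\Re^3$; this is well defined precisely because skewness guarantees $\dir^i \times \dir^j \neq 0$. So the first step is to establish (or cite) this formula, e.g. by observing that $\dir^i \times \dir^j$ is orthogonal to both direction vectors, hence the component of the connecting vector $\start^i-\start^j$ along the unit vector $(\dir^i \times \dir^j)/\mynorm{\dir^i \times \dir^j}$ is exactly the perpendicular separation of the two lines.

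The second step is to recognize the two pieces of \eqref{distanceskew} in terms of this formula. The scalar triple product $(\start^i-\start^j)\tr(\dir^i \times \dir^j)$ expands, by cofactor expansion along the first row, to exactly $\det\bigl(M[\start^i,\dir^i,\start^j,\dir^j]\bigr)$. Likewise, writing out the components of $\dir^i \times \dir^j$ gives
\begin{align*}
\mynorm{\dir^i \times \dir^j}^2 = (\dir^i_2 \dir^j_3-\dir^i_3 \dir^j_2)^2 + (\dir^i_3 \dir^j_1-\dir^i_1 \dir^j_3)^2 + (\dir^i_1 \dir^j_2-\dir^i_2 \dir^j_1)^2,
\end{align*}
which is precisely the quantity under the square root on the right-hand side of \eqref{distanceskew}. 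Thus the distance-at-least-$2\radius$ requirement reads
\begin{align*}
\left| \det\bigl(M[\start^i,\dir^i,\start^j,\dir^j]\bigr) \right| \ge 2\radius \, \mynorm{\dir^i \times \dir^j},
\end{align*}
and \eqref{distanceskew} is the special case in which the determinant is additionally assumed nonnegative (if it is negative, one may replace $\dir^i$ by $-\dir^i$, which flips the sign without changing either the line or the norm on the right, so imposing the nonnegative version is without loss of generality and is what makes the constraint a single inequality suitable for a solver). I would state this reduction explicitly so the asymmetry between $|\cdot|$ and the bare determinant in \eqref{distanceskew} is not mysterious.

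I do not expect a serious obstacle here — the result is essentially a restatement of a textbook fact — but the one point that deserves care is the role of the skewness hypothesis and the direction of the implication. The proposition only claims that \eqref{distanceskew} is \emph{sufficient} for the cores to be at least $2\radius$ apart; I would therefore present it as: assuming the lines are skew, the left-hand side divided by $\mynorm{\dir^i \times \dir^j}$ equals the exact distance, so \eqref{distanceskew} (which says that quotient is $\ge 2\radius$) forces the distance to be $\ge 2\radius$. The mild subtlety is that when the lines are parallel (not skew) the denominator vanishes and the formula degenerates, so the claim is genuinely restricted to the skew case as stated; I would note that in a design context one can still use \eqref{distanceskew} as a valid (conservative) constraint, since it simply cannot be satisfied in the degenerate configuration. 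Beyond that, the proof is just the cofactor expansion of the $3 \times 3$ determinant and the coordinate expansion of the cross product, both of which are routine.
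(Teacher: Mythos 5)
Your proof is correct, but it takes a different route from the paper's. You invoke the classical analytic-geometry formula for the distance between two skew lines, $\bigl|(\start^i-\start^j)\tr(\dir^i\times\dir^j)\bigr|/\mynorm{\dir^i\times\dir^j}$, and then identify the scalar triple product with $\det\bigl(M[\start^i,\dir^i,\start^j,\dir^j]\bigr)$ and the denominator with the square-root term of \eqref{distanceskew}. The paper instead derives \eqref{distanceskew} as a specialization of the conic-duality framework of Proposition~\ref{proposition:twofiniteneedles}: it writes down the alternative system $DR^{\infty}_{2,\circle}$ for the intersection of two needles of radius $\circle<\radius$, observes that the orthogonality constraints $(\dir^i)\tr\dualvarzer=0$ and $(\dir^j)\tr\dualvarzer=0$ force $\dualvarzer$ to be a multiple of the cross product (written via the $2\times 2$ minors of the direction matrix), and reduces feasibility of that system to exactly \eqref{distanceskew} through the Laplace expansion of the determinant. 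Your argument is shorter and more elementary; the paper's buys a uniform derivation in which \eqref{distanceskew} appears as one more instance of the same alternative-system machinery used throughout Section~\ref{section:modelfeatures} (and, in effect, re-derives the textbook distance formula as a byproduct). Your explicit remark about the missing absolute value is worthwhile: for sufficiency it is immediate (the right-hand side is nonnegative, so $\det\ge$ RHS forces $|\det|\ge$ RHS), and your observation that one may replace $\dir^i$ by $-\dir^i$ without changing the line explains why imposing the one-sided inequality is without loss of generality; the paper passes over this point by asserting that the dual constraint is invariant under scaling of $\dualvarzer$, which strictly speaking holds only for positive scalings, so your treatment is the more careful one on this detail.
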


We observe that $\Systwoneedles^{\times}_{2,\radius} \left[\cdot\right]$ does not require the introduction of variables $\dualvar$, so that $\nv{\times}:=0$.
For a pair of needles $\line^i$ and $\line^j$ with reference points $\start^i$ and $\start^j$ and directions $\dir^i$ and $\dir^j$, \eqref{distanceskew} is a second-order cone constraint in $\dir^i$ (\textit{resp.},  $\dir^j$) if all other needle parameters are fixed. 
Similarly, \eqref{distanceskew} is a linear constraint in $\start^i$ and $\start^j$ if all other parameters are fixed. 
Constraint~\eqref{distanceskew} requires the needles to be skew as otherwise both its left-hand side and right-hand sides are equal to zero. 
When needles are parallel, a simpler formula for the distance between needles can be computed, 
which can be found in Lemma~\ref{parneedles} of \onlsup~\ref{section:os:needle-needle}. 
This derivation yields a simplified version of system
$\Systwoneedles^{\times}_{2,\radius} \left[\cdot\right]$, which we call
$\Systwoneedles^{=}_{2,\radius} \left[\cdot\right]$, 
for which no new variables $\dualvar$ are needed, \ie $\nv{=}:=0$.


\subsection{Intersection between a needles and a body structure}
\label{section:modelfeatures:needle-polytope}

The use of HDR-BT for superficial radiation treatment offers advantages when the needles are in close proximity to the treatment area, especially the tumor. 
However, in areas of the body that have complicated geometries, being close to the body might be difficult to achieve without crossing sensitive body structures. 
For instance, when treating skin cancer on the face,  
needle positioning is challenging due to the nose protruding from the rest of the face and proximity of the eyes.

We propose to represent protruding structures of interest as polytopes, which are sets of the form $P_{\polylhs,\polyrhs}:=\left\{ p \in \Re^3 \,|\, \polylhs p \leq \polyrhs \right\}$, where $\polylhs \in \Re^{n_{\polylhs,\polyrhs}\times 3}$ is a matrix and $\polyrhs \in \Re^{n_{\polylhs,\polyrhs}}$ is a vector.
In this description, $n_{\polylhs,\polyrhs}$ denotes the number of inequalities in $P_{\polylhs,\polyrhs}$. 
One approach to obtaining these polytopes is to select points that delineate the boundaries of each protruding structure and then to compute the convex hull of these points using an algorithm such as quickhull \citep{barber1996quickhull} or the double description method \citep{motzkin1953double}.  

Proposition~\ref{proposition:infiniteneedlepolytope}, whose proof is presented in \onlsup~\ref{section:os:needle-polytope}, describes 
a modeling construct that allows to determine or to impose that
a given needle $\line$ and a structure of interest described by a polytope $P_{\polylhs,\polyrhs}$ do not intersect. 
Similar to Proposition~\ref{proposition:twofiniteneedles}, this system is obtained using conic duality starting from a system that models the set of points that belong to both $\line$ and $P_{\polylhs,\polyrhs}$.  
It requires the introduction of $\nw{\leftrightarrow}:=n_{\polylhs,\polyrhs}+9$ variables. 

\begin{restatable}{proposition}{Lemmainfiniteneedlepolytope} \label{proposition:infiniteneedlepolytope}
Let $P_{\polylhs,\polyrhs}$ 
be a polyope and let $(\start,\dir,\tmin,\tmax)$ be the parameters of a finite needle $\line$. 
Also let $\normp \ge 1$ and let $\radius > 0$. 
The system
\begin{align}
\SysneedleP^{\leftrightarrow}_{\normp,\radius} \left[ \begin{array}{c} \start,\dir,\tmin,\tmax \\ \polylhs, \polyrhs \\ \dualvarw \end{array}\right] :  
\left\{ 
\begin{array}{l}
\polyrhs\tr \dualvarwone + \start\tr \dualvarwtwo - \tmin \dualvarwthr + \tmax \dualvarwfou + \radius \dualvarwfiv \le -\epsilon,\\
\polylhs\tr \dualvarwone + \dualvarwtwo = 0, \\
-\dir\tr \dualvarwtwo - \dualvarwthr + \dualvarwfou = 0, \\
\one\tr\dualvarwone + \one\tr \dualvarwtwop + \dualvarwthr+\dualvarwfou+\dualvarwfiv \le 1, \\
\,\,\,\, \dualvarwtwo \le \dualvarwtwop, \\
-\dualvarwtwo \le -\dualvarwtwop,\\
(-\dualvarwtwo,\dualvarwfiv) \in C^*_{\normp},\\
\dualvarwone \in \Re^n_+, \dualvarwtwo \in \Re^3, \dualvarwthr,  \dualvarwfou, \dualvarwfiv \in \Re_+,  \\
\end{array}
\right.
\label{sys:needlepolytope}
\end{align}
in variables $\dualvarw=(\dualvarwone;\dualvarwtwo;\dualvarwtwop;\dualvarwthr,\dualvarwfou,\dualvarwfiv) \in \Re^n \times \Re^3 \times \Re^3 \times \Re^3$ 
ensures that needle $\line$ and polytope $P_{\polylhs,\polyrhs}$ do not intersect when $\epsilon$ is chosen positive. 
\end{restatable}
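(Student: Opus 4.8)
The plan is to mirror the strategy already used for Proposition~\ref{proposition:twofiniteneedles}: start from a primal feasibility system whose solutions are exactly the points lying in both $\line$ and $P_{\polylhs,\polyrhs}$, observe that non-intersection is equivalent to infeasibility of that system, and then invoke the conic alternative theorem (Proposition~2.4.2 in \cite{ben2001lectures}) to obtain a certificate system. That certificate, after a change of variables to absorb an absolute-value term, will be exactly \eqref{sys:needlepolytope}.

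First I would write down the intersection system. Using the description \eqref{def:cylindricalneedle} of $\line$ with a lifted variable, a point $\vachi$ lies in $\line \cap P_{\polylhs,\polyrhs}$ if and only if there exist $\vachi \in \Re^3$, $\vatau \in \Re$, $\varho \in \Re^3$ with $\vachi = \start + \vatau\dir + \varho$, $\tmin \le \vatau \le \tmax$, $\|\varho\|_{\normp} \le \radius$, and $\polylhs\vachi \le \polyrhs$. Eliminating $\vachi$ gives a system in $(\vatau,\varho)$: $\polylhs\varho + \vatau\,\polylhs\dir \le \polyrhs - \polylhs\start$, $\vatau \le \tmax$, $-\vatau \le -\tmin$, $(\varho,\radius) \in C_{\normp}$. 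This is a conic feasibility system of the form ``$\exists z : Bz \preceq_K c$'' once one writes the linear inequalities as membership in a nonnegative orthant and the norm bound as membership in $C_{\normp}$. Non-intersection of $\line$ and $P_{\polylhs,\polyrhs}$ is precisely the statement that this system has no solution.

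Next I would apply the conic alternative theorem: the system above is infeasible if and only if there is a dual vector $(\dualvarwone;\dualvarwthr;\dualvarwfou;(\mu,\dualvarwfiv))$ with $\dualvarwone \ge 0$ in the orthant dual, $\dualvarwthr,\dualvarwfou \ge 0$, $(\mu,\dualvarwfiv) \in C_{\normp}^*$, satisfying the transposed equality constraints $\polylhs\tr\dualvarwone + \mu = $ (coefficient block for $\varho$) $=0$ after identifying $\dualvarwtwo := -\mu$, and $(\polylhs\dir)\tr\dualvarwone + \dualvarwfou - \dualvarwthr = 0$, together with a strict sign on the combined right-hand side: $(\polyrhs - \polylhs\start)\tr\dualvarwone + \tmax\dualvarwfou - \tmin\dualvarwthr + \radius\dualvarwfiv < 0$. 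Distributing $\polylhs\start$ and using $\polylhs\tr\dualvarwone = -\dualvarwtwo$ turns $-(\polylhs\start)\tr\dualvarwone$ into $\start\tr\dualvarwtwo$, which recovers the first inequality of \eqref{sys:needlepolytope}; the second and third equalities come out directly; and the normalization $\one\tr\dualvarwone + \one\tr\dualvarwtwop + \dualvarwthr + \dualvarwfou + \dualvarwfiv \le 1$ with $-\dualvarwtwop \le \dualvarwtwo \le \dualvarwtwop$ is the standard device for bounding the certificate (it plays the role of $\|(\dualvarwone,\dualvarwtwo,\dualvarwthr,\dualvarwfou,\dualvarwfiv)\|_1 \le 1$, with $\dualvarwtwop$ an epigraph variable for $|\dualvarwtwo|$ componentwise) so that a homogeneous certificate can always be scaled into the feasible region. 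Finally, replacing the strict ``$<0$'' by ``$\le -\epsilon$'' with $\epsilon>0$ only shrinks the certificate set, so feasibility of \eqref{sys:needlepolytope} still implies non-intersection; this is exactly the (one-directional) claim in the statement.

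The main obstacle I anticipate is bookkeeping rather than conceptual: one must keep the sign conventions consistent so that $\dualvarwtwo = -\mu$ really is the multiplier block conjugate to $\varho$, and in particular verify that the cone membership emerges as $(-\dualvarwtwo,\dualvarwfiv) \in C_{\normp}^*$ (i.e. $\|{-\dualvarwtwo}\|_{\normq} \le \dualvarwfiv$) and not its negative. A secondary technical point is justifying the $\one\tr(\cdot) \le 1$ normalization: since the alternative system is a homogeneous cone (scaling any certificate by a positive constant preserves all constraints except one can always rescale to satisfy the norm bound), one argues that the system \eqref{sys:needlepolytope} is feasible iff the unnormalized alternative system has a nonzero solution, and then tightens to ``$\le -\epsilon$'' as above. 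Because the proposition only asserts the sufficiency direction (``ensures that $\line$ and $P_{\polylhs,\polyrhs}$ do not intersect''), I do not need the full ``almost feasible'' converse here; it suffices to exhibit, from any feasible $\dualvarw$, a separating/infeasibility certificate for the intersection system, which the chain above provides.
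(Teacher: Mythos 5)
Your proposal is correct and follows essentially the same route as the paper's proof: write the intersection of $\line$ and $P_{\polylhs,\polyrhs}$ as a conic feasibility system, invoke Proposition~2.4.2 of \cite{ben2001lectures} to pass to the dual alternative system, add the $1$-norm normalization (linearized via $\dualvarwtwop$ for the free block $\dualvarwtwo$) using homogeneity of the certificate cone, and relax the strict inequality to $\le -\epsilon$, noting that only the sufficiency direction is claimed. The only difference is presentational --- you eliminate $\vachi$ before dualizing, whereas the paper keeps $\vachi$ together with the equality $\vachi - \vatau\dir - \varho = \start$ and obtains $\dualvarwtwo$ directly as its multiplier --- and the sign-convention care you flag around $\dualvarwtwo$ and $(-\dualvarwtwo,\dualvarwfiv)\in C^*_{\normp}$ is exactly the bookkeeping the paper's derivation resolves.
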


Similar to Proposition~\ref{proposition:twofiniteneedles}, it can be established that, when $\epsilon$ is chosen to be small, System~\eqref{sys:needlepolytope} not only guarantees that there is no intersection between $\line$ and $P_{\polylhs,\polyrhs}$ but also provides a fairly tight representation of this requirement. 

When needles are infinite in length, System~\eqref{sys:needlepolytope} becomes simpler as there is no need to introduce the dual variables 
$\dualvarwthr$ and $\dualvarwfou$.
The resulting system, which we denote by 
$\SysneedleP^{\infty}_{\normp,\radius} \left[ \begin{array}{c} \cdot \end{array}\right]$ and describe explicitly in Corollary~\ref{corollary6.2.1} of \onlsup~\ref{section:os:needle-polytope}, only requires the addition of 
$\nw{\infty}=n_{\polylhs,\polyrhs}+7$ variables $\dualvarw$.
Considering the needle to be infinite is, however, more restrictive as it does not allow the continuation of a short needle to intersect the polyhedral structures to avoid. 


When implementing the above sets of constraints inside of solution algorithms, we found that the simpler set of constraints requiring $P_{\polylhs,\polyrhs}$ and the core of needle $\line$ to not intersect can be successfully utilized.
Mathematically, this can be achieved by removing dual variables $\dualvarwthr$, $\dualvarwfou$, and $\dualvarwfiv$ from $\SysneedleP^{\leftrightarrow}_{\normp,\radius} \left[ \begin{array}{c} \cdot \end{array}\right]$ to create a system we refer to as 
$\SysneedleP^{\times}_{\normp,\radius} \left[ \begin{array}{c} \cdot \end{array}\right]$.
This system requires the addition of only $\nw{\times}=n_{\polylhs,\polyrhs}+6$ variables $\dualvarw$
and is described explicitly in Corollary~\ref{cor:needlecorepoly} of
\onlsup~\ref{section:os:needle-polytope}.
If needle $\line$ has positive radius, this system does not guarantee that $P_{\polylhs,\polyrhs}$ and $\line$ do not intersect. 
We found, however, that setting $\epsilon=\radius$ achieves this goal for the cases in our computational study.


When the needle parameters $\start$, $\dir$, $\tmin$, and $\tmax$ are fixed, the systems 
$\SysneedleP^{\mdlb}_{\normp,\radius} \left[ \begin{array}{c} \cdot \end{array}\right]$ where $\mdlb\in \{\leftrightarrow,$ $\infty, \times\}$
are linear for $\normp=1$ and $\normp=\infty$, and have second-order cone constraints for $\normp=2$.
When the needle parameters are not fixed and are considered to be design variables, then they contain bilinear and nonconvex constraints that multiply needle parameters variables $(\start,\dir,\tmin,\tmax)$ with dual variables $\dualvarw$. 
Such constraints can be handled by \gurobi\xspace as they are special types of nonconvex quadratic constraints. 
The systems 
$\SysneedleP^{\mdlb}_{\normp,\radius} \left[ \begin{array}{c} \cdot \end{array}\right]$
differ in the number of bilinear terms they have. 
Systems with fewer bilinear terms tend to yield simpler models.

\section{Models and methods}
\label{section:modelsmethods}
 
When designing models for the placement of channels, there are at least two modeling approaches for the representation of the needles to use inside of a 3D-printed mask. 

In the first modeling approach, we pre-generate a collection of candidate needles, much larger than what will be used. 
We can ensure that each candidate needle avoids specified regions of interest and intersects exiting planes by eliminating or shortening the needle.
A model can then be created that selects a small subset of these candidate needles that do not intersect with each other and yield a high-quality plan.   
We refer to such models as \textit{fixed needles models}. 

In the second modeling approach, the number of needles to use is fixed but their characteristics (such as core and direction) are decision variables. 
In such models, constraints must be formulated to ensure that needles intersect with exiting planes and to prevent them from intersecting with each other and with identified body structures.
We refer to such models as \textit{free needles models}

An advantage of fixed needles models is that dose computation is straightforward, which is not the case for free needles models. 
However, free needles models offer greater flexibility in needle positioning, promising the potential for better treatment plans.
In Section~\ref{section:modelsmethods:fixed}, we describe a fixed needles model.  
In Section~\ref{section:modelsmethods:free}, we follow a free needles approach and present a three-phase approach where the first identifies promising dwell positions, the second optimizes needle placements and directions to best cover these positions with few needles, and the third computes a treatment plan. 
For the second phase, we propose two different models.


\subsection{Fixed needles model} 
\label{section:modelsmethods:fixed}

In this section, we present a fixed needles model where a collection of candidate needles are pre-generated. 
We assume that tumor and organs-at-risk (OARs) have been voxelized (with voxel sets $\vox^\tumor$ and $\vox^\oars$, respectively) and that a prescription dose level $\prescr_i$ has been assigned to each voxel $i \in \vox := \vox^\tumor \cup \vox^\oars$. 
This dose level depends on whether the voxel is within the tumor or within an OAR. 
To measure the quality of a plan, we use continuous piecewise-linear convex functions $f_i^\tumor(\cdot)$ (\textit{resp.} $f_i^\oars(\cdot)$) that, for each for tumor voxel $i$ (\textit{resp.} for each OAR voxel $i$), penalize the discrepancy between the dose delivered at the voxel and the prescription.
Such penalties are commonly used in inverse planning; see \citep{holm2012impact,alterovitz2006optimization,lessard2001inverse}.


To generate the candidate needle positions the model requires, we use the following guiding principles.  
First, the needles must be in close proximity to the target area. 
If the dwell locations are too far away, long dwell times will be required to deliver an adequate dose to the tumor, resulting in a significant dose being deposited in other parts of the patient's body. 
Second, the candidate needles must not intersect with the patient's body. 
For example, a needle that crosses the bridge of the patient's nose should not be used.
Third, the candidate needles must intersect exiting planes.
In Section~\ref{section:results:fixed}, we present a specific procedure for generating these candidate needles that satisfy the above principles and gives rise to candidate needles that are orthogonal to one of the exiting planes. 
Restricting needle directions in this way, which is not compulsory, has the advantage of limiting the number of candidate needles and therefore the size of the ensuing model.  




We denote the collection of candidate needles by \sn.
For each candidate needle $\line$, we generate a set of dwell locations $\dwell_{\line}$ by considering all points equally spaced by $1$mm from the endpoint of $\line$. 
We then determine the collection $\ints$ of pairs of needles that have an intersection. 
We introduce parameters $\Tmax$ to represent the maximum dwell time that can be applied to any single dwell position and $\Nmax$ to denote the maximum number of needles that can be used. 
Given any candidate needle $\line$ and any of its dwell location $k$, we use the linear-source formula of the American Association of Physicists in Medicine (AAPM) task group 186 (TG-43 186)~\citep{beaulieu2012report} to determine the dose $\doselinear_{i,k,\line}$ that is delivered per unit of time from $k$ to voxel $i$.

In addition to the data above,  
the model uses three sets of variables. 
Binary variable $\pickneedle_{\line}$ indicates if needle $\line$ is used or not. 
Continuous variable $\dwtimes_{k,\line}$ represents the dwell time applied to position $k$ of needle $\line$ and continuous variable $\dosevar_i$ represents the total dose delivered to voxel $i$.  
We write
\begin{subequations}
\label{model:fixedneedle}
\begin{alignat}{3} 
	 \min \qquad &  \sum_{i \in \vox^\tumor} f_i^\tumor(\dosevar_i -\prescr_i) + \sum_{i \in \vox^\oars} f_i^\oars(\dosevar_i-\prescr_i)  \label{model:fixedneedle:objective} \\
	\text{s.t.} \qquad &   \dosevar_{i} = \sum_{l \in \sn} \sum_{k \in \dwell_{\line}} {\doselinear_{i,k,\line} \dwtimes_{k,\line}}, &\quad& \forall i \in \vox, \label{model:fixedneedle:doseconstraint} \\ 
	& \dwtimes_{k,\line}\leq \Tmax \pickneedle_{\line}, && \forall k \in \dwell_{\line},~\forall \ell \in  \sn, \label{model:fixedneedle:timetobinary} \\
	&  \sum_{\line \in \sn} \pickneedle_{\line} \leq  \Nmax,\label{model:fixedneedle:totalbin} \\
	&   \pickneedle_{\line_1}+\pickneedle_{\line_2}\leq 1, && \forall (\line_1,\line_2) \in \ints, \label{model:fixedneedle:intersect} 
\end{alignat}
\end{subequations}
with the requirements that 
($i$) $\dwtimes_{k,\line} \geq 0$, $\forall k \in \dwell_{\line},~\forall \line \in \sn$, 
($ii$) $\dosevar_i \geq 0$, $\forall i \in \vox$, and
($iii$) $\pickneedle_{\line} \in \{0,1\}$, $\forall \line \in \sn$.
Objective \eqref{model:fixedneedle:objective} minimizes the total penalty of deviations between dose delivered and prescription across all voxels. 
Constraint \eqref{model:fixedneedle:doseconstraint} computes the dose delivered to voxel $i$ from all activated dwell locations on all needles assuming dose accumulates linearly. 
Constraint~\eqref{model:fixedneedle:timetobinary} guarantees that only dwell times from selected needles are activated.  
Constraint \eqref{model:fixedneedle:totalbin} limits the number of needles used.
Constraint \eqref{model:fixedneedle:intersect} enforces that, for each pair of intersecting needles, at most one is selected. 

Model~\eqref{model:fixedneedle} is a mixed integer nonlinear program (MINLP) because its objective is nonlinear. 
For piecewise-linear penalty functions $f_i^\tumor(\cdot)$ and $f_i^\oars(\cdot)$, which are common in applications,
traditional linearization techniques can be used to rewrite the model as an MIP; see \citep{lin2013review}. 
Although simple in structure, \eqref{model:fixedneedle} can be time-consuming to solve as we demonstrate in Section~\ref{section:results:comparison}. 
Similar models proposed in~\citep{holm2016heuristics,wang2021simultaneous} suffer from the same issue. 

\subsection{Free needles model}
\label{section:modelsmethods:free}

In this section, we present an alternative method for selecting needles locations in  which candidate needle locations are not pre-generated.
This method consists of three phases. 
First, we identify a set of candidate dwell positions by solving a linear program (LP). 
Second, we construct a small set of needles that best cover these candidate dwell locations.  
For this phase, we propose two approaches.  
The first generates all needles passing through pairs of candidate dwell positions and solves an MIP using commercial software to identify a small subset that covers the largest amount of dwell times.
The second uses an MINLP to minimize the distance of candidate dwell points to their assigned needles and solves this model with an alternating heuristic.  
Once needle positions are known, 
optimal dwell times are obtained by solving an LP, which we refer to as the third phase.

\subsubsection{Phase 1: Finding candidate dwell positions.}
\label{section:modelsmethods:free:phase1}

In this phase, we determine a collection of dwell positions for a good treatment plan by generating a vast pool of prospective dwell points and by using an optimization model to select the best ones among them. 
As dwell points near the tumor are preferred, we only generate prospective points close to the tumor regions and ensure that they are outside the body structures that need to be avoided. 
In Section~\ref{section:results:free:phase1}, we present a specific procedure for generating these prospective dwell points that satisfy the above principles. 

For each prospective dwell position $k \in \dwell$, we compute $\doselinear^{\star}_{i,k}$, the dose delivered from $k$ to voxel $i \in \vox$ per unit time, using the point-source formula by AAPM task group 186~\citep{beaulieu2012report}. 
We use the point-source formula instead of the linear-source formula as the latter requires knowledge of the needle directions.
The model is similar to \eqref{model:fixedneedle} and uses two sets of variables.
Continuous variable $\dwtimes_{k}$ represents the dwell time at position $k\in \dwell$ while variable $\dosevar_i$ represents the dose delivered to voxel $i$. 
We write
\begin{subequations}
\label{model:freeneedle:phase1}
\begin{alignat}{3} 
	 \min \qquad & \sum_{i \in \vox^\tumor} f_i^\tumor(\dosevar_i-\prescr_i) + \sum_{i \in \vox^\oars} f_i^\oars(\dosevar_i-\prescr_i)  \label{model:freeneedle:phase1:objective} \\
	\text{s.t.} \qquad &  \dosevar_{i} =\sum_{k \in \dwell} {\doselinear^{\star}_{i,k} \dwtimes_{k}}, &\quad& \forall i \in \vox \label{model:freeneedle:phase1:doseconstraint}, \\ 
	& \dwtimes_{k}\leq \Tmax, && \forall k \in \dwell \label{model:freeneedle:phase1:timetobinary}, 
\end{alignat}
\end{subequations}
with the requirements that 
($i$) $\dwtimes_{k} \geq 0$, $\forall k \in \dwell$, and 
($ii$) $\dosevar_i \geq 0$, $\forall i \in \vox$.

In our computational experience, many prospective dwell positions are activated (\ie $\dwtimes^*_k>0$), often for only brief time periods. 
To address this issue, we select the $\Imax$ positions with the largest dwell times, yielding a smaller collection $\points$ of \textit{candidate} dwell positions. 
Controlling the size of $\points$ is critical to keep the solution of models efficient, but $\Imax$ must be large enough to ensure that high-quality treatment plans can be obtained.

\subsubsection{Phase 2: Finding needles associated with candidate dwell positions.}
\label{section:modelsmethods:free:phase2}

Given a set $\points$ of candidate dwell positions, the second phase aims to determine the best way of ``covering'' them using a small number of needles. 
We investigate two approaches.

\paragraph{Method 1: Maximum coverage model.}

A simple approach is to generate all needles that pass through pairs of candidate dwell points, assuming that these needles do not intersect structures of interest and intersect with an exiting plane. 
We assume that a needle covers a dwell location only when it passes directly through it. 
Then, we select a limited number of these needles that maximizes the sum of dwell times associated with the dwell points they cover.  
The resulting model, which we call \textit{maximum coverage} model, 
is an MIP that provides a baseline for comparison. 


To formulate the model, we consider a set $\cand$ of candidate needles that cover pairs of dwell points $i \in \points$, of which we will select no more than $\Nmax$. 
We denote the sum of dwell times of dwell positions covered by needle $g$ by $\Up_g$. 
We denote by $\oncand_g$ the set of dwell points covered by candidate needle $g$ and let $\ints$ be the set of pairs of needles that intersect. 
In addition to this data, 
we introduce binary variable $\ndlvar_g$ to indicate whether needle $g$ is selected. 
We write
\begin{subequations}
\label{model:freeneedle:phase2:coverage}
\begin{alignat}{3}
&& \max \quad & \sum_{g \in \cand} \Up_g\ndlvar_g \label{model:freeneedle:phase2:coverage:objective} \\
(\MC) \qquad && \text{s.t.} \quad &\sum_{g\in \cand \,:\, i\in \oncand_g} \ndlvar_g\leq1, &\qquad& \forall i \in \points, \label{model:freeneedle:phase2:coverage:con1}\\
&&& \ndlvar_g+\ndlvar_{g'}\leq 1, && \forall (g,g') \in \ints, \label{model:freeneedle:phase2:coverage:con2}\\
&&& \sum_{g \in \cand} \ndlvar_g \leq \Nmax ,\label{model:freeneedle:phase2:coverage:con22}
\end{alignat}
\end{subequations}
with the requirement that $\ndlvar_g \in\{0,1\}$, $\forall g \in \cand$.
The objective function~\eqref{model:freeneedle:phase2:coverage:objective} maximizes the total amount of dwell times covered.  
Constraint~\eqref{model:freeneedle:phase2:coverage:con1} ensures that every dwell position is covered by at most one needle as covering a dwell position with two needles would imply that needles cross. 
Constraint~\eqref{model:freeneedle:phase2:coverage:con2} ensures that at most one of any pair of intersecting needles is chosen in the treatment plan.
Constraint~\eqref{model:freeneedle:phase2:coverage:con22} limits the total number of needles selected to be no more than $\Nmax$.

This formulation has of the order of $|\points|^2$ variables since the needles we generate typically contain two dwell locations and since we associate a variable to each candidate needle in the plan. 
While the requirement that needles pass exactly through two dwell positions may seem restrictive, it is not necessarily so when the dwell locations are selected from a ``grid'' of candidate locations, which is the case in our computation.
The model can also be used when needles are not passing exactly through dwell positions but are assigned to cover all sufficiently close dwell locations.  
In this case, the needle best fitting the ``close" dwell positions would be created, potentially not passing through any of them.
Although it is intuitive that the use of such needles could result in better solutions, our pilot computational experiments concluded that they do not appear to translate into better treatment plans. 
We attribute this to the fact that, by trying to cover multiple dwell positions approximately, the model end up picking needles that do not contain any of the preferred dwell positions.

\paragraph{Method 2: Clustering approach.}

Another method, which we refer to as \textit{clustering} approach,  entails the heuristic solution of an MINLP we present next.
Given a collection $\points$ of candidate dwell points $\dwpt^{i}$, this model computes the parameters of a set $\needles$ of non-intersecting needles, each originating from one of the exiting planes $\dn_e$ for $e \in \en$, that do not intersect polytopes $P_{\polylhs^m,\polyrhs^m}$ of interest for $m \in \polyhyd$, with the property that the sum of squared distances from each dwell point in $\points$ to its closest point on a needle is minimized. 
We refer to the subset of finite needles of $\needles$ as $\needles^{\leftrightarrow}$.

The model uses several variables. 
Continuous variables $(\start^j,\dir^j,\tmin^j,\tmax^j)$ represent the parameters of each needle $j \in \needles$.
Continuous variables $\dualvar^{j,j'}$ are used to impose that needles $j$ and $j'$ do not intersect whereas continuous variables $\dualvarw^{j,m}$ are used to impose that needle $j$ does not intersect polyhedral structure of interest $m$. 
Continuous variables $\stept_{i,j}$ describe the step size to apply from $\start^j$ to obtain the projection of dwell point $i$ onto needle $j$.
Finally, binary variable $\asgt_{i,j}$ indicates whether dwell point $i$ is closest to needle $j$.
We use $\mdla$ to indicate which model $\Systwoneedles^{\mdla}_{\normp,\radius}\left[\cdot\right]$ of  Section~\ref{section:modelfeatures} is chosen to impose that needles do not intersect, and use $\mdlb$ to denote which model $\SysneedleP^{\mdlb}_{\normp,\radius} \left[ \cdot \right]$ is chosen to impose that a needle does not intersect with a structure of interest. 
We restrict parameters $(\mdla,\mdlb)$ to $\{(\leftrightarrow,\leftrightarrow), (\infty,\infty), (\times,\infty),  (=,\infty), (\times,\times),  (=,\times)\}$
 and $\normp \in \{1,2,\infty\}$.
Further, we impose that $\needles^{\leftrightarrow}=\needles$ when $\mdla=\leftrightarrow$ or $\mdlb=\leftrightarrow$ and $\needles^{\leftrightarrow}=\emptyset$ otherwise.
We write 
\begin{subequations}
\label{model:freeneedle:phase2:clustering}
\begin{alignat}{5} 
&&\min \quad & \sum_{ i\in \points}\sum_{ j\in \needles}{||\dwpt^{i}-\start^{j}-\stept_{i,j}\dir^{j}||_2^2} \,\asgt_{i,j}\label{model:freeneedle:phase2:clustering:obj}\\ 
&&\text{s.t.} \quad & \sum_{j\in \needles} \asgt_{i,j}=1, &\quad& \forall i\in \points \label{model:freeneedle:phase2:clustering:con1},\\
&&& \sum_{i\in \points} \asgt_{i,j}\geq2, &&  \forall j\in \needles \label{model:freeneedle:phase2:clustering:con11},\\
\Mmdl \qquad &&& {||\dir^j||}^2 = 1, && \forall j \in  \needles\label{model:freeneedle:phase2:clustering:con66}, \\
&&& \Systwoneedles_{\normp,\radius}^{\mdla} \left[\begin{array}{c}\start^{j'},\dir^{j'},\tmin^{j'},\tmax^{j'},\\
 \start^j,\dir^j,\tmin^j,\tmax^j \\ \dualvar^{j,{j'}} \end{array}\right] , && \forall j',j \in \needles:~j'\neq j\label{model:freeneedle:phase2:clustering:con2},\\
&&& \SysneedleP_{\normp,\radius}^{\mdlb} \left[ \begin{array}{c} \start^j,\dir^j,\tmin^j,\tmax^j \\ \polylhs^m, \polyrhs^m \\ \dualvarw^{j,m} \end{array}\right],&& \forall j \in \needles ,~\forall  m \in \polyhyd\label{model:freeneedle:phase2:clustering:con4},\\
&&& \tmin^j\leq \stept_{i,j}\leq \tmax^j, && \forall i \in \points,~\forall j \in \needles^{\leftrightarrow}\label{model:freeneedle:phase2:clustering:con772}, \\
&&&  \start^j\in \bigcup_{e \in \en}~\dn_e, && \forall j \in \needles \label{model:freeneedle:phase2:clustering:con7771}, 
\end{alignat}
\end{subequations}
with the requirements that 
($i$) $\dir^j\in \Re^3$, $\forall j \in \needles$, 
($ii$) $(\tmin^j,\tmax^j) \in \Re^2$,  $\forall j \in \needles^{\leftrightarrow}$
($iii$) $\dualvar^{j,j'} \in \Re^{\nv{\mdla}}$, $\forall j' \neq j \in \needles$,
($iv$) $\dualvarw^{j,m} \in \Re^{\nw{\mdlb}}$, $\forall j \in \needles ,~\forall  m \in \polyhyd$, and 
($v$) $\stept_{i,j}\in \Re$ and $\asgt_{i,j} \in \{0,1\}$, $\forall i \in \points,~\forall j \in \needles$.
Objective function \eqref{model:freeneedle:phase2:clustering:obj} minimizes the total squared distance between the candidate dwell locations and their projections onto their assigned needles. 
Constraint \eqref{model:freeneedle:phase2:clustering:con1}  requires that each dwell position is assigned to exactly one needle. 
Constraint \eqref{model:freeneedle:phase2:clustering:con11}  ensures that each needle is assigned at least two dwell positions so that its associated direction is meaningful. 
Constraint \eqref{model:freeneedle:phase2:clustering:con66} normalizes the direction of each needle.
Constraint \eqref{model:freeneedle:phase2:clustering:con2} imposes that pairs of distinct needles do not intersect.
Constraint~\eqref{model:freeneedle:phase2:clustering:con4} prevents needles from intersecting with the body structures to avoid. 
Constraint~\eqref{model:freeneedle:phase2:clustering:con772} restricts the projection points on a finite needle to be between its extremities.
Constraint~\eqref{model:freeneedle:phase2:clustering:con7771} restricts the starting point of each needle to belong to one of the exiting planes.

Model $\Mmdl$ can be viewed as a clustering model that creates groups of candidate dwell positions, each associated with a needle.  
Unlike traditional clustering models, $\Mmdl$ does not assign a centroid to a cluster but a straight line (needle) that best fits the collection of its points. 
Clustering models are computationally demanding to solve globally and heuristics  are typically favored. 
We thus propose a heuristic for $\Mmdl$, 
similar to the k-means algorithm of \cite{hartigan1979algorithm},
that  alternates between two steps, each solving a different restriction of the model until convergence is achieved. 
The first assigns dwell positions to needles whose characteristics are fixed, with the objective of minimizing squared distances. 
The second calculates best parameters for the needles based on the specific set of dwell points assigned to them. 
It thus allows needles to more accurately match the dwell points in their clusters.
We give details next.

\subparagraph{Step 1 (Assigning dwell positions to needles).}
\label{step1}

In this step, we are given a collection of nonintersecting needles with known parameters $(\fix{\start}^j,\fix{\dir}^j,\fix{\tmin}^j,\fix{\tmax}^j)$ for $j \in \needles$ 
that do not intersect with the body structures we wish to avoid but intersect with exiting planes.
Our goal is to assign each dwell position to one of these needles so as to minimize the total squared distance between dwell points and their assigned needles. 
This step is performed by solving an optimization problem over the variables $\asgt_{i,j}$ and constraints \eqref{model:freeneedle:phase2:clustering:con1} and \eqref{model:freeneedle:phase2:clustering:con11}. 
Because parameters $(\fix{\start}^j,\fix{\dir}^j,\fix{\tmin}^j,\fix{\tmax}^j)$ for $j \in \needles$ are fixed, the objective function \eqref{model:freeneedle:phase2:clustering:obj} reduces to $\sum_{ i\in \points}\sum_{ j\in \needles}{\Dis_{i,j}^2} \asgt_{i,j}$ where 
$\Dis_{i,j}^2= ||\dwpt^i-\fix{\start}^j-\lambda^* \fix{\dir}^j||_2^2\label{eu_eqn2}$,
$\lambda^*=\max\{\fix{\tmin}^j,\min\{\fix{\tmax}^j,\vatau^* \}\}$, and   $\vatau^*=\frac{(\dwpt^i-\start^j) \tr \dir^j}{\mynorm{\dir^j}_2^2}$.
This expression captures that the projection is either at one of the endpoints of the needle, or corresponds to what the projection would have been, $\fix{\start}^j+ \vatau^* \fix{\dir}^j$, if the needle was infinite.
When the needle is infinite, it holds that $\lambda^*=\vatau^*$. 
Without \eqref{model:freeneedle:phase2:clustering:con11}, this model could be solved with a greedy heuristic. 
With \eqref{model:freeneedle:phase2:clustering:con11}, this model is a transportation problem where flow is transported between points $\points$ and needles $\needles$. 
Transportation models admit a variety of efficient solution algorithms.
They are also easy to solve using commercial solvers.

\subparagraph{Step 2 (Finding needle directions).}
\label{step2} 

In this step, we assume an assignment where each dwell position is matched to exactly one needle, and each needle has at least two dwell positions.
Thus, the dwell positions are organized into clusters of size at least two. 
The goal is to choose the parameters ($\start^j$,$\dir^j$,$\tmin^j$,$\tmax^j$) of each needle/cluster $j \in \needles$ to minimize total squared distance to each of its assigned dwell points.   
This corresponds to solving Model $\Mmdl$ in which variables $\asgt_{i,j}$ are fixed. 
This model has objective \eqref{model:freeneedle:phase2:clustering:obj} and constraints 
\eqref{model:freeneedle:phase2:clustering:con66}, 
\eqref{model:freeneedle:phase2:clustering:con2}, 
\eqref{model:freeneedle:phase2:clustering:con4},
\eqref{model:freeneedle:phase2:clustering:con772}, and
\eqref{model:freeneedle:phase2:clustering:con7771}.

Because of the requirement that needles do not intersect, Model $\Mmdl$ does not decompose into a collection of independent smaller models for each needle and, ideally, should be solved as a monolithic optimization problem.   
This approach is not tractable with current commercial solvers.
To overcome this difficulty, we use a greedy approach that determines the characteristics of each needle $j \in \needles$,
one at a time, after fixing the parameters of those considered previously.
As the complexity of these models increases with $j$, we also apply an alternating heuristic to solve them. 
This heuristic considers three restrictions -- in which the variables being optimized are ($i$) $\dir^j$, ($ii$) $\start^j$ and $\stept_{i,j}$, and ($iii$) $\tmin^j$, $\tmax^j$, and $\stept_{i,j}$, respectively -- in sequence, as  detailed in \onlsup~\ref{section:os:clusteringalgo}.

\subsubsection{Phase 3: Deriving treatment plans.}
\label{section:modelsmethods:free:phase3}

Once needle positions have been determined, dwell locations spaced apart by $1$mm -- which is the spacing used in the clinically available treatment
planning software -- are created on each needle. 
A treatment plan is then generated by solving \eqref{model:freeneedle:phase1} with the main difference that dose parameters $\doselinear_{i,k,l}$ are now computed using the line-source formula of the AAPM task group 186~\citep{beaulieu2012report} since the orientation of the source at each dwell location is now known from the needle direction.

\section{Experimental procedures}
\label{section:procedures}

To evaluate the effectiveness of the approaches presented above, we study the case of the patient with skin cancer on the nose shown in Figure~\ref{figure:face}.  
Needles for this procedure have radius $\radius=1.5$mm. 
To ensure ample clearance in the solution, we increase the radius to $\radius=1.55$mm.
There are four basic regions of interest: ($i$) the tumor areas of the nose, ($ii$) the right eye lens, ($iii$) the left eye lens, and ($iv$) the skin.
This latter OAR is shallow and extends only a few millimeters below the surface.
Radiation oncologists prefer treatment plans where similar doses are delivered on the surface area (\textit{resp.}, on the boundary) of the tumor. 
This is complicated, in this case, by the fact that OARs and tumor regions intersect, and by the fact that the geometry of the nose may lead to dose differences between its left and right side. 
For this reason, the basic structures are divided further. 
Specifically, a new OAR is created by removing from the skin OAR  all areas that also belong to the tumor.
Further, the left and right tumor areas are divided into their surface and boundary components.

The resulting structures are depicted in Figure~\ref{figure:dvhstructures}. 
They include OARs listed as left/right eye lens (LE/RE),  skin excluding tumor regions (SW), skin including tumor regions (ST), tumor regions defined as tumor boundary on the left/right side of the nose (LB/RB), and surface tumor on the left/right side of the nose (LS/RS).

\begin{figure}[!htb]
\centering
\includegraphics[scale=0.6]{ 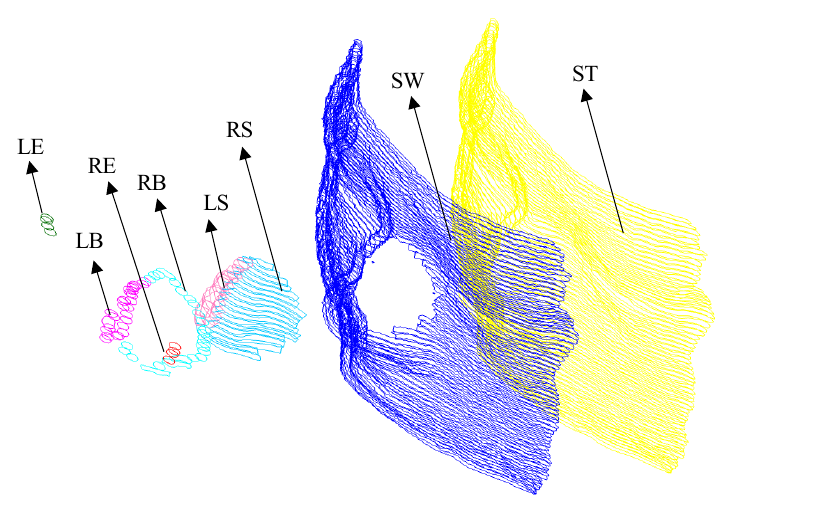}
\caption{3D representation of body structures included in DVHs computations}
\label{figure:dvhstructures}
\end{figure}

This patient was treated using a Freiburg flap applicator with six needles.
The clinical plan uses the active dwell positions represented by spheres in Figure~\ref{figure:needle:clinic}. 
We use the clinical dwell times to compute dose volume histograms (DVHs) and associated dosimetric indices.
To this end, we voxelize the basic structures using voxel with side length of $1.5$mm, resulting in $21219$ total voxels, of which $2140$ correspond to the tumor and $19079$ do not. 
Resulting DVHs are shown in Figure~\ref{figure:dvh:clinic}.


\begin{figure}[!htb]
\begin{subfigure}{.5\textwidth}
\centering
\includegraphics[scale=0.5]{ 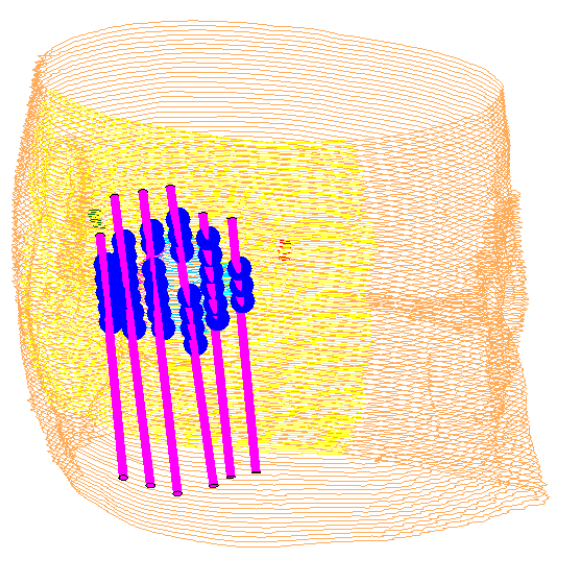}
\captionsetup{justification=centering,font=scriptsize}
\caption{Needle configuration}
\label{figure:needle:clinic}
\end{subfigure}
\hfill
\begin{subfigure}{.5\textwidth}
\centering
\includegraphics[scale=0.5]{ 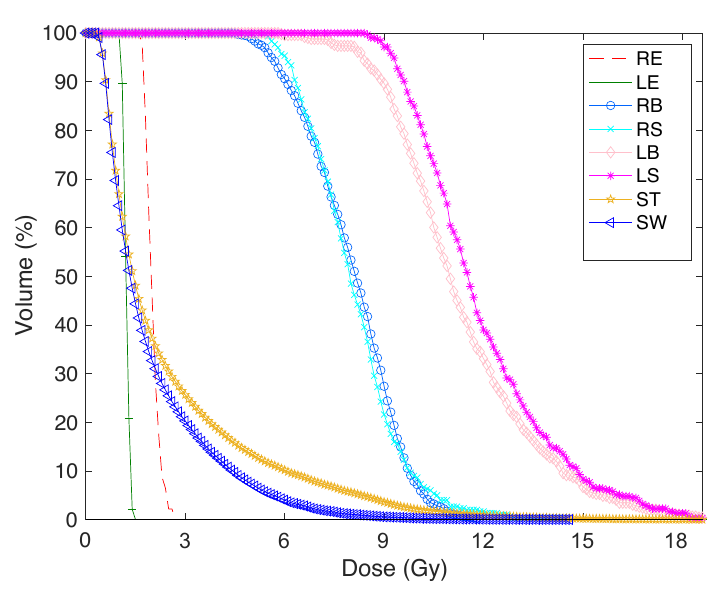}
\captionsetup{justification=centering,font=scriptsize}
\caption{DVHs}
\label{figure:dvh:clinic}
\end{subfigure}
\caption{Clinical results}
\label{figure:clinic}
\end{figure}  
Figure~\ref{figure:dvh:clinic} shows that doses among the various body structures are inhomogeneous. 
Specifically, the left side of the tumor surface receives higher doses compared to its right side, and there is a noticeable difference in the amount of radiation received by the boundary voxels on the left and right side of the nose. 
Further, the skin tissues exhibit long tails in their DVHs and most voxels on the left side of the tumor are exposed to a high dose.

In the following section, we will compare the characteristics of this clinical plan with those obtained with the approaches we propose.
To make the comparison easier, we normalize the treatment plans by uniformly scaling their dwell times so that their $RB_{V_{100}}$ are approximately the same. 
In this process, $RB_{V_{100}}$ describes the volume of target tissue $RB$ that receives 100\% of the prescribed radiation dose. 
The maximum normalization error is $0.001$\%.
This normalization error is calculated as $\sfrac{\left|RB_{V_{100}}^1-RB_{V_{100}}^2\right|}{\max\left(RB_{V_{100}}^1,RB_{V_{100}}^2\right)}*100$, where $RB_{V_{100}}^1$ and $RB_{V_{100}}^2$ are the dosimetric values, after normalization, for the two treatment plans 
compared.

In our models, we consider the five exiting planes shown in Figure~\ref{figure:exiting}.
To avoid negative bounds on some of the variables, which  can affect the quality of relaxations of bilinear terms, we translate all of the body structures by the same vector, so that they all belong to the positive orthant.  
Then, all needle reference points can be chosen among the faces of the box $[1,173] \times [1,60] \times [95,200]$. 

For all optimization models that require dose computation, we use piecewise-linear penalty function $f_i^\tumor(s) = \max\{-5000s,0,5000(s-3) \}$ with $\prescr_i=6$ for each voxel $i$ in a tumor region.
We use penalty function $f_i^\oars(s) = \max \{0,5000(s-2) \}$ with $\prescr_i=0$ for each voxel $i$ not in a tumor region. 
We also set the parameter $\Tmax=10$s.

\section{Results}
\label{section:results}

\subsection{Fixed needles model}
\label{section:results:fixed}

We evaluate Model~\eqref{model:fixedneedle} on the case described in Section~\ref{section:procedures}. 
To generate needles close to the tumor,
we perform a voxelization of the boundary of the nose. 
We use voxels with side lengths of $4.5$mm. 
We then shift the centers of these voxels away from the face by $3$mm, $6$mm, and $9$mm to produce a set of points that we will use as prospective endpoints for the needles. 
We discard all points that are inside of the patient's body. 
From each of the remaining ones, we create a half-line to each of the exiting planes, using the normal vectors to exiting planes as directions. 
We consider the point where the half-line intersects its corresponding exiting plane as the starting end of the needle. 
Finally, we eliminate all needles that intersect the patient's body using system $\SysneedleP^{\leftrightarrow}_{2,\radius}[\cdot]$ for each structure of interest.
We determine whether pairs of needles intersect using $\Systwoneedles^{\leftrightarrow}_{2,\radius}[\cdot]$.

For the case under study, the above procedure generates a collection of $138$ endpoints that are then considered in conjunction with the five exiting planes of Figure~\ref{figure:exiting}.
After removing needles that intersects with the patient body, as modeled by the polytope with $142$ planes represented in Figure~\ref{figure:convexhull:142points} of \onlsup~\ref{section:os:faceconvexhull}, 
we are left with a collection $\sn$ with 100 candidate needles. 

To create the collection of voxels required by Model~\eqref{model:fixedneedle}, we voxelize all the structures other than the skin using voxels with side length of $1.5$mm.
For the skin, we use voxels with side length of $4.5$mm 
instead as using the finer voxelization would have produced $19067$ voxels, which is much larger than the $2235$ voxels of other regions and yields untractable models.  
Using this mixed voxelization yields a total of $3681$ voxels, of which $2140$ correspond to tumor and $1541$ do not.

We solve Model~\eqref{model:fixedneedle} with \gurobi\xspace 9.0.2 with parameter $\Nmax=6$. .
The resulting configuration of needles and the corresponding DVHs are presented in Figures~\ref{figure:needle:fixed} and~\ref{figure:dvh:fixed}, respectively.
In Figure~\ref{figure:needle:fixed}, the blue spheres represent the dwell positions that are activated through the solution of the model. 
Figure~\ref{figure:dvh:fixed} shows that the treatment plan generated is more homogeneous than that produced using a Freiburg flap, as the difference between the doses delivered to the right and left sides of the nose is reduced. 
Further, this plan 
ensures that both surface and boundary structures of the nose's right and left sides receive more similar doses compared to the clinical plan presented in Figure~\ref{figure:clinic}.

\begin{figure}[!htb]
\begin{subfigure}{.5\textwidth}
\centering
\includegraphics[scale=0.5]{ 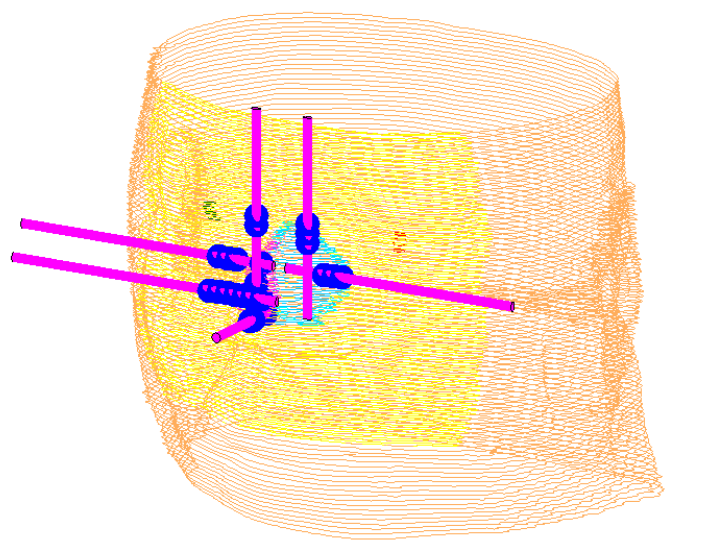}
\vskip 0.3 cm
\captionsetup{justification=centering,font=scriptsize}
\caption{Needle configuration}
\label{figure:needle:fixed}
\end{subfigure}
\begin{subfigure}{.5\textwidth}
\centering
\includegraphics[scale=0.5]{ 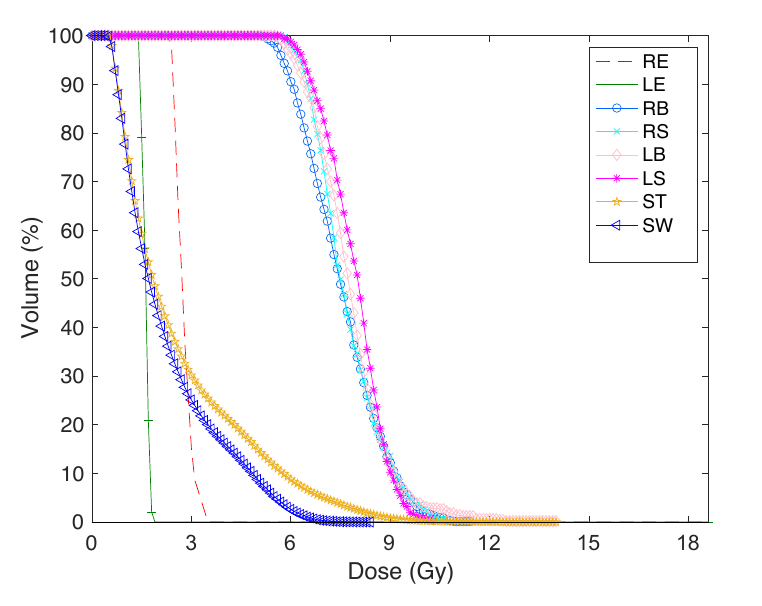}
\captionsetup{justification=centering,font=scriptsize}
\caption{DVHs}
\label{figure:dvh:fixed}
\end{subfigure}
\caption{Fixed needles model results}
\label{figure:fixed}
\end{figure}

\subsection{Free needles models}
\label{section:results:free}

\subsubsection{Phase 1: Finding candidate dwell positions.}
\label{section:results:free:phase1}

We evaluate Model~\eqref{model:freeneedle:phase1} for the case of Section~\ref{section:procedures}.
To generate the prospective dwell positions, we voxelize the nose using voxels with $1.5$mm side-length.
After voxelizing the boundary of the nose, we compute the centers of all voxels and shift them away from the face by $3$mm, $6$mm, and $9$mm as the needles that will cover these points are $3$mm in diameter. 
We eliminate locations inside body structures to produce a set $\dwell$ of prospective dwell points. 
This finer voxelization is used because we are generating dwell locations and not only needle endpoints (as in Section~\ref{section:results:fixed}) and because we can solve LP \eqref{model:freeneedle:phase1} even for voxelizations with side-length 1.5mm. 
The set $\points$ of candidate dwell points required for the subsequent models is formed by choosing the $\Imax=50$ positions with largest dwell times. 
Model~\eqref{model:freeneedle:phase1} is solved using \gurobi\xspace 9.0.2. 
Figure~\ref{figure:promisingdwell}
 provides a graphical representation of the solution. 

\begin{figure}[!htb]
\centering
\vskip -3cm
\includegraphics[scale=0.4]{ 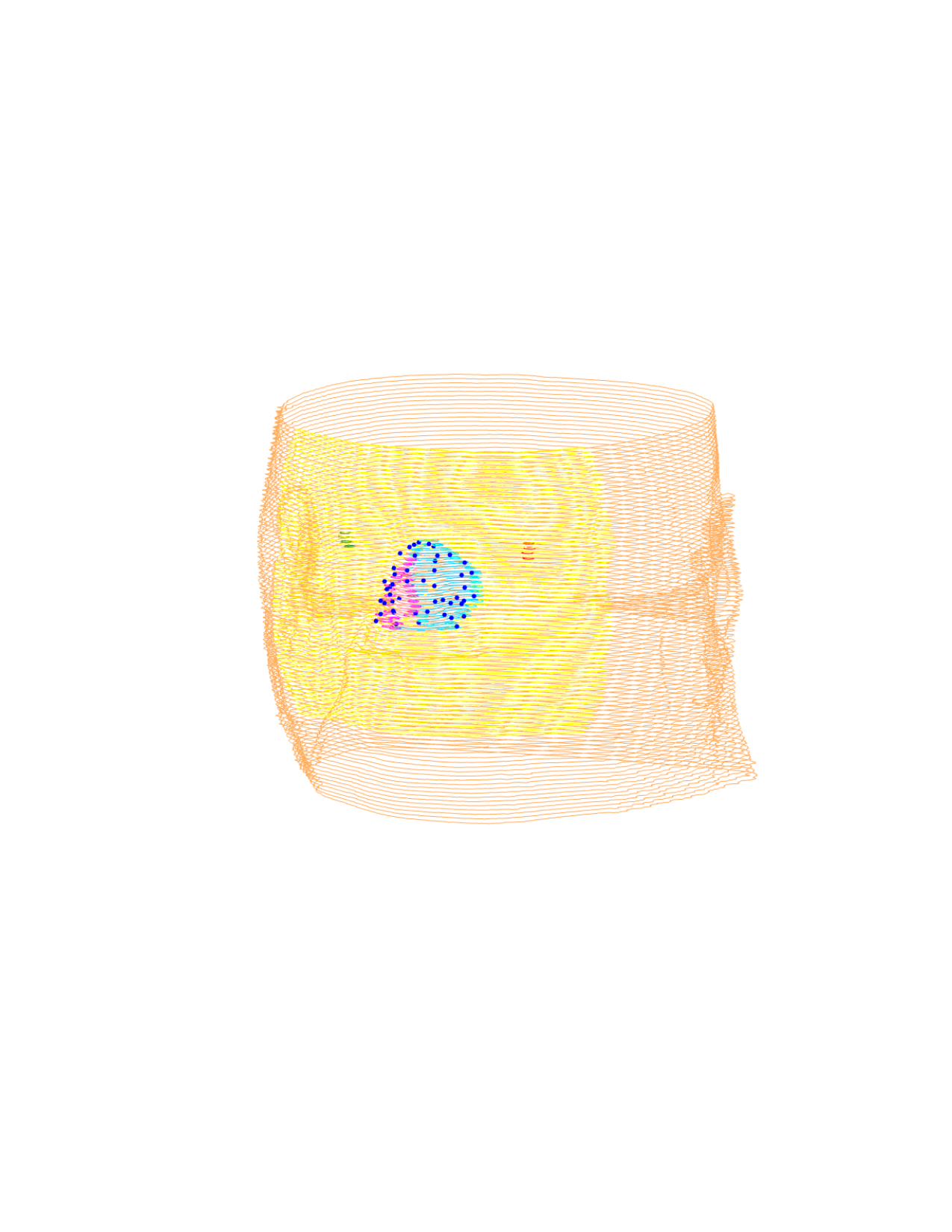}
\vskip -3cm
\captionsetup{justification=centering}
\caption{3D representation of candidate dwell positions}
\label{figure:promisingdwell}
\end{figure}

\subsubsection{Phase 2: Finding needles associated with candidate dwell positions.}
\label{section:results:free:phase2}


\paragraph{Method 1: Maximum coverage model.}

We apply Model~\eqref{model:freeneedle:phase2:coverage} with the candidate dwell positions obtained above. 
Associated with each pair of dwell positions is a single straight line.  
If this line intersects two exiting planes, we may create up to three needles; one originating from the first exiting plane, another originating from the second exiting plane, and a last one having one endpoint in each of the exiting planes. 
Allowing for shorter needles to be created help make the later process of selecting needles that do not intersect each other and body structures easier. 
We use the polytope with $142$ planes described in Figure~\ref{figure:convexhull:142points} of \onlsup~\ref{section:os:faceconvexhull} as the representation of the body structure not to be crossed by needles. 
We use this representation to remove needles that intersect with the patient body, as discussed in Section~\ref{section:modelfeatures:needle-polytope}. 
We are left with a collection of 333 candidate needles.
Model~\eqref{model:freeneedle:phase2:coverage} is solved using \gurobi\xspace 9.0.2. 
The generated configuration of needles is represented in Figure~\ref{figure:needle:free:coverage}.


\begin{figure}
\centering
\begin{subfigure}[b]{0.5\textwidth}
\centering
\includegraphics[scale=0.6]{ 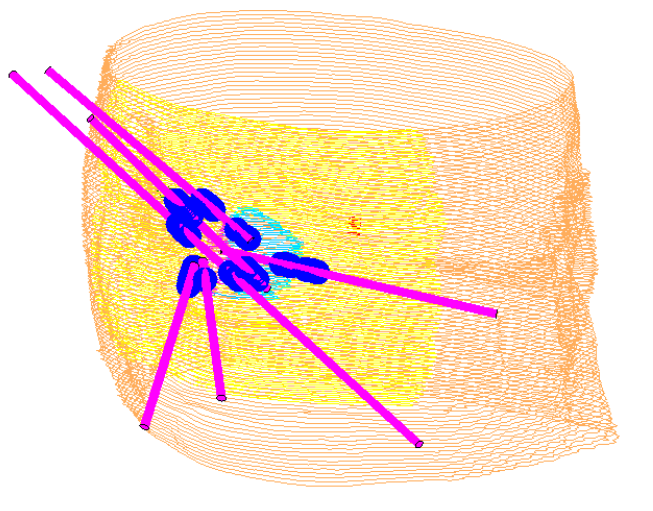}
\captionsetup{justification=centering,font=scriptsize}
\caption{Needle configuration}
\label{figure:needle:free:coverage}
\end{subfigure}
\hfill
\begin{subfigure}[b]{0.45\textwidth}
\centering
\includegraphics[scale=0.6]{ 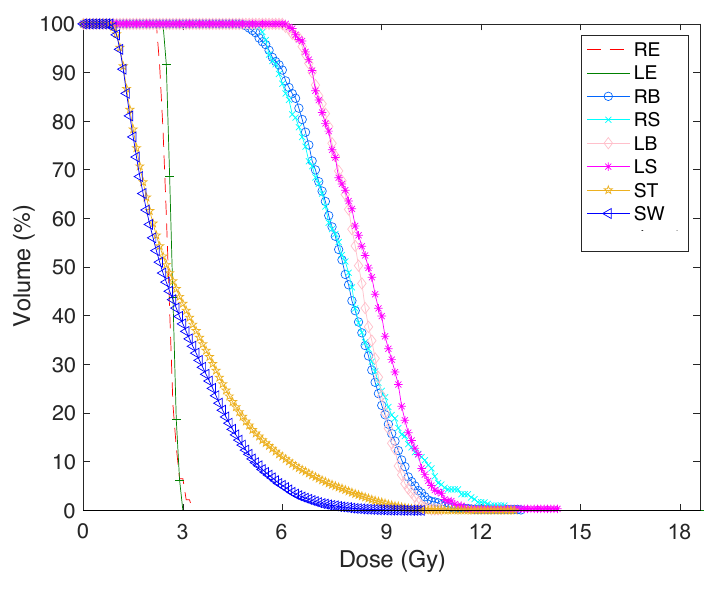}
\captionsetup{justification=centering,font=scriptsize}
\caption{DVHs}
\label{figure:dvh:free:coverage}
\end{subfigure}
\caption{Maximum coverage results} 
\label{figure:free:coverage}
\end{figure}

\paragraph{Method 2: Clustering approach.}
We apply Model~\ref{model:freeneedle:phase2:clustering} with the candidate dwell positions obtained above.
This model requires the choice of a system $\Systwoneedles^{\mdla}_{\normp,\radius}[\cdot]$ to impose that needles do not intersect and of a system $ \SysneedleP^{\mdlb}_{\normp,\radius}[\cdot]$ to guarantee that needles do not cross the face. 
Here, we select $\normp:=2$, $\mdla:=``\times/="$ and $\mdlb=``\times"$. 
To keep the model sas simple as possible, we 
describe the body structures not to be crossed by needles to be the polytope defined by $7$ planes depicted 
in Figure~\ref{figure:convexhull:7points} of \onlsup~\ref{section:os:faceconvexhull}.
We use six needles, \ie $|\needles|=6$.

Being inherently heuristic, the clustering approach depends on the initial choice the needles. 
The needle configurations obtained from two different starting points 
are shown in Figure~\ref{figure:needles:free:clustering}.
This illustrates that the clustering algorithm does not appear to be very sensitive to the initial conditions as the configuration of needles (and their associated DVHs which we will discuss next) are very similar.
In our implementations, we select the best solution obtained after ten restarts, \ie $L=10$ in Algorithm~\ref{alg:jp} of \onlsup~\ref{section:os:clusteringalgo}.
An analysis of the performance of variants for other choices of $\normp$, $\mdla$, and $\mdlb$ can be found in 
\onlsup~\ref{section:os:clusteringvariants}.

\begin{figure}[!htb]
\begin{subfigure}{.5\textwidth}
\centering
\includegraphics[scale=0.5]{ 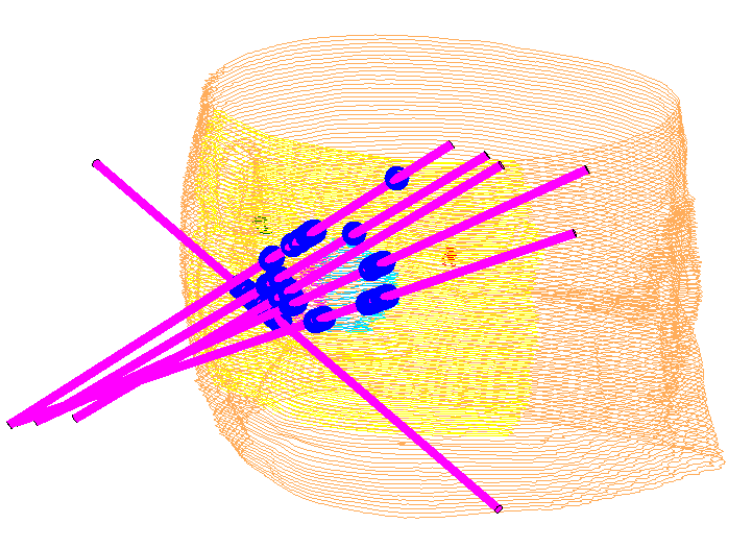}
\captionsetup{justification=centering,font=scriptsize}
\caption{Starting direction $\dir=[0.3, 0.9,0.1]$}
\label{figure:needle1:free:clustering}
\end{subfigure}
\begin{subfigure}{.5\textwidth}
\centering
\includegraphics[scale=0.5]{ 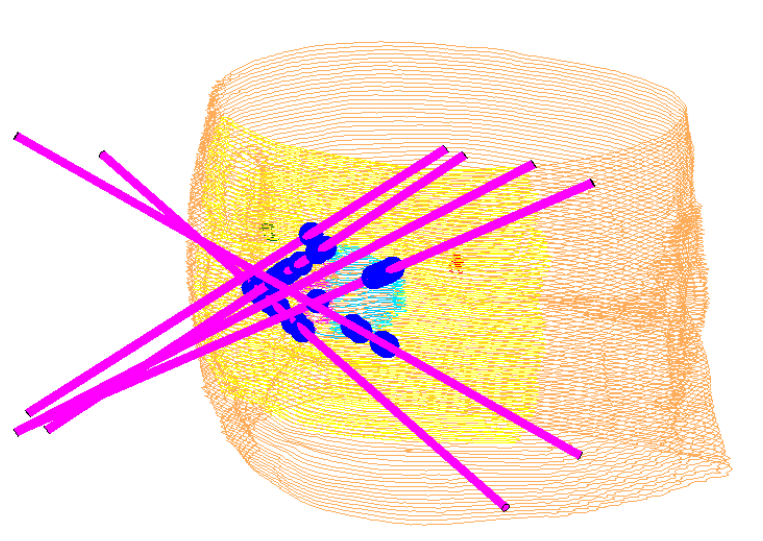}
\captionsetup{justification=centering,font=scriptsize}
\caption{Starting direction $\dir=[0.7, -0.5,0.5]$}
\label{figure:needle2:free:clustering}
\end{subfigure}
\caption{Clustering approach needle configurations with different initializations}
\label{figure:needles:free:clustering}
\end{figure}

\begin{figure}[!htb]
\begin{subfigure}{.5\textwidth}
\centering
\includegraphics[scale=0.5]{ 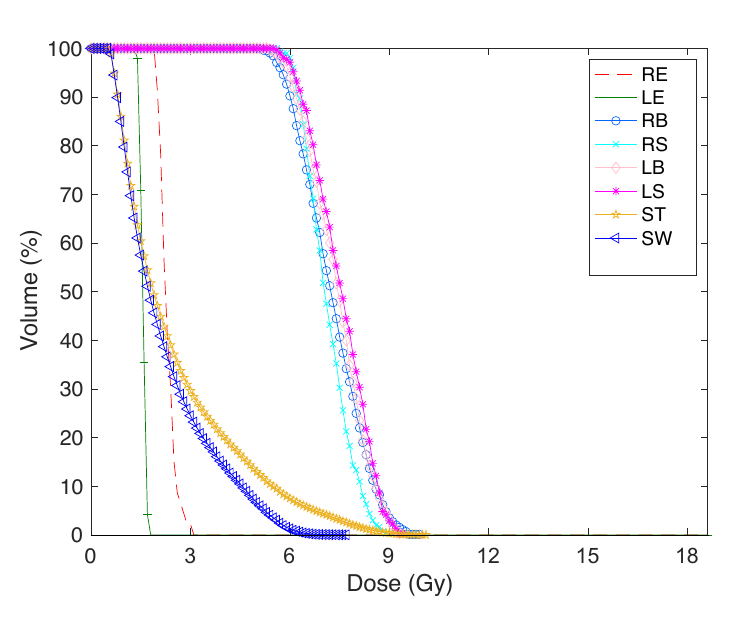}
\captionsetup{justification=centering,font=scriptsize}
\caption{DVHs for needle configuration of Figure~\ref{figure:needles:free:clustering}(a)}
\label{figure:dvh1:free:clustering}
\end{subfigure}
\begin{subfigure}{.5\textwidth}
\centering
\includegraphics[scale=0.5]
{ 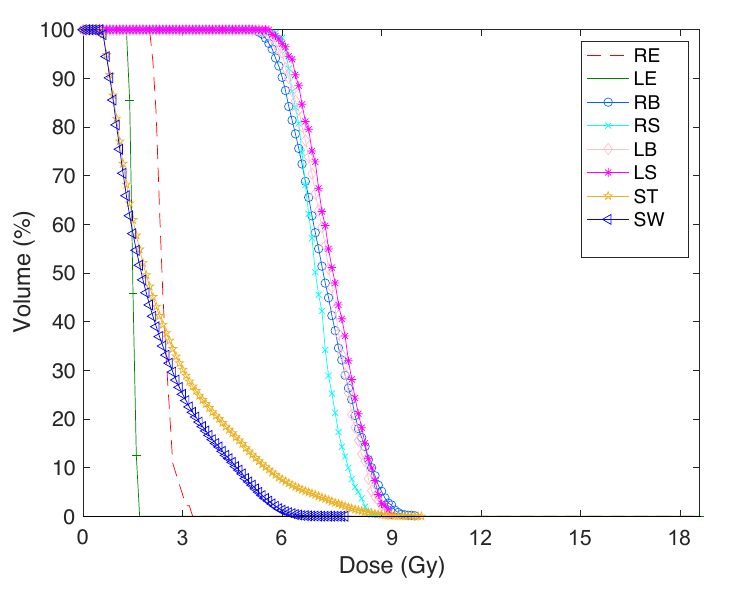}
\captionsetup{justification=centering,font=scriptsize}
\caption{DVHs for needle configuration of Figure~\ref{figure:needles:free:clustering}(b)}
\label{figure:dvh2:free:clustering}
\end{subfigure}
\caption{Clustering approach DVHs}
\label{figure:dvhs:free:clustering}
\end{figure}

\subsubsection{Phase 3: Deriving treatment plans.}
\label{section:results:free:phase3}

Dwell locations spaced apart by $1$mm are created on each of the needles that are selected. 
A treatment plan is generated using the LP described in Section~\ref{section:modelsmethods:free:phase3} for a voxelization with side-length $1.5$mm. 
Unlike the fixed-needle case, the needles are not aligned with the principal axes. 
The blue spheres in Figures~\ref{figure:needle:free:coverage}, \ref{figure:needle1:free:clustering}, and \ref{figure:needle2:free:clustering} represent the dwell positions that are activated through the solution of this LP. 

Figure~\ref{figure:dvh:free:coverage} presents the DVHs of plans obtained using the needle configuration generated by the maximum coverage algorithm. 
It can be observed that this configuration results in a more homogeneous plan compared to the one created using the Freiburg flap. 
This is evident from the fact that the difference between the doses delivered to the right and left sides of the nose is reduced. 
Moreover, the surface and boundary structures in both the right-hand-side and left-hand-side of the nose receive lower doses, in comparison to the original plan whose DVHs were given in Figure~\ref{figure:dvh:clinic}.

The DVHs corresponding to the two plans obtained with the needle configurations generated by the clustering approach are represented in Figure~\ref{figure:dvhs:free:clustering}. 
We see that these configurations permit the creation of more homogeneous plans than those obtained with the clinically-used Freiburg flap. 
This is evidenced by the fact that the difference between the doses delivered to the right and left sides of the nose are reduced. 
Further, for the plan generated from the needles we selected,  surface and boundary structures in both the right-hand-side and left-hand-side of the nose receive more similar doses and the skin structures receive much less doses, when compared to the original plan whose DVHs were given in Figure~\ref{figure:dvh:clinic}.



\subsection{\textbf{Models comparison}}
\label{section:results:comparison}

First, we compare the plans with respect to their solution times.
In Table~\ref{table:solutiontimes:comparison}, $(P_1)$ is the problem of determining candidate dwell positions to be covered with needles, 
$(P_2)$ is the problem of determining needle positions, and 
$(P_3)$ is the problem of creating a plan. 
Steps $(P_1)$ and $(P_2)$ are not performed in the fixed needle approach.
For the free needle approaches, $(P_3)$ requires only the solution of an LP.
For the fixed needle approach, $(P_3)$ requires the solution of an MIP.

\begin{table}[!htb]
\centering
\begin{tabular}{|l| l|c| c| c| c|} 
\hline
Method &  Submethod & $P_1$ & $P_2$ & $P_3$ & Total \\
\hline
Fixed Needle  & -- & -- & -- & 5060.72 & 5060.72\\ 
\hline
Free Needle & Maximum coverage &  42.13  &296.06 & 11.52 & 349.71\\
\hline
Free Needle & Clustering &  42.13 & 35.15 &  10.21 & 87.49\\
\hline
\end{tabular}
\caption{Solution times (s)}
\label{table:solutiontimes:comparison}
\end{table}%



Figure~\ref{figure:result:tumorvs} compares the normalized  $V_{100}$, $V_{150}$, and $V_{200}$ dosimetric indices for the three plans that were created for this case. 
For the clustering approach, the specific plan being analyzed is that corresponding to the needle configuration depicted in Figure~\ref{figure:needle1:free:clustering}.
The axis of abscissas lists the four different tumor regions of interest (see Section~\ref{section:procedures}), whereas the axis of ordinates represents the value of the dosimetric index. 
The different plans are represented using different line styles: the clinical plan is dash-dotted, the fixed needle plan is dotted,  the clustering plan is dashed, and the maximum coverage plan is continuous.  
Figure~\ref{figure:result:OARvs}  similarly compares the normalized dosimetric indices relevant to the different OARs for the four plans we consider.

\begin{figure}[!htb]
\centering
\begin{subfigure}[b]{0.32\textwidth}
\centering
\includegraphics[width=\textwidth]{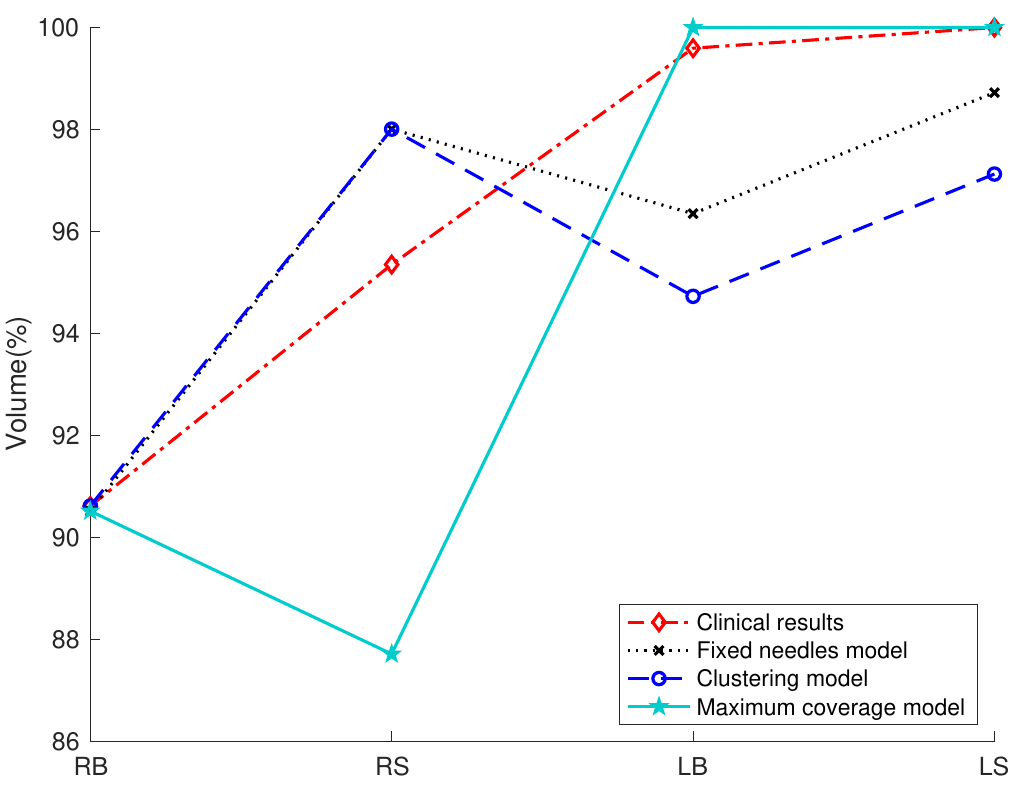}
\captionsetup{justification=centering,font=scriptsize}
\caption{$V_{100}$}
\label{figure:result:tumorv100}
\end{subfigure}
\hfill
\begin{subfigure}[b]{0.32\textwidth}
\centering
\captionsetup{justification=centering,font=scriptsize}
\includegraphics[width=\textwidth]{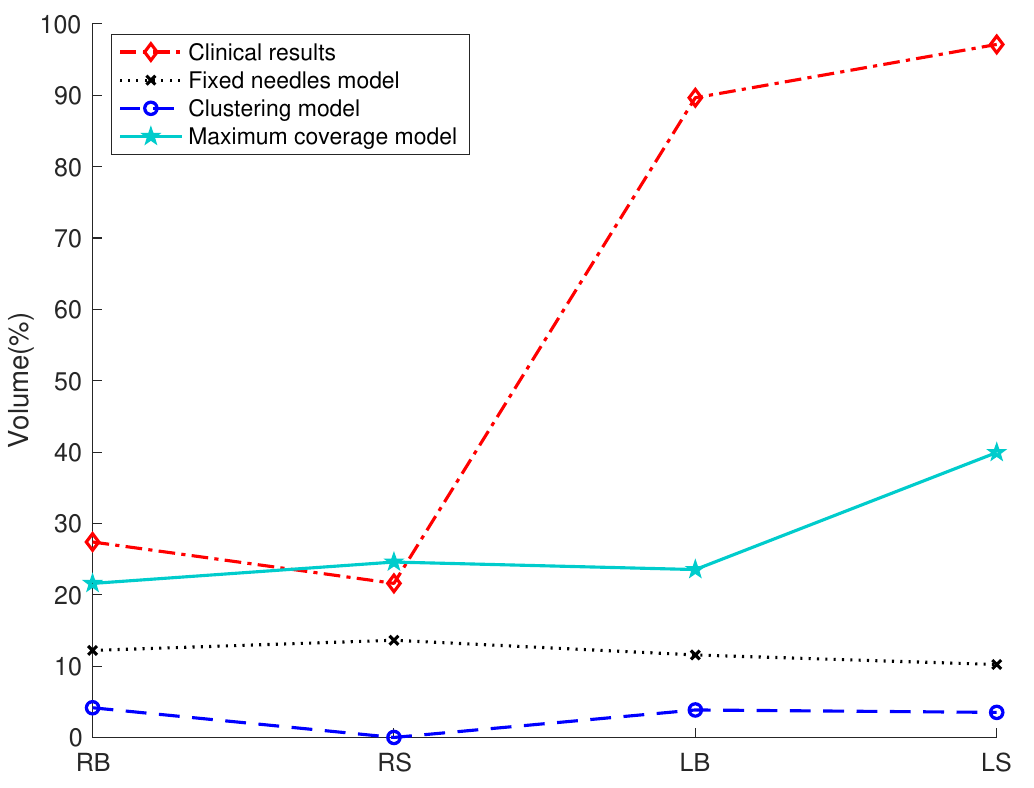}     
\caption{$V_{150}$ }
\label{figure:result:tumorv150}
\end{subfigure}
\hfill
\begin{subfigure}[b]{0.32\textwidth}
\centering
\captionsetup{justification=centering}
\includegraphics[width=\textwidth]{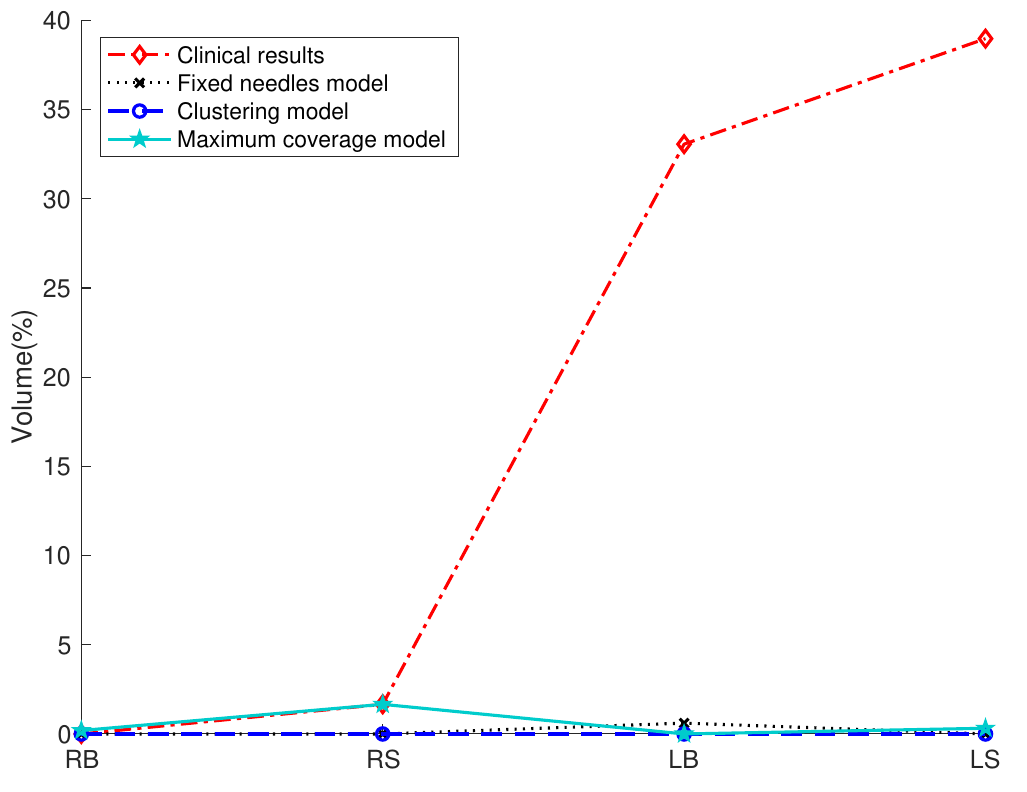}
\captionsetup{justification=centering,font=scriptsize}
\caption{$V_{200}$}
\label{figure:result:tumorv200}
\end{subfigure}
\caption{Dosimetric indices for tumor regions}
\label{figure:result:tumorvs}
\end{figure}

\begin{figure}[!htb]
\centering
\begin{subfigure}[b]{0.32\textwidth}
\centering
\includegraphics[width=\textwidth]{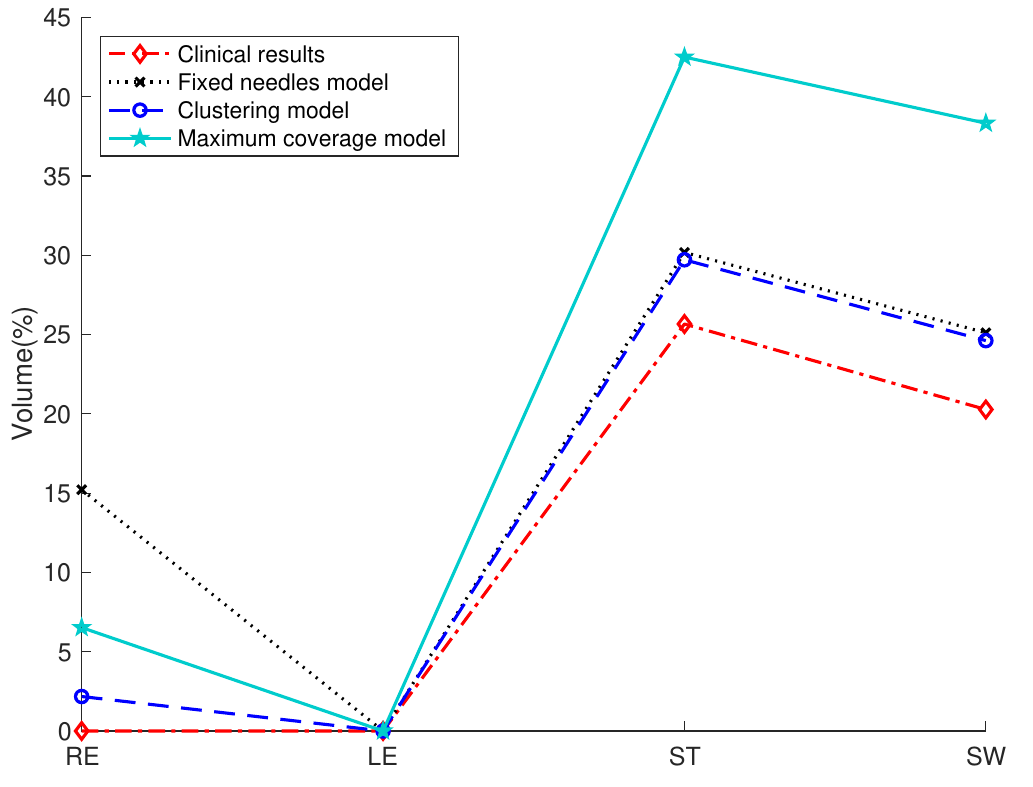}
\captionsetup{justification=centering,font=scriptsize}
\caption{$V_{50}$}
\label{figure:result:OARv50}
\end{subfigure}
\hfill
\begin{subfigure}[b]{0.32\textwidth}
\centering
\includegraphics[width=\textwidth]{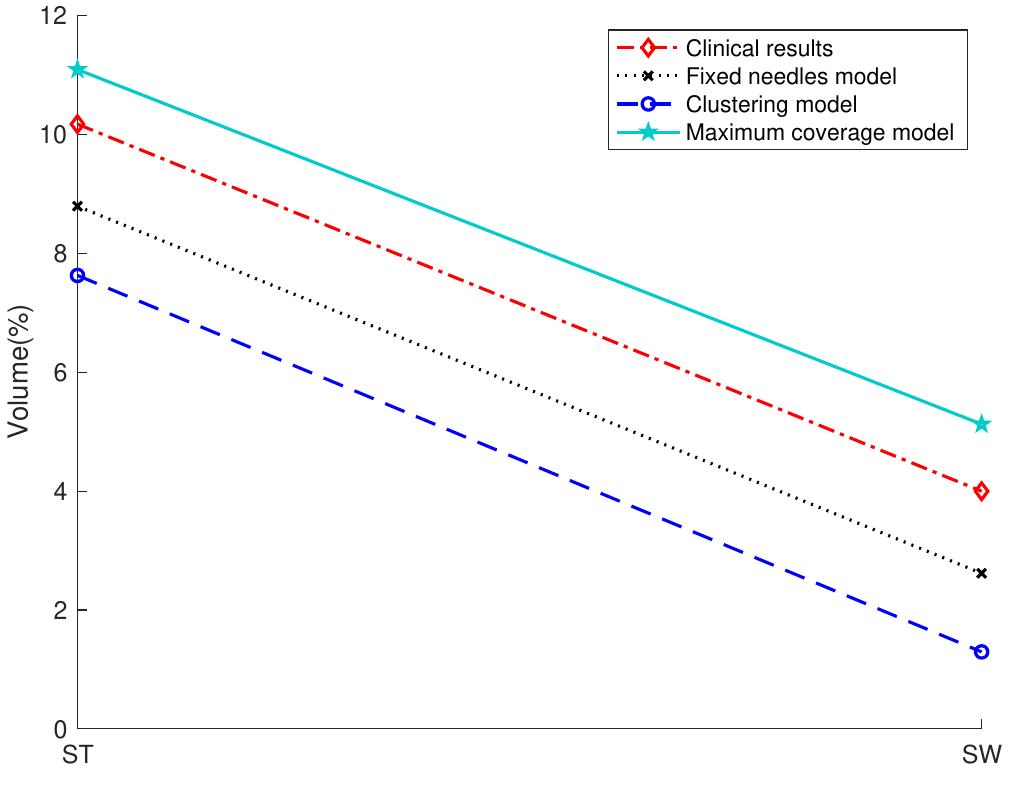}
\captionsetup{justification=centering,font=scriptsize}
\caption{$V_{100}$}
\label{figure:result:OARv100}
\end{subfigure}
\hfill
\begin{subfigure}[b]{0.32\textwidth}
\centering
\includegraphics[width=\textwidth]{ 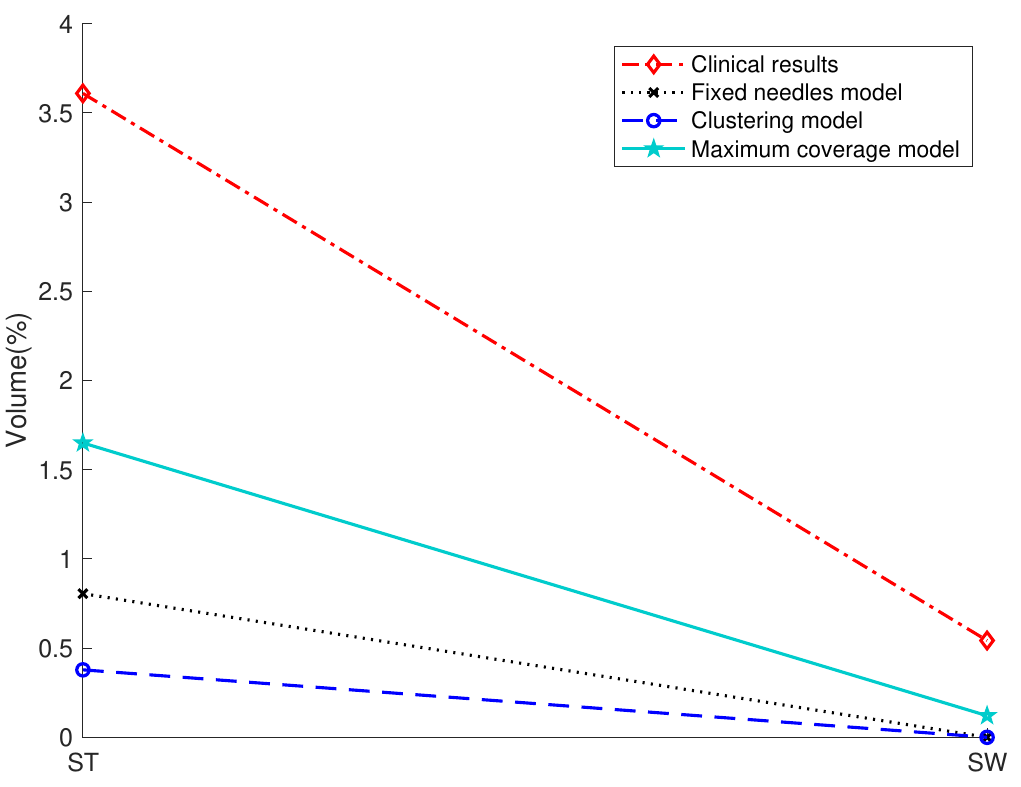}
\captionsetup{justification=centering,font=scriptsize}
\caption{$V_{150}$}
\label{figure:result:OARv150}
\end{subfigure}
\hfill
\begin{subfigure}[b]{0.32\textwidth}
\centering
\includegraphics[width=\textwidth]{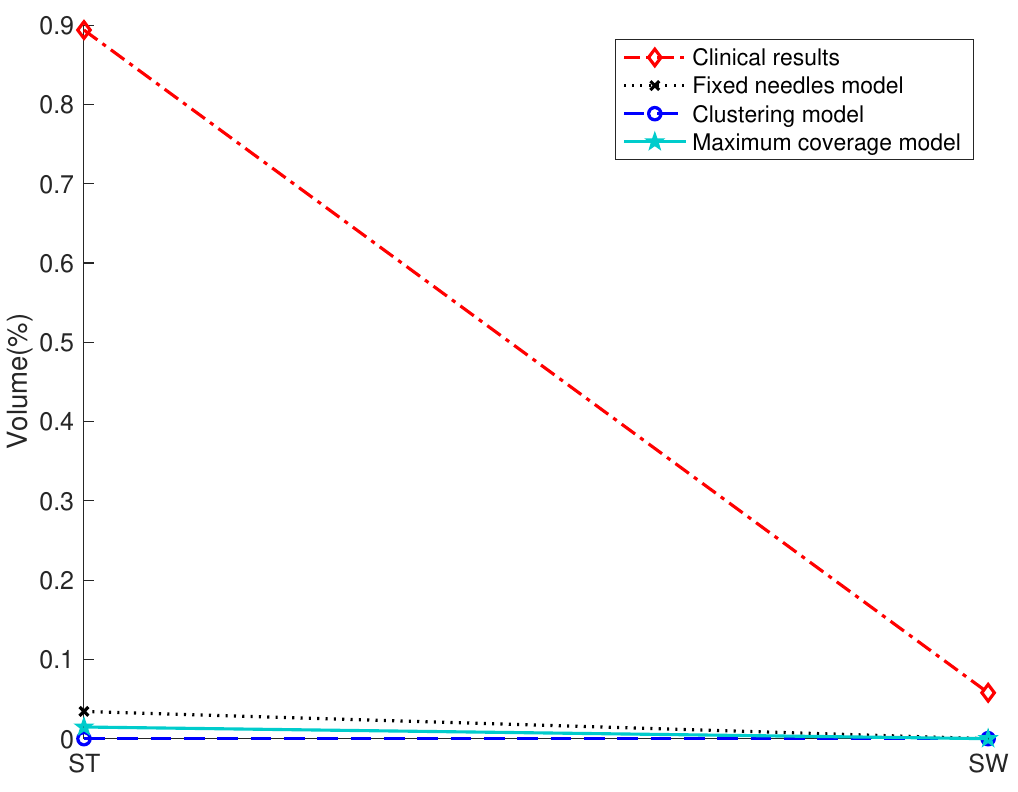}
\captionsetup{justification=centering,font=scriptsize}
\caption{$V_{200}$}
\label{figure:result:OARv200}
\end{subfigure}
\caption{Dosimetric indices for OARs}
\label{figure:result:OARvs}
\end{figure}

These figures show that plans generated with the models we introduce have better characteristics than the clinical plan as they improve most of the dosimetric indices.
This is especially true for $V_{150}$ and $V_{200}$, which are better across all tissues. 
The three proposed models tend to deliver more homogeneous plans, when compared to the clinical plan. 
This can be clearly observed when comparing the DVHs in Figures~\ref{figure:dvh:clinic}, \ref{figure:dvh:fixed}, \ref{figure:dvh:free:coverage} and  \ref{figure:dvh1:free:clustering}. 

When comparing the models introduced in this study, it is apparent that each has its advantages and limitations. 
The fixed needle model is the easiest to implement and can be solved to global optimality.  
However, its performance is average compared to the other methods due to its limited choices of directions.
Additionally, it is not as fast as the other methods in generating treatment plans, which could be problematic if the radiation oncologist requests multiple modifications to the plan.
The second model, which uses free needles and a maximum coverage approach, improves $V_{100}$ for the left side of tumorous tissues and $V_{150}$ and $V_{200}$ for tumorous tissues and OARs compared with the clinical plan. 
The algorithm is faster than the fixed needle approach but slower than the clustering method, as evidenced in Table~\ref{table:solutiontimes:comparison}.
The third model, which uses free needles and a clustering approach, improves $V_{150}$ and $V_{200}$ for tumorous tissues and $V_{100}$, $V_{150}$, and $V_{200}$ for OARs. 
Moreover, this algorithm is the fastest approach the proposed methods, as evidenced in Table~\ref{table:solutiontimes:comparison}.
Finally, it appears that the clinical plan outperforms all of the proposed methods with respect to $V_{50}$ for eye lenses and skin structures. 
However, the $V_{50}$ values for all of the proposed models fall within the range that is considered appropriate for this case.

\section{Conclusion}
\label{section:conclusion}

We study the problem of positioning channels in 3D-printed masks for HDR-BT, allowing more freedom in needles placement compared to Freiburg flaps.  
We propose three models that show potential in obtaining plans with improved dosimetric indices.

\bibliographystyle{informs2014.bst}
\bibliography{References.bib}

\clearpage

\section*{\onlsup}

\addcontentsline{toc}{section}{Appendices}
\renewcommand{\thesubsection}{\Alph{subsection}}
\renewcommand{\arraystretch}{0.8}

\subsection{Intersection of needles: models and proofs}
\label{section:os:needle-needle}

In this section, we present models designed to impose or verify that two needles do not intersect and provide proofs of their validity. 

\Lemmatwofiniteneedles*

\begin{proof}{Proof.}
Imposing that $\line^i$ and $\line^j$ do not intersect is equivalent to requiring that the conic system
\begin{align*}
PR_{\normp,\radius} (\line^i,\line^j)=  
\left\{ 
\begin{array}{l|l}
\vaxi \in \Re^{13} &
\begin{array}{lll}
\vachi - \dir^i \vatau^i - \varho^i = \start^i,  &\, & \vachi - \dir^j \vatau^j - \varho^j = \start^j, \\
-\vatau^i \le -\tmin^i, && -\vatau^j \le \tmin^j,  \\ 
\vatau^i \le \tmax^i, && \vatau^j \le \tmax^j, \\
\vasigma^i =\radius, && \vasigma^j =\radius, \\
(\varho^i,\vasigma^i) \in C_{\normp}, && (\varho^j,\vasigma^j) \in C_{\normp} \\
\end{array} \\
\end{array} 
\right\}
\end{align*}
is infeasible, where $\vaxi$ is used as a shorthand for the vector $(\vachi,\vatau^i,\varho^i,\vasigma^i,\vatau^j,\varho^j,\vasigma^j)$.
The dual conic system of $PR_{\normp,\radius} (\line^i,\line^j)$ is the set
\begin{align*}
DR_{\normp,\radius} (\line^i,\line^j)=  
\left\{ 
\begin{array}{l|l}
\bar{\dualvar} \in \Re^{12} & 
\begin{array}{ll}
\multicolumn{2}{l}{
(\start^i)\tr \dualvarzer  + (\start^j)\tr \dualvarsev   - \tmin^i \dualvarone  + \tmax^i \dualvartwo} \\
\multicolumn{2}{l}{ \qquad\qquad - \tmin^j \dualvarfou + \tmax^j \dualvarfiv  + \radius \dualvarthr  + \radius \dualvarsix < 0,} \\
 \dualvarzer + \dualvarsev = 0, \\
 -(\dir^i)\tr \dualvarzer - \dualvarone + \dualvartwo = 0, & 
 -(\dir^j)\tr \dualvarsev - \dualvarfou + \dualvarfiv = 0, \\
 (-\dualvarzer,\dualvarthr) \in C_{\normp}^*, &
 (-\dualvarsev,\dualvarsix) \in C_{\normp}^*, \\
 \dualvarone \ge 0, \,\, \dualvartwo \ge 0, \,\, \dualvarthr \ge 0, & \dualvarfou \ge 0, \,\, \dualvarfiv \ge 0,  \,\, \dualvarsix \ge 0
 \end{array}
\end{array} 
\right\}
\end{align*}
where $\bar{\dualvar}$ is used as a shorthand for the vector $(\dualvarzer,\dualvarone,\dualvartwo,\dualvarthr,\dualvarfou,\dualvarfiv,\dualvarsix,\dualvarsev)$. 
 
In the above formulation, for $k=i$ (resp. $j$), $\dualvarzer$ (resp. $\dualvarsev$) $ \in \Re^3$ is the dual variable on constraint $\vachi-\dir^k \vatau^k - \varho^k=\start^k$, $\dualvarone$ (resp. $\dualvarfou$) $\in \Re$ is the dual variable on constraint $-\vatau^k \le -\tmin^k$, $\dualvartwo$ (resp. $\dualvarfiv$) $\in \Re$ is the dual variable on constraint $\vatau^k \le \tmax^k$, and $\dualvarthr$ (resp. $\dualvarsix$) $\in \Re$ is the dual variable on constraint $\vasigma^k = \radius$. 
It follows from Proposition 2.4.2 in \citep{ben2001lectures} that if $DR_{\normp,\radius} (\line^i,\line^j)$ is feasible 
then the system $PR_{\normp,\radius} (\line^i,\line^j)$ is infeasible. 
Next, we observe that the constraints of $DR_{\normp,\radius} (\line^i,\line^j)$ define a cone as any feasible solution $\bar{\dualvar}$ yields another feasible solution when scaled by a positive factor. 
Therefore, adding the normalization constraint $||\dualvarzer||_{1}+||\dualvarsev||_{1}+|\dualvarone|+|\dualvartwo|+|\dualvarthr|+|\dualvarfou|+|\dualvarfiv|+|\dualvarsix|\leq 1$ keeps the system feasible.
The absolute values in the above constraint can be removed since they apply to nonnegative variables. 
Next, we use the equality $\dualvarzer+\dualvarsev=0$ to eliminate variable $\dualvarsev$. 
To create a linear representation of $||\dualvarzer||_1$, we introduce variable $\dualvarzerp \in \Re^3$ and constraints $\dualvarzerp \ge \dualvarzer$ and $\dualvarzerp \ge -\dualvarzer$.  
We then obtain the result after observing that if \eqref{sys:twofiniteneedles} has a feasible solution where $(\start^i)\tr \dualvarzerp  - (\start^j)\tr \dualvarzerp  -\tmin^{i} \dualvarone + \tmax^{i}\dualvartwo -\tmin^j \dualvarfou+\tmax^j \dualvarfiv+ \radius\dualvarthr+ \radius \dualvarsix \leq -\epsilon$, then this solution clearly satisfies the strict inequality of $DR_{\normp,\radius} 
(\line^i,\line^j)$. 
It follows that $PR_{\normp,\radius} (\line^i,\line^j)$ is infeasible whenever \eqref{sys:twofiniteneedles} is feasible for a positive $\epsilon$.
\qedjp
\end{proof}



In stating Proposition~\ref{proposition:twofiniteneedles}, we 
did not use the system $DR_{\normp,\radius} (\line^i,\line^j)$ directly but instead  eliminated some of its variables $\bar{\dualvar}$ and bounded the remaining variables $\dualvar$ through a normalization constraint. 
Eliminating variables has the advantage of reducing the number of nonconvexities that arise in models that use the system. 
Bounding variables has computational advantages when the formulation is incorporated in models that are solved through spatial branch-and-bound. 
We choose this particular normalization because it can be represented through linear inequalities.
Other normalizations are possible.
In particular, the coefficient of $2$ in front of $||\dualvarzer||_1$ could be chosen equal to $1$, although such a change appeared to yield longer computational times in our pilot computations. 

When needles are infinite, the dual variables $\dualvarone$, $\dualvartwo$, $\dualvarfou$, and $\dualvarfiv$ associated with the lower and upper bounds on the step sizes of the needles do not need to be introduced, leading to  
\begin{corollary}\label{lemma:twoinfiniteneedles}
Let $(\start^i,\dir^i)$ and $(\start^j,\dir^j)$ be the parameters of two infinite needles $\line^i$ and $\line^j$. 
Also let $\normp \ge 1$ and $\radius>0$. 
The system
\begin{align}
\Systwoneedles^{\infty}_{\normp,\radius} \left[\begin{array}{c}\start^i,\dir^i,\cdot,\cdot\\ \start^j,\dir^j,\cdot, \cdot \\ \dualvar \end{array}\right]:    
\left\{ 
\begin{array}{ll}
\multicolumn{2}{l}{
(\start^i-\start^j)\tr \dualvarzer   
+ \radius \dualvarthr  + \radius \dualvarsix \le -\epsilon,} \\
\multicolumn{2}{l}{
2\, \one\tr\dualvarzerp+\dualvarthr+\dualvarsix\leq 1,} \\
-\dualvarzer \le \dualvarzerp, & \dualvarzer \le \dualvarzerp, \\
 -(\dir^i)\tr \dualvarzer  = 0, & 
 (\dir^j)\tr \dualvarzer  = 0, \\
 (-\dualvarzer,\dualvarthr) \in C_{\normp}^*, &
 (\dualvarzer,\dualvarsix) \in C_{\normp}^*, \\
  \dualvarthr \ge 0, &  \dualvarsix \ge 0,
 \end{array}
\right.
 \label{sys:twoinfiniteneedles}
\end{align}
in variables $\dualvar=(\dualvarzer;\dualvarzerp;\dualvarthr,\dualvarsix) \in \Re^3 \times \Re^3 \times \Re^2$  
ensures that needles $\line^i$ and $\line^j$ do not intersect, when $\epsilon$ is chosen positive.
\end{corollary}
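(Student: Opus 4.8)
The plan is to obtain Corollary~\ref{lemma:twoinfiniteneedles} as a straightforward specialization of Proposition~\ref{proposition:twofiniteneedles} to the case $\tmin^i = \tmin^j = -\infty$ and $\tmax^i = \tmax^j = +\infty$, rather than redoing the conic-duality argument from scratch. First, I would observe that when a needle is infinite, the primal system $PR_{\normp,\radius}(\line^i,\line^j)$ in the proof of Proposition~\ref{proposition:twofiniteneedles} no longer contains the bound constraints $-\vatau^i \le -\tmin^i$, $\vatau^i \le \tmax^i$ (and similarly for $j$): the step-size variables $\vatau^i,\vatau^j$ are now free. Consequently, in forming the dual conic system $DR_{\normp,\radius}(\line^i,\line^j)$, the dual variables $\dualvarone,\dualvartwo$ (associated with the lower/upper bounds on $\vatau^i$) and $\dualvarfou,\dualvarfiv$ (associated with those on $\vatau^j$) are not introduced; the dual constraint coming from the now-free variable $\vatau^i$ becomes the equality $-(\dir^i)\tr\dualvarzer = 0$, and similarly $(\dir^j)\tr\dualvarzer = 0$ after using $\dualvarsev = -\dualvarzer$.

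Next I would carry out the same three cosmetic manipulations used in the proof of Proposition~\ref{proposition:twofiniteneedles}: (i) use the equality $\dualvarzer + \dualvarsev = 0$ to eliminate $\dualvarsev$, which turns the objective coefficient $(\start^i)\tr\dualvarzer + (\start^j)\tr\dualvarsev$ into $(\start^i - \start^j)\tr\dualvarzer$ and the cone condition $(-\dualvarsev,\dualvarsix) \in C_{\normp}^*$ into $(\dualvarzer,\dualvarsix) \in C_{\normp}^*$; (ii) add the normalization $\|\dualvarzer\|_1 + \|\dualvarsev\|_1 + |\dualvarthr| + |\dualvarsix| \le 1$, which after eliminating $\dualvarsev$ and using nonnegativity of $\dualvarthr,\dualvarsix$ becomes $2\|\dualvarzer\|_1 + \dualvarthr + \dualvarsix \le 1$; (iii) introduce $\dualvarzerp \in \Re^3$ with $\dualvarzerp \ge \dualvarzer$ and $\dualvarzerp \ge -\dualvarzer$ to linearize $\|\dualvarzer\|_1$, so that $2\,\one\tr\dualvarzerp + \dualvarthr + \dualvarsix \le 1$ and $(\start^i - \start^j)\tr\dualvarzerp$ is used in the first inequality. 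This reproduces exactly System~\eqref{sys:twoinfiniteneedles}. Finally, invoking Proposition~2.4.2 in \cite{ben2001lectures} as before: feasibility of \eqref{sys:twoinfiniteneedles} with $\epsilon > 0$ implies feasibility of $DR_{\normp,\radius}(\line^i,\line^j)$ (the feasible point satisfies the strict inequality), which implies infeasibility of $PR_{\normp,\radius}(\line^i,\line^j)$, i.e., the two infinite needles do not intersect.

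The main thing to be careful about—though it is not really an obstacle—is the sign bookkeeping in the two cone memberships: one must check that eliminating $\dualvarsev = -\dualvarzer$ indeed sends $(-\dualvarzer,\dualvarthr) \in C_{\normp}^*$ to itself and $(-\dualvarsev,\dualvarsix) \in C_{\normp}^*$ to $(\dualvarzer,\dualvarsix) \in C_{\normp}^*$, matching the asymmetric appearance of $\dualvarzer$ and $-\dualvarzer$ in \eqref{sys:twoinfiniteneedles}. Everything else is a verbatim restriction of the finite-needle argument, so the cleanest write-up is simply to state ``the proof is identical to that of Proposition~\ref{proposition:twofiniteneedles} after setting $\tmin^i,\tmin^j = -\infty$ and $\tmax^i,\tmax^j = +\infty$, which removes the constraints on $\vatau^i,\vatau^j$ from the primal and hence the variables $\dualvarone,\dualvartwo,\dualvarfou,\dualvarfiv$ from the dual,'' and then exhibit the resulting reduced system.
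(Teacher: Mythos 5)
Your proposal is correct and takes essentially the same route the paper intends: the corollary is obtained exactly by dropping the step-size bound constraints on $\vatau^i,\vatau^j$ from the primal system $PR_{\normp,\radius}(\line^i,\line^j)$, hence omitting the dual variables $\dualvarone,\dualvartwo,\dualvarfou,\dualvarfiv$, turning the dual constraints from the now-free step-size variables into the equalities $(\dir^i)\tr\dualvarzer=0$ and $(\dir^j)\tr\dualvarzer=0$, and then repeating the elimination of $\dualvarsev$, the normalization, and the linearization of $\|\dualvarzer\|_1$ verbatim. The only cosmetic slip is that the first inequality of \eqref{sys:twoinfiniteneedles} retains $(\start^i-\start^j)\tr\dualvarzer$ rather than $\dualvarzerp$; the variable $\dualvarzerp$ enters only the normalization constraint.
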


In the arguments of $\Systwoneedles^{\infty}_{\normp,\radius}\left[\cdot\right]$ in Corollary~\ref{lemma:twoinfiniteneedles}, we use $\cdot$ to identify input parameters that are not used in this formulation but might be used in others. 
This allows us to adopt consistent notations for Proposition~\ref{proposition:twofiniteneedles} and Corollary~\ref{lemma:twoinfiniteneedles}.
Next, we describe a different system that does not require the addition of dual variables and ensures that the cores of two infinite needles are spaced by a distance of at least $2\radius$. 


\Lemmaminimumdistanceofcores*

\begin{proof}{Proof.}
It is clear that the euclidean distance between the cores of two infinite skew needles $\line^i$ and $\line^j$ is at least $2\radius$ if 
and only if $PR^{\infty}_{2,\circle} (\line^i,\line^j)$ (defined in the proof of Proposition~\ref{proposition:twofiniteneedles}) is empty for all $\circle < \radius$. 
As argued earlier, the latter requirement is implied if $DR^{\infty}_{2,\circle} (\line^i,\line^j)$ is feasible for all 
$\circle < \radius$. 
The constraints of $DR^{\infty}_{2,\circle} (\line^i,\line^j)$ are
\begin{align*}
&(\start^i-\start^j)\tr \dualvarzer + \circle (\dualvarthr  + \dualvarsix) < 0,  \\
& (\dir^i)\tr \dualvarzer  = 0, \quad
 (\dir^j)\tr \dualvarzer  = 0 ,\\
& ||\dualvarzer||_2 \le \dualvarthr, \quad 
 ||\dualvarzer||_2 \le \dualvarsix, \\
&  \dualvarthr \ge 0,  \quad 
\dualvarsix \ge 0.
\end{align*}
This system is feasible if and only if
\begin{align*}
& (\start^i-\start^j)\tr \dualvarzer + 2\circle ||\dualvarzer||_2  < 0,  \\
& (\dir^i)\tr \dualvarzer  = 0, \quad
 (\dir^j)\tr \dualvarzer  = 0. 
\end{align*}
The equalities
$(\dir^i)\tr \dualvarzer  = 0$ and $(\dir^j)\tr \dualvarzer  = 0$
can be written in matrix form as  $\bar{M} \dualvarzer = 0$ where
$\bar{M} = \left[ \begin{array}{rrr} \dir^i_1 & \dir^i_2 & \dir^i_3 \\ \dir^j_1 & \dir^j_2 & \dir^j_3 \end{array} \right]$.
Since $\line^i$ and $\line^j$ are assumed to be skew,
$\bar{M}$ has rank $2$.
Thus the solutions to the system $\bar{M} \dualvarzer =0$ are multiples of $\dualvarzerbar$ $= \left((\dir^i_2 \dir^j_3-\dir^i_3 \dir^j_2),(\dir^i_3 \dir^j_1-\dir^i_1 \dir^j_3),(\dir^i_1 \dir^j_2-\dir^i_2 \dir^j_1)\right)$ $=\left(-\det(M_1),\det(M_2),-\det(M_3)\right)$ where 
$M_1 := \left[ \begin{array}{rr} \dir^i_2 & \dir^i_3 \\ \dir^j_2 & \dir^j_3 \end{array} \right]$, 
$M_2 := \left[ \begin{array}{rr} \dir^i_1 & \dir^i_3 \\ \dir^j_1 & \dir^j_3 \end{array} \right]$, 
and
$M_3 := \left[ \begin{array}{rr} \dir^i_1 & \dir^i_2 \\ \dir^j_1 & \dir^j_2 \end{array} \right]$. 
Showing that $DR^{\infty}_{2,\circle} (\line^i,\line^j)$ is feasible therefore reduces to show that at least one of the multiples of $\dualvarzerbar$ satisfies the constraint $(\start^i-\start^j)\tr \dualvarzer + 2\circle ||\dualvarzer||_2  < 0$.
Since this constraint is unchanged when vector $\dualvarzer$ is scaled, we conclude that the set $DR^{\infty}_{2,\circle} (\line^i,\line^j)$ is feasible if and only if $(\start^i-\start^j)\tr \dualvarzerbar + 2\circle ||\dualvarzerbar||_2  < 0$. 
This latter condition can be written as
\begin{align*}
2\circle \sqrt{\det(M_1)^2+\det(M_2)^2+\det(M_3)^2} < \sum_{k=1}^3 (-1)^k (\start^i_k-\start^j_k)  \det(M_k) := \det(M[\start^i,\dir^i,\start^j,\dir^j]).  
\end{align*}
where the last equality follows from the definition of $M[\start^i,\dir^i,\start^j,\dir^j]$ and
Laplace formula for determinants. 
Since this strict inequality must hold for all $\circle < \radius$, we conclude that \eqref{distanceskew} holds. 
\qedjp
\end{proof}

System~\eqref{distanceskew} is not applicable when needles are not skew since for parallel needles -- even those that do intersect -- both its left- and right-hand sides are zero.
For parallel neeldes, we have


\begin{lemma}\label{parneedles}
Let $(\start^i,\dir)$ and $(\start^j,\dir)$ be the parameters of two parallel needles $\line^i$ and $\line^j$. 
The euclidean distance between the cores of needles $\line^i$ and $\line^j$ is at least $2\radius$ if and only if their parameters satisfy the following constraint 
\begin{align}
\Systwoneedles^{=}_{2,\radius} \left[\begin{array}{c}\start^i,\dir,\cdot,\cdot\\ \start^j,\dir,\cdot,\cdot \\ \cdot \end{array}\right]:    
\left\{ 
\begin{array}{ll}
 \\
\sqrt{\mynorm{\start^i-\start^j}_2^2 \, \mynorm{\dir}_2^2 - \left((\start^i-\start^j)\tr \dir\right)^2} \ge 2\radius \mynorm{\dir}_2.  \label{distancepar}
\\ \textrm{ } 
\end{array}
\right.
\end{align}
\end{lemma}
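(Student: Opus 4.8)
The plan is to compute the Euclidean distance between the two parallel cores directly and then impose that it is at least $2\radius$, since for parallel lines this distance is a single well-defined scalar (the cores are either coincident, parallel-and-disjoint, with a constant gap). First I would recall the classical formula for the distance from a point to a line: the core of $\line^i$ is the line through $\start^i$ with direction $\dir$, and the core of $\line^j$ passes through $\start^j$ with the same direction $\dir$. The distance between these two parallel lines equals the distance from the point $\start^j$ to the line through $\start^i$ in direction $\dir$, which is the norm of the component of $\start^i-\start^j$ orthogonal to $\dir$.

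Concretely, writing $u=\start^i-\start^j$, I would decompose $u = u_{\parallel} + u_{\perp}$ where $u_{\parallel} = \frac{u\tr\dir}{\mynorm{\dir}_2^2}\dir$ is the projection onto $\dir$ and $u_{\perp}=u-u_{\parallel}$. Then the inter-core distance is $\mynorm{u_{\perp}}_2$, and by the Pythagorean identity $\mynorm{u_{\perp}}_2^2 = \mynorm{u}_2^2 - \mynorm{u_{\parallel}}_2^2 = \mynorm{u}_2^2 - \frac{(u\tr\dir)^2}{\mynorm{\dir}_2^2}$. Requiring $\mynorm{u_{\perp}}_2 \ge 2\radius$ and multiplying through by $\mynorm{\dir}_2$ (which is positive, assuming $\dir\neq 0$) gives exactly $\sqrt{\mynorm{\start^i-\start^j}_2^2\,\mynorm{\dir}_2^2 - \left((\start^i-\start^j)\tr\dir\right)^2} \ge 2\radius\mynorm{\dir}_2$, which is \eqref{distancepar}. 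This establishes the ``if and only if'' since every step is an equivalence. Alternatively, and to keep the exposition parallel to the proof of Proposition~\ref{proposition:twoskewneedles}, I could instead note that $\mynorm{u}_2^2\mynorm{\dir}_2^2 - (u\tr\dir)^2 = \mynorm{u\times\dir}_2^2$ by the Lagrange identity, so the left-hand side of \eqref{distancepar} is $\mynorm{(\start^i-\start^j)\times\dir}_2$, the familiar numerator in the point-to-line distance formula $\mynorm{(\start^i-\start^j)\times\dir}_2/\mynorm{\dir}_2$; the inequality is then this distance being at least $2\radius$.

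There is no serious obstacle here — the statement is essentially the standard point-to-line distance formula specialized to parallel lines, so the ``proof'' is a short verification. The only minor care needed is the degenerate cases: if $\dir=0$ the needle is ill-defined, so we implicitly assume $\dir\neq 0$; and when $\start^i=\start^j$ (or more generally $\start^i-\start^j$ is a multiple of $\dir$) the cores coincide, the left-hand side of \eqref{distancepar} is $0$, and the constraint correctly forces a violation, so no separate handling is required. I would write the proof in three short sentences: state the inter-core distance via orthogonal decomposition, apply the Pythagorean identity, and rearrange to \eqref{distancepar}, remarking that each manipulation is reversible.

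\begin{proof}{Proof.}
Since $\line^i$ and $\line^j$ are parallel, their cores are parallel lines, and the Euclidean distance between them equals the distance from $\start^j$ to the line through $\start^i$ in direction $\dir$. Writing $u = \start^i - \start^j$ and decomposing $u = u_\parallel + u_\perp$ with $u_\parallel = \tfrac{u\tr \dir}{\mynorm{\dir}_2^2}\,\dir$ and $u_\perp = u - u_\parallel$ orthogonal to $\dir$, this distance is $\mynorm{u_\perp}_2$. By the Pythagorean identity,
\begin{align*}
\mynorm{u_\perp}_2^2 \;=\; \mynorm{u}_2^2 - \mynorm{u_\parallel}_2^2 \;=\; \mynorm{\start^i-\start^j}_2^2 - \frac{\left((\start^i-\start^j)\tr \dir\right)^2}{\mynorm{\dir}_2^2}.
\end{align*}
Hence the inter-core distance is at least $2\radius$ if and only if $\mynorm{u_\perp}_2^2 \ge 4\radius^2$, which, upon multiplying by $\mynorm{\dir}_2^2 > 0$ and taking square roots, is equivalent to \eqref{distancepar}. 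All steps are reversible, so the equivalence holds. When $\start^i-\start^j$ is a multiple of $\dir$ the cores coincide, the left-hand side of \eqref{distancepar} is zero, and the constraint is correctly violated for any $\radius>0$. \qedjp
\end{proof}
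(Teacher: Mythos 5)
Your proof is correct and follows essentially the same route as the paper: both compute the minimum distance between the two parallel cores as the norm of the component of $\start^i-\start^j$ orthogonal to $\dir$, the paper by explicitly minimizing the one-dimensional quadratic $\mynorm{\start^i-\start^j+\vatau\dir}_2^2$ over $\vatau$ and you by orthogonal projection, which yields the identical expression $\mynorm{\start^i-\start^j}_2^2 - \left((\start^i-\start^j)\tr\dir\right)^2/\mynorm{\dir}_2^2$. Your remark on the degenerate collinear case and the reversibility of each step is a small but welcome addition that the paper leaves implicit.
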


\begin{proof}{Proof.} Consider a pair of infinite parallel lines,  $\line^i$ and $\line^j$, with parameters $(\start^i,\dir)$ and $(\start^j,\dir)$, respectively.
The shortest distance between these two lines can be computed by solving
$D^2_{\min}=\min_{(\vatau^i,\vatau^j) \in \Re^2}  \mynorm{\start^i-\start^j+ (\vatau^i-\vatau^j)\dir}^2_2$. 
Posing $A= \mynorm{\dir}_2^2$, $B= 2(\start^i-\start^j) \tr \dir$, $C=\mynorm{\start^i-\start^j }_2^2$, and $\vatau=\vatau^i-\vatau^j$, we can rewrite this problem as $D_{\min}^2=\min_{\vatau \in \Re} A \vatau^2+B\vatau+C.$
Since $A>0$, $\vatau^*=\frac{-B}{2A}$ is optimal and  $D^2_{\min}=\frac{-(\frac{B}{2})^2+CA}{A}$. 
Substituting $A$, $B$, and $C$ with their values, the constraint that $D_{\min} \ge 2\radius$ reduces to \eqref{distancepar}. 
\qedjp
\end{proof}

\subsection{Intersection of needles: models and proofs}
\label{section:os:needle-polytope}

In this section, we present models to impose or verify that a needle and a polytope do not intersect and provide proofs of their validity. 

\Lemmainfiniteneedlepolytope*

\begin{proof}{Proof.}
Needle $\theneedle{\normp}{\radius}(\start,\dir,\tmin,\tmax)$ and polytope $P_{\polylhs,\polyrhs}$ do not intersect if and only if the system 
\begin{align*}
PP_{\normp,\radius} (\line, P_{\polylhs,\polyrhs}) = 
\left\{ 
\begin{array}{l|l}
\vaxi \in \Re^{13} &
\begin{array}{l}
\polylhs \vachi \leq \polyrhs,	\\
\vachi - \vatau \dir - \varho = \start, \\ 
-\vatau \le -\tmin,    \\
\vatau \le \tmax, \\
 \vasigma= \radius, \\
(\varho,\vasigma) \in C_{\normp},\\
\vachi \in \Re^3, \vatau \in \Re, \vasigma \in \Re_+	
\end{array}
\end{array}
\right\}
\end{align*}
is infeasible, where $\vaxi$ is a shorthand for the vector $(\vaxi,\vatau,\varho,\vasigma)$. 
The dual conic system of this set is 
\begin{align*}
DP_{\normp,\radius}(\line,P_{\polylhs,\polyrhs})=  
\left\{ 
\begin{array}{l|l}
\bar{\dualvarw} \in \Re^{n+6} &
\begin{array}{lll}
& \polyrhs\tr \dualvarwone +\start\tr \dualvarwtwo - \tmin \dualvarwthr+ \tmax \dualvarwfou + \radius \dualvarwfiv <0, \\
 & \polylhs\tr \dualvarwone + \dualvarwtwo=0, \\
&- \dir\tr \dualvarwtwo - \dualvarwthr+ \dualvarwfou=0, \\
&(-\dualvarwtwo,\dualvarwfiv) \in C_{\normp}^*, \\
& \dualvarwone \in \Re_{+}^{n}, \dualvarwtwo\in \Re^3, \dualvarwthr, \dualvarwfou, \dualvarwfiv\in \Re_{+} 
\end{array}
\end{array}\right\}
\end{align*}
where $\bar{\dualvarw}=(\dualvarwone,\dualvarwtwo,\dualvarwthr,\dualvarwfou,\dualvarwfiv)$ are the dual variables corresponding to the constraints of $PP_{\normp,\radius} (\line, P_{\polylhs,\polyrhs})$ in the order they appear in.
It follows from Proposition 2.4.2 in \citep{ben2001lectures} that if $DP_{\normp,\radius}(\line,P_{\polylhs,\polyrhs})$ is feasible then the system $PP_{\normp,\radius} (\line, P_{\polylhs,\polyrhs})$ is infeasible. 

Next we observe that $DP_{\normp,\radius}(\line,P_{\polylhs,\polyrhs})$ is a cone as any feasible solution $\bar{\dualvarw}$ yields another feasible solution when scaled by a positive factor. 
Thus, adding the normalization constraint $||\dualvarwone||_{1}+||\dualvarwtwo||_{1}+|\dualvarwthr|+|\dualvarwfou|+|\dualvarwfiv|\leq 1$ keeps the system feasible. 
The absolute values in this constraint can be removed since they apply to nonnegative variables. 
We obtain the result after observing that if \eqref{sys:needlepolytope} has a feasible solution where 
$\polyrhs\tr \dualvarwone+\start\tr \dualvarwtwo - \tmin \dualvarwthr+ \tmax \dualvarwfou+ \radius \dualvarwfiv \leq -\epsilon$, then this solution satisfies the strict inequality of $DP_{\normp,\radius}(\line,P_{\polylhs,\polyrhs})$, implying that
$PP_{\normp,\radius} (\line, P_{\polylhs,\polyrhs})$ is infeasible.
\qedjp 
\end{proof}

When the needle is infinite, dual variables $\dualvarwthr$ and $\dualvarwfou$ do not need to be introduced, yielding
\begin{corollary}\label{corollary6.2.1}
Let $P_{\polylhs,\polyrhs}=\{x \in \Re^3 \,|\, \polylhs x \le \polyrhs \}$ be a polyope and let $\line$ be an infinite needle with parameters $(\start,\dir)$. Also let $\normp \ge 1$ and let $\radius > 0$. 
The system
\begin{align}
\SysneedleP^{\infty}_{\normp,\radius} \left[ \begin{array}{c} \start,\dir,\cdot,\cdot \\ \polylhs, \polyrhs \\ \dualvarw \end{array}\right]:
\left\{ 
\begin{array}{l}
\polyrhs\tr \dualvarwone +\start\tr \dualvarwtwo+ \radius \dualvarwfiv \leq -\epsilon, \\
\polylhs\tr \dualvarwone+ \dualvarwtwo=0, \\ 
-\dir\tr \dualvarwtwo =0,\\
\one\tr \dualvarwone+ \one\tr\dualvarwtwop +\dualvarwfiv\leq 1,\\
\,\,\,\, \dualvarwtwo \le \dualvarwtwop, \\
-\dualvarwtwo \le -\dualvarwtwop, \\
(-\dualvarwtwo,\dualvarwfiv) \in C^*_{\normp}, \\
 \dualvarwone \in \Re_{+}^{n}, \dualvarwtwo\in \Re^3, \dualvarwtwop \in \Re^3, \dualvarwfiv\in \Re_{+}, 
\end{array}
\right.
\end{align}
in variables $\dualvarw=(\dualvarwone,\dualvarwtwo,\dualvarwtwop,\dualvarwfiv) \in \Re^{n+7}$
ensures needle $\line$ and polytope $P_{\polylhs,\polyrhs}$ do not intersect when $\epsilon$ is chosen to be positive. 
\end{corollary}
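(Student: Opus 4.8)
The plan is to replay the derivation used for Proposition~\ref{proposition:infiniteneedlepolytope}, but starting from the conic feasibility system in which the needle core is a full line rather than a segment. First I would observe that an infinite needle $\theneedle{\normp}{\radius}(\start,\dir,\cdot,\cdot)$ and the polytope $P_{\polylhs,\polyrhs}$ intersect if and only if the system obtained from $PP_{\normp,\radius}(\line,P_{\polylhs,\polyrhs})$ (defined in the proof of Proposition~\ref{proposition:infiniteneedlepolytope}) by deleting the two bound constraints $-\vatau \le -\tmin$ and $\vatau \le \tmax$ — so that $\vatau$ is unconstrained apart from the linking equation — is feasible. Call this system $PP^{\infty}_{\normp,\radius}(\line,P_{\polylhs,\polyrhs})$; imposing that the needle and the polytope do not intersect is then equivalent to requiring its infeasibility.

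Second, I would write down the dual conic system of $PP^{\infty}_{\normp,\radius}(\line,P_{\polylhs,\polyrhs})$. The only structural change relative to the dual appearing in the proof of Proposition~\ref{proposition:infiniteneedlepolytope} is that, because the step size $\vatau$ is now a \emph{free} primal variable rather than one with a lower and an upper bound, it contributes an equality to the dual in place of the two nonnegative multipliers $\dualvarwthr$ and $\dualvarwfou$; since the coefficient of $\vatau$ in $\vachi - \vatau\dir - \varho = \start$ is $-\dir$, this equality reads $-\dir\tr\dualvarwtwo = 0$. The remaining dual relations are exactly as before — $\polylhs\tr\dualvarwone + \dualvarwtwo = 0$ from $\vachi$ free, $(-\dualvarwtwo,\dualvarwfiv) \in C_{\normp}^*$ from $(\varho,\vasigma) \in C_{\normp}$, $\dualvarwone \ge 0$ from $\polylhs\vachi \le \polyrhs$, and $\dualvarwfiv \ge 0$ — together with the strict objective inequality $\polyrhs\tr\dualvarwone + \start\tr\dualvarwtwo + \radius\dualvarwfiv < 0$. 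By Proposition~2.4.2 in \citep{ben2001lectures}, feasibility of this dual system implies infeasibility of $PP^{\infty}_{\normp,\radius}(\line,P_{\polylhs,\polyrhs})$, hence non-intersection.

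Third, I would carry out the same cosmetic reductions as in the proof of Proposition~\ref{proposition:infiniteneedlepolytope}: the dual constraint set is a cone, so appending the normalization $\mynorm{\dualvarwone}_1 + \mynorm{\dualvarwtwo}_1 + |\dualvarwfiv| \le 1$ preserves feasibility; the absolute values on $\dualvarwone$ and $\dualvarwfiv$ can be dropped since these variables are sign-constrained; and $\mynorm{\dualvarwtwo}_1$ can be linearized by adding $\dualvarwtwop \in \Re^3$ with $\dualvarwtwop \ge \dualvarwtwo$ and $\dualvarwtwop \ge -\dualvarwtwo$ and replacing the term by $\one\tr\dualvarwtwop$. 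This yields precisely the system $\SysneedleP^{\infty}_{\normp,\radius}[\cdot]$ of the statement, with the $\nw{\infty} = n_{\polylhs,\polyrhs}+7$ added variables $\dualvarw = (\dualvarwone,\dualvarwtwo,\dualvarwtwop,\dualvarwfiv)$. I would close the argument by noting that any feasible point of $\SysneedleP^{\infty}_{\normp,\radius}[\cdot]$ with $\epsilon > 0$ satisfies $\polyrhs\tr\dualvarwone + \start\tr\dualvarwtwo + \radius\dualvarwfiv \le -\epsilon < 0$, hence also the strict dual inequality, so the dual system is feasible and the needle and polytope do not intersect.

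The derivation is essentially bookkeeping, and the only substantive point beyond Proposition~\ref{proposition:infiniteneedlepolytope} — the step to be careful with — is the passage from a bounded to a free step size and its dual manifestation, namely the collapse of the two inequality multipliers $\dualvarwthr,\dualvarwfou$ into the single equality $-\dir\tr\dualvarwtwo = 0$. One should also confirm that the conic alternative theorem of \citep{ben2001lectures} still applies, which it does since the sole nonpolyhedral ingredient is the $\normp$-order cone $C_{\normp}$, a closed convex cone whose dual $C_{\normp}^*$ is the $\normq$-order cone.
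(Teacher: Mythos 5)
Your proposal is correct and follows essentially the same route as the paper, which obtains the corollary from Proposition~\ref{proposition:infiniteneedlepolytope} by dropping the step-size bounds from the primal conic system so that the free variable $\vatau$ produces the single dual equality $-\dir\tr\dualvarwtwo=0$ in place of the two multipliers $\dualvarwthr$ and $\dualvarwfou$, followed by the same normalization and $\ell_1$-linearization steps. Your reading of the one-directional nature of the conic alternative (dual feasibility implies primal infeasibility, hence non-intersection) also matches the paper's use of Proposition~2.4.2 of \cite{ben2001lectures}.
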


When an infinite reduces to its core, dual variable $\dualvarwfiv$ can further be eliminted to obtain
\begin{corollary}\label{cor:needlecorepoly}
Let $P_{\polylhs,\polyrhs}=\{x \in \Re^3 \,|\, \polylhs x \le \polyrhs \}$ be a polyope and let $\line$ be an infinite needle with parameters $(\start,\dir)$.  
The system 
\begin{align}
\SysneedleP^{\times}_{\normp,\radius} \left[ \begin{array}{c} \start,\dir,\cdot,\cdot \\ \polylhs, \polyrhs \\ \dualvarw \end{array}\right]:
\left\{ 
\begin{array}{l}
\polyrhs\tr \dualvarwone +\start\tr \dualvarwtwo \leq -\epsilon, \\
\polylhs\tr \dualvarwone+ \dualvarwtwo=0, \\ 
-\dir\tr \dualvarwtwo =0,\\
\one\tr \dualvarwone+ \one\tr\dualvarwtwop \leq 1, \\
\,\,\,\, \dualvarwtwo \le \dualvarwtwop, \\
-\dualvarwtwo \le -\dualvarwtwop, \\
 \dualvarwone \in \Re_{+}^{n}, \dualvarwtwo\in \Re^3, \dualvarwtwop \in \Re^3, 
\end{array}
\right.
\end{align}
in variables $\dualvarw=(\dualvarwone,\dualvarwtwo,\dualvarwtwop) \in \Re^{n+6}$
ensures that the core of needle $\line$ and polytope $P_{\polylhs,\polyrhs}$ do not intersect when $\epsilon$ is chosen to be positive. 
\end{corollary}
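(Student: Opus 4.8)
The plan is to obtain $\SysneedleP^{\times}_{\normp,\radius}[\cdot]$ as the degenerate specialization of Corollary~\ref{corollary6.2.1} in which the needle $\line$ collapses onto its core, i.e.\ $\radius=0$. Concretely, I would start from the primal feasibility system used in the proof of Proposition~\ref{proposition:infiniteneedlepolytope}, drop the variables $\dualvarwthr$ and $\dualvarwfou$ exactly as in Corollary~\ref{corollary6.2.1} (because the needle is infinite), and then also drop the ball variables $\varho,\vasigma$ together with the conic constraint $(\varho,\vasigma)\in C_{\normp}$ (because the core has zero radius). What remains is the purely linear system $PP_{\mathrm{core}}(\line,P_{\polylhs,\polyrhs})=\{(\vachi,\vatau)\in\Re^3\times\Re \mid \polylhs\vachi\le\polyrhs,\ \vachi-\vatau\dir=\start\}$, whose feasibility is precisely equivalent to the core $\{\start+\vatau\dir : \vatau\in\Re\}$ meeting $P_{\polylhs,\polyrhs}$.

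Next I would form the conic (here Farkas-type) alternative of $PP_{\mathrm{core}}$, with dual variables $\dualvarwone\in\Re^n_+$ attached to $\polylhs\vachi\le\polyrhs$ and $\dualvarwtwo\in\Re^3$ attached to the equality $\vachi-\vatau\dir=\start$. Reading off the column of $\vachi$ gives $\polylhs\tr\dualvarwone+\dualvarwtwo=0$, the column of $\vatau$ gives $-\dir\tr\dualvarwtwo=0$, and the right-hand side gives the certificate inequality $\polyrhs\tr\dualvarwone+\start\tr\dualvarwtwo<0$. By Proposition~2.4.2 in \citep{ben2001lectures}, feasibility of this alternative system implies infeasibility of $PP_{\mathrm{core}}$, hence that the core avoids $P_{\polylhs,\polyrhs}$. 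As in the earlier proofs, the alternative system is a cone, so I would append the normalization $\mynorm{\dualvarwone}_1+\mynorm{\dualvarwtwo}_1\le 1$, remove the absolute values on the nonnegative $\dualvarwone$, and linearize $\mynorm{\dualvarwtwo}_1$ by introducing $\dualvarwtwop\in\Re^3$ with $\dualvarwtwo\le\dualvarwtwop$, $-\dualvarwtwo\le\dualvarwtwop$, replacing $\mynorm{\dualvarwtwo}_1$ by $\one\tr\dualvarwtwop$. Finally, since strict inequalities cannot be handled by solvers, I would replace $\polyrhs\tr\dualvarwone+\start\tr\dualvarwtwo<0$ by $\polyrhs\tr\dualvarwone+\start\tr\dualvarwtwo\le-\epsilon$; any feasible point of the resulting system still satisfies the strict inequality whenever $\epsilon>0$, giving the claim.

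I do not expect a genuine obstacle here: the argument is a direct instance of the conic-duality template already established for Propositions~\ref{proposition:twofiniteneedles} and~\ref{proposition:infiniteneedlepolytope}. The only point that deserves care in the write-up is the scope of the conclusion — the system controls the core of $\line$ only, not a thickened needle — which is exactly what the statement asserts; if one wants it to preclude intersection of a positive-radius needle with $P_{\polylhs,\polyrhs}$, a separate (and, as the paper notes, in practice satisfied with $\epsilon=\radius$) argument is needed. For completeness I would also remark that the same system can be read off directly from Corollary~\ref{corollary6.2.1} by setting $\radius=0$, which voids the term $\radius\dualvarwfiv$ and the cone constraint $(-\dualvarwtwo,\dualvarwfiv)\in C^*_{\normp}$ and therefore permits the elimination of $\dualvarwfiv$; the fact that $\normp$ and $\radius$ then play no role in the formulation explains why they are retained in $\SysneedleP^{\times}_{\normp,\radius}[\cdot]$ only for notational uniformity with the other systems.
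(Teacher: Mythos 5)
Your argument is correct and follows exactly the route the paper takes: the paper disposes of this corollary in one line by noting it "can be established using linear programming duality results," i.e., the Farkas-type specialization of the conic-duality template of Proposition~\ref{proposition:infiniteneedlepolytope} with the step-size duals dropped (infinite needle) and the ball variables dropped (core only), which is precisely what you carry out. Your closing observations --- that LP duality even yields a necessary and sufficient condition here, and that $\normp$ and $\radius$ survive in the notation only for uniformity --- likewise match the paper's remarks following the corollary.
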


Corollary~\ref{cor:needlecorepoly} can be established using linear programming duality results.
Thus, it could be expressed as a necessary and sufficient condition for the core of $\line$ to not intersect with $P_{\polylhs,\polyrhs}$.

\subsection{Solution algorithm for clustering approach}
\label{section:os:clusteringalgo}

In this section, we give details of the algorithm presented in Section~\ref{section:modelsmethods:free:phase2} for the solution of the clustering model 
$\Mmdl$.
A pseudo-code is given in Algorithm~\ref{alg:jp}. 
Because this algorithm is heuristic in nature, we will use a multi-start strategy in which we run the heuristic multiple times ($l=1,\ldots,L$) from different starting solutions and keep the solution with the smallest objective value among those obtained.
For each restart, we obtain a solution by iterating between the two steps presented next, until convergence is achieved. 

\subsubsection{Step 1: Assigning dwell positions to needles.}
\label{section:os:clusteringalgo:step1}

The first step takes as inputs a collection of needles parameters $(\fix{\start}^j,\fix{\dir}^j,\fix{\tmin}^j,\fix{\tmax}^j)$ for $j \in \needles$ and will determine clusters $\points_k$ of dwell points to assign to each needle $k$. 
This is done by solving the following network flow problem
\begin{subequations}
\label{orgnetworkbasedmodel1}
\begin{alignat}{4} 
 && \min \quad & \sum_{ i\in \points}\sum_{ j\in \needles}{\Dis_{i,j}^2} \asgt_{i,j}\label{eq:fixmodelkmeansobj}\\
 (\Mdlassg)\qquad &\quad& \text{s.t.} \quad & \sum_{j\in \needles} \asgt_{i,j}=1, && \forall i\in \points, \label{eq:fixmodelkmeanscon1}\\
 &&& \sum_{i\in \points} \asgt_{i,j}\geq 2, &&\forall j\in \needles, \label{eq:fixmodelkmeanscon11}  
\end{alignat}
\end{subequations}
with the requirement that $\asgt_{i,j} \in \{0,1\}$ and where $\Dis_{i,j}^2= ||\dwpt^i-\fix{\start}^j-\lambda^* \fix{\dir}^j||_2^2$, 
$\lambda^*=\max\{\fix{\tmin}^j,\min\{\fix{\tmax}^j,\vatau^* \}\}$, and $\vatau^*=\frac{(\dwpt^i-\start^j) \tr \dir^j}{\mynorm{\dir^j}_2^2}$ for $i\in \points$ and $j\in \needles$.
Given an optimal solution $\asgt{}^*$ of Model~\eqref{orgnetworkbasedmodel1}, we define $\points_k = \{ i \in \points \,|\, \asgt_{i,k}^* = 1 \}$ for $k \in \needles$.

\subsubsection{Step 2: Finding needle directions.} 
\label{section:os:clusteringalgo:step2}
The second steps takes as input a collection of dwell points clusters $\points_k$ corresponding to each of the needle $k \in \needles$. 
It then finds a solution to the restriction of $\Mmdl$ where variables $\asgt{}^*$ are fixed.
Thus, it determines parameters $(\fix{\dir}^{j},\fix{\start}^{j},\fix{\tmin}^j,\fix{\tmax}^j)$ for all needles $j \in \needles$. 
It does so by finding the characteristics of each needle one at the time, assuming that the characteristics of needles considered previously are now immutable. 
At iteration $k$, it finds the parameter of needle $k$ $(\dir^{k},\start^{k},\tmin^k,\tmax^k)$ 
to best fit the cluster of dwell positions $\points_k$ 
assuming that the parameters 
of every one of the $k-1$ needles $\fix{\line}^j$ have already been selected to be 
$(\fix{\dir}^{j},\fix{\start}^{j},\fix{\tmin}^j,\fix{\tmax}^j)$ for $j \in \needles_k=\{1,\ldots,k-1\}$.
We refer to this problem as $\Rmdlorig{k}$. 
Because this problem must decide which exiting plane needle $k$ originates from, it can be itself decomposed into separate subproblems, each requiring the reference point of needle $k$ to be belong to a different exiting plane $\dn_e$.
This yields the following model 
\begin{subequations}
\label{mpr}
\begin{alignat}{3}
&\min \quad && \sum_{ i\in \points_{k}}{||\dwpt^{i}-\start^k-\stept_{i,k}\dir^{k}||^2}\label{eq:morefixclustermodelkmeansobj1}\\ &\text{s.t.} && ||\dir^k||^2=1,\label{morefixclustermodelcon661} \\
(\Rmdl{k}) \qquad &&&\Systwoneedles^{\mdla}_{\normp,\radius} \left[\begin{array}{c}\fix{\start}^j,\fix{\dir}^j,\fix{\tmin}^j,\fix{\tmax}^j\\ \start^k,\dir^k,\tmin^k,\tmax^k \\ \dualvar^{k,j} \end{array}\right], &\quad &\forall j \in \needles_k,\label{eq:morefixclustermodelcon21}\\
 &&& \SysneedleP^{\mdlb}_{\normp,\radius} \left[ \begin{array}{c} \start^k,\dir^k,\tmin^k,\tmax^k \\ \polylhs^m, \polyrhs^m \\ \dualvarw^{k,m} \end{array}\right], && \forall m \in \polyhyd,\label{morefixclustermodelcon31}\\
&&& \tmin^k \leq \stept_{i,k}\leq \tmax^k,&& \forall i \in \points_{k},\label{morefixclustermodelcon721} \\
&&&  \start^k\in \dn_e. \label{morefixclustermodelcon7771} 
\end{alignat}
\end{subequations}
with the requirements that 
($i$) $\dir^k\in \Re^3$
($ii$) $\tmin^k,\tmax^k\in \Re$
($iii$) $\dualvar^{k,j} \in \Re^{\nv{\mdla}}, \forall j \in \needles_k$, and
($iv$) $\dualvarw^{k,m} \in \Re^{\nw{\mdlb}}, \forall m \in \polyhyd$.
where $(\mdla,\mdlb)\in \{(\leftrightarrow,\leftrightarrow), (\infty,\infty), (\times,\infty),  (=,\infty), (\times,\times),  (=,\times)\}$
 and $\normp \in \{1,2,\infty\}$.
 In particular, we will have that $\tmin^k=-\infty, \tmax^k=+\infty$  when $(\mdla,\mdlb)\in \{(\infty,\infty), (\times,\infty),  (=,\infty), (\times,\times),  (=,\times)\}$, in which case Constraints \eqref{morefixclustermodelcon721} 
 are not relevant and can be discarded.
 Constraints \eqref{eq:morefixclustermodelkmeansobj1}-\eqref{morefixclustermodelcon721} have the same meaning as in \Mmdl.  
Constraint~\eqref{morefixclustermodelcon7771} enforces that the starting point of the needle is in exiting plane $\dn_e$.
This requirement has the advantage of fixing one of the components of $\start^j$ and to provide upper and lower bounds for the remaining components. 
It also allows the fixing of $\tmin^k$ to $0$. 
It is known that deriving better bounds on variables occurring in bilinear terms is crucial for producing relaxation bounds in branch-and-bound codes. 

Even though Model~\eqref{mpr} only optimizes the parameters of a single needle, it remains difficult to solve. 
In our numerical experience, \gurobi\xspace takes a large amount of time to generate good quality feasible solutions and rarely is able to prove optimality when $k$ is large.
Hence, we propose to obtain solutions for \eqref{mpr} using an alternating heuristic procedure.
This heuristic is composed of three steps, each one corresponding to the solution of a different model -- $\RmdlI{k}$, $\RmdlII{k}$, and $\RmdlIII{k}$ -- that optimizes a subset of variables $(\start^k,\dir^k,\tmin^k,\tmax^k)$ while the others are fixed. 
The solution of these three models is repeated in sequence until convergence is achieved. 
The process is initialized with a feasible solution obtained by using a commercial solver on a variant of $\Mmdl$. 
In this variant, the objective function is replaced with $0$ and the direction of needle $k$ is fixed to a predetermined value $\init{\dir}{}^{k}$. 
This procedure is fast in practice.

In Model~$\RmdlI{k}$, we fix the reference point of needle $k$ to $\fix{\start}^k$, minimum and maximum step sizes $\fix{\tmin}^k$ and $\fix{\tmax}^k$, and step size to projection of point $i$ to be $\fix{\stept}_{i,k}$.
We then optimize over the needle direction $\dir^k$. 
Specifically, we solve 
the model composed of 
(\ref{eq:morefixclustermodelkmeansobj1},
\ref{morefixclustermodelcon661},
\ref{eq:morefixclustermodelcon21},
\ref{morefixclustermodelcon31}), which
is a nonconvex optimization model.

In Model~$(\RmdlII{k})$, we fix the direction $\dir^k$ to $\fix{\dir}^k $, the minimum and maximum step sizes to $\fix{\tmin}^k$ and $\fix{\tmax}^k$, and require the needle to pass through exiting plane $\dn_e$.
We then optimize the reference point $\start^k$ together with step size to projection of point $i$, $\stept_{i,k}$.
Specifically, we solve the model composed of 
(\ref{eq:morefixclustermodelkmeansobj1},
\ref{eq:morefixclustermodelcon21},
\ref{morefixclustermodelcon31},
\ref{morefixclustermodelcon721},
\ref{morefixclustermodelcon7771}), which 
is a continuous nonconvex quadratic program. 

In Model~$(\RmdlIII{k})$,  We fix $\dir^k$ and $\start^k$ to $\fix{\dir}^k$ and $\fix{\start}^k$, respectively. 
We then optimize the minimum and maximum step sizes, $\tmin^k$, $\tmax^k$ together with step size to projection of point $i$, $\stept_{i,k}$.
Specifically, we solve the model composed of 
(\ref{eq:morefixclustermodelkmeansobj1},
\ref{eq:morefixclustermodelcon21},
\ref{morefixclustermodelcon31},
\ref{morefixclustermodelcon721}).
We have that $\tmin^k=-\infty, \tmax^k=+\infty$  when $(\mdla,\mdlb)\in \{(\infty,\infty), (\times,\infty),  (=,\infty), (\times,\times),  (=,\times)\}$.
Model $\RmdlIII{k}$ is a continuous nonlinear optimization models whose objective function is convex and quadratic. 
If $\Systwoneedles^{\infty}_{\normp,\radius}[\cdot]$ and $\SysneedleP^{\infty}_{\normp,\radius}[\cdot]$ are used, then  $(\RmdlIII{k})$ is an unconstrained convex quadratic program whose solution can be derived in closed-form.

Models $\RmdlI{k}$, $\RmdlII{k}$, and $\RmdlIII{k}$ differ in the number of linear, convex quadratic, bilinear, $\normq$-order cone, and nonconvex quadratic constraints that they have. 
This number is also affected by the models $\mdla$ and $\mdlb$ used for representing needle intersections and intersections with structures of interest. 
Table~\ref{table:modelsize:free} summarizes the size of the different models  
$\RmdlI{k}$, $\RmdlII{k}$, and $\RmdlIII{k}$ as a function of $\K$ (the number of needles being considered in the problem, $\K-1$ of them fixed), $|\polyhyd|$ (the number of structures that needle $k$ should avoid) and $|\points_k|$ (the number of dwell positions in cluster $k$).
The constraints of these models are classified as being either linear ($\tt L$), convex quadratic ($\tt CQ$), bilinear ($\tt BL$), \normq-order cone ($\tt qOC$) or nonconvex quadratic ($\tt NQ$).
When a constraint satisfies multiple denominations in the preceding list,  we associate with it the first label it satisfies.
We introduce $\polycons_m$ to denote the total number of constraints in the polyhedron describing structure $m\in \polyhyd$ to avoid and define $\totalpolycons := \sum_{m \in \polyhyd} \polycons_m$.

\newcolumntype{P}[1]{>{\centering\arraybackslash}p{#1}}
\begin{table}[!htb]
\fontsize{10pt}{10pt}
\begin{center}
\begin{tabular}
{|P{0.75in}|c|c|P{1.3in}|P{0.6in}|P{0.2in}|P{0.6in}|P{0.2in}|P{1.5in}|} 
\hline
 &  &  & \multicolumn{5}{|c|}{Number of constraints by type} & Number of variables \\
 \hline
 & $\mdla$ & $\mdlb$ & {\tt L} & {\tt BL} & {\tt CQ} & {\tt qOC} & {\tt NQ} & \\
 \hline 
 & $\leftrightarrow$ & $\leftrightarrow$ & $4\K+5|\polyhyd|$  & $2\K+|\polyhyd|$ & $-$  &$2\K+|\polyhyd|$ & $1$ & $12\K+\totalpolycons+9|\polyhyd|+3$\\ 
  & $\infty$ & $\infty$& $4\K+5|\polyhyd|$  & $2\K+|\polyhyd|$ & $-$  &$2\K+|\polyhyd|$ & $1$ & $8\K+\totalpolycons+7|\polyhyd|+3$\\
$\RmdlI{k}$  & $\times$ & $\infty$ & $5|\polyhyd|$  & $|\polyhyd|$ &\K  &$|\polyhyd|$ & $1$ & $\totalpolycons+7|\polyhyd|+3$\\
  & $=$ & $\infty$ & $5|\polyhyd|$  & $|\polyhyd|$ &\K  &$|\polyhyd|$ & $1$ & $\totalpolycons+7|\polyhyd|+3$\\
  & $\times$ & $\times$& $5|\polyhyd|$  & $|\polyhyd|$ &\K  & $-$ & $1$ & $\totalpolycons+6|\polyhyd|+3$\\
  & $=$ & $\times$ & $5|\polyhyd|$  & $|\polyhyd|$ &\K  & $-$ & $1$ & $\totalpolycons+6|\polyhyd|+3$\\
\hline
& $\mdla$ & $\mdlb$ & {\tt L} & {\tt BL} & {\tt CQ} & {\tt qOC} & {\tt NQ} & \\
 \hline 
 & $\leftrightarrow$ & $\leftrightarrow$ & $5\K+2|\points_k|+5|\polyhyd|+5$  & $\K+|\polyhyd|$ & $-$  &$2\K+|\polyhyd|$ & $-$ & $12\K+\totalpolycons+9|\polyhyd|+3+|\points_k|$\\ 
  & $\infty$ & $\infty$& $5\K+2|\points_k|+5|\polyhyd|+5$  & $\K+|\polyhyd|$ & $-$  &$2\K+|\polyhyd|$ & $-$ & $8\K+\totalpolycons+7|\polyhyd|+3+|\points_k|$\\
$\RmdlII{k}$  & $\times$ & $\infty$ & $2|\points_k|+5|\polyhyd|+5$  & $|\polyhyd|$ &\K  &$|\polyhyd|$ & $-$ & $\totalpolycons+7|\polyhyd|+3+|\points_k|$\\
  & $=$ & $\infty$ & $2|\points_k|+5|\polyhyd|+5$  & $|\polyhyd|$ &\K  &$|\polyhyd|$ & $-$ & $\totalpolycons+7|\polyhyd|+3+|\points_k|$\\
  & $\times$ & $\times$& $2|\points_k|+5|\polyhyd|+5$  & $|\polyhyd|$ &\K  & $-$ & $-$ & $\totalpolycons+6|\polyhyd|+3+|\points_k|$\\
  & $=$ & $\times$ & $2|\points_k|+5|\polyhyd|+5$  & $|\polyhyd|$ &\K  & $-$ & $-$ & $\totalpolycons+6|\polyhyd|+3+|\points_k|$\\
  \hline
 & $\mdla$ & $\mdlb$ & {\tt L} & {\tt BL} & {\tt CQ} & {\tt qOC} & {\tt NQ} & \\
 \hline 
 & $\leftrightarrow$ & $\leftrightarrow$& 
$5\K+2|\points_k|+5|\polyhyd|$  & $\K+|\polyhyd|$ & $-$  &$2\K+|\polyhyd|$ & $-$ & $12\K+\totalpolycons+9|\polyhyd|+2+|\points_k|$\\ 
  & $\infty$ & $\infty$& $5\K+2|\points_k|+6|\polyhyd|$  & $\K$ & $-$  &$2\K+|\polyhyd|$ & $-$ & $8\K+\totalpolycons+7|\polyhyd|+2+|\points_k|$\\
$\RmdlIII{k}$  & $\times$ & $\infty$ & $2|\points_k|+6|\polyhyd|$  & $-$ & $-$  &$|\polyhyd|$ & $-$ & $\totalpolycons+7|\polyhyd|+2+|\points_k|$\\
  & $=$ & $\infty$ & $2|\points_k|+6|\polyhyd|$  & $-$ & $-$  &$|\polyhyd|$ & $-$ & $\totalpolycons+7|\polyhyd|+2+|\points_k|$\\
  & $\times$ & $\times$& $2|\points_k|+6|\polyhyd|$  & $-$ & $-$  & $-$ & $-$ & $\totalpolycons+6|\polyhyd|+2+|\points_k|$\\
  & $=$ & $\times$ & $2|\points_k|+6|\polyhyd|$  & $-$ & $-$  & $-$ & $-$ & $\totalpolycons+6|\polyhyd|+2+|\points_k|$\\
  \hline
\end{tabular}
\caption{Size and types of submodels}
\label{table:modelsize:free}
\end{center}
\end{table}

\subsubsection{Pseudo-code.} 
\label{section:os:clusteringalgo:step3}

The overall architecture of the heuristic is presented in Algorithm~\ref{alg:jp}.
We see that the heuristic is run $L$ time and that the starting clusters are obtained using an application of k-means from a random initialization.
The selection of needles characteristics is performed in Lines~\ref{algo_line7}-\ref{algo_line20}. 
In particular, the determination of each individual needle is done by solving the needle placement for each exiting plane and choosing among the planes the one leading to the best needle (Line~\ref{algo_line19}). 
For each plane and cluster, the needle is found by running the three models $\RmdlI{}$, $\RmdlII{}$, $\RmdlIII{}$ (Lines~\ref{algo_line13}-\ref{algo_line15}) until convergence is achieved.
Clusters are then updated through the solution of Model (\Mdlassg) in Lines~\ref{algo_line24}-\ref{algo_line27}. 
If clusters have changed, the process is iterated.
Otherwise, run $l$ of the multi-start heuristic is completed and the solution obtained in this run is compared with the best solution encountered so far in Line~\ref{algo_line29}. 
The best solution from all runs is then returned in Line~\ref{algo_line31}.

\begin{algorithm}
\caption{Solution algorithm for $\Mmdl$}
\label{alg:jp}
\begin{algorithmic}[1]
\State $\totoptval^* \gets \infty$, $\epsilon \gets 0.001$
\For{$l=1,\ldots,L$} \Comment{{\tt \footnotesize Perform restarts}}
\State done $\gets$ \no \Comment{{\tt \footnotesize Clusters initializations}}
\State Generate clusters $\points_k$ for $k \in \needles$ using k-means from random startpoint 
\While{(done == \no)} \Comment{{\tt \footnotesize Iterate until convergence}} 
\State $\totoptval_l \gets 0$, done $\gets$ \yes
\For{$k \in \needles$} \Comment{{\tt \footnotesize Find needle characteristics}} \label{algo_line7}
    \For{$e \in \en$}
        \State $\optval_1^{k,e} \gets 0$, $\optval_2^{k,e} \gets 0$, $\optval_3^{k,e} \gets \infty$, $\optval^* \gets \infty$, $e^* \gets 0$
        \State \textbf{if} ($k=1$) \textbf{then} Randomly generate $\init{\dir}{}^{1,e} \in \Re^3$ \textbf{else} $\init{\dir}{}^{k,e}=\fix \dir{}^{k-1,e}$ \textbf{end if}
        \State Solve $\Mmdl$ with $0$-objective and fixed $\init{\dir}{}^{k,e}$ to obtain $(\init{\start}{}^{k,e^*}, 
        \init{\tmin}{}^{k,e^*}, \init{\tmax}{}^{k,e^*},\init{\stept_{i}}{}^{k,e})$
        \While{($\optval_3^{k,e}-\optval_2^{k,e}<-\epsilon$)}
            \State Solve $(\RmdlI{k})$ with optimal value $\optval_1^{k,e}$ and solution $(\init{\dir}{}^{k,e})$ \label{algo_line13}
            \State Solve $(\RmdlII{k})$ with optimal value $\optval_2^{k,e}$ and solution $(\init{\start}{}^{k,e}, \init{\stept_{i}}{}^{k,e})$
            \State Solve $(\RmdlIII{k})$ with optimal value $\optval_3^{k,e}$ and solution $(\init{\tmin}{}^{k,e}, \init{\tmax}{}^{k,e}, \init{\stept_{i}}{}^{k,e})$
            \label{algo_line15}
        \EndWhile
        \State \textbf{if} ($\optval_3^{k,e}<\optval^*$) \textbf{then} $\optval^* \gets \optval_3^{k,e}$, $e^* \gets e$ \textbf{end if}
    \EndFor
    \State  $(\fix{\start}^{k},\fix{\dir}^{k},\fix{\tmin}^k,\fix{\tmax}^k) \gets (\init{\start}{}^{k,e^*}, \init{\dir}{}^{k,e^*}, \init{\tmin}{}^{k,e^*}, \init{\tmax}{}^{k,e^*})$, $\totoptval_l \gets \totoptval_l + \optval^*$ \label{algo_line19}
\EndFor \label{algo_line20}
\For{($i \in \points$)}
\State $\vatau^*_{i,k} \gets \frac{(\dwpt^i-\start^k) \tr \dir^k}{\mynorm{\dir^k}_2^2}$, $\lambda^*=\max\{\fix{\tmin}^k,\min\{\fix{\tmax}^k,\vatau^*_{i,k} \}\}$, 
$\Dis_{i,k}^2= ||\dwpt^i-\fix{\start}^k-\lambda^* \fix{\dir}^k||_2^2$.
\EndFor
\State Solve (\Mdlassg) with distances $\Dis_{i,k}^2$ to obtain optimal solution $\asgt{}^*$ \label{algo_line24} \Comment{{\tt \footnotesize Update clusters}}
\For{($k \in \needles$)} 
\State $\points'_k \gets \{ i \in \points \,|\, \asgt_{i,k}^* = 1 \}$,  
 \textbf{if} ($\points'_k \neq \points_k$) \textbf{do} done $\gets$ \no\xspace  \textbf{end if} , $\points_k \gets \points'_k$
\EndFor \label{algo_line27}
\EndWhile
\State \textbf{if} ($\totoptval_l<\totoptval^*$)
\textbf{then} $\totoptval^* \gets \totoptval_l$, $(\dir^*,\start^*,\tmin^*,\tmax^*) \gets (\fix{\dir},\fix{\start},\fix{\tmin},\fix{\tmax})$ 
\textbf{end if} \label{algo_line29}
\EndFor
\State \Return $(\dir^*,\start^*,\tmin^*,\tmax^*)$ \label{algo_line31}
\end{algorithmic}
\end{algorithm}

\subsection{Convex hull }
\label{section:os:faceconvexhull}

This section presents in Figure~\ref{figure:convexhull} the polyhedra used to determine or to impose that needles do not intersect with important body structures of the patient. 

\begin{figure}[!htb]
\begin{subfigure}{.49\textwidth}
\centering
\includegraphics[scale=0.4]{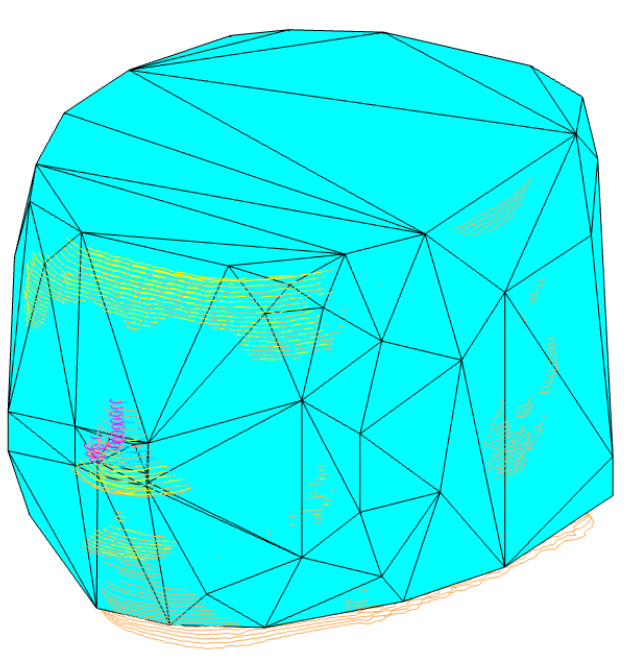}
\captionsetup{justification=centering,font=scriptsize}
\caption{3D representation of the convex hull (with 142 planes) of points representing the head structure of a patient with skin cancer on the nose}
\label{figure:convexhull:142points}
\end{subfigure}
\begin{subfigure}{.49\textwidth}
  \centering   
\vskip -3cm
\includegraphics[scale=0.4]{ 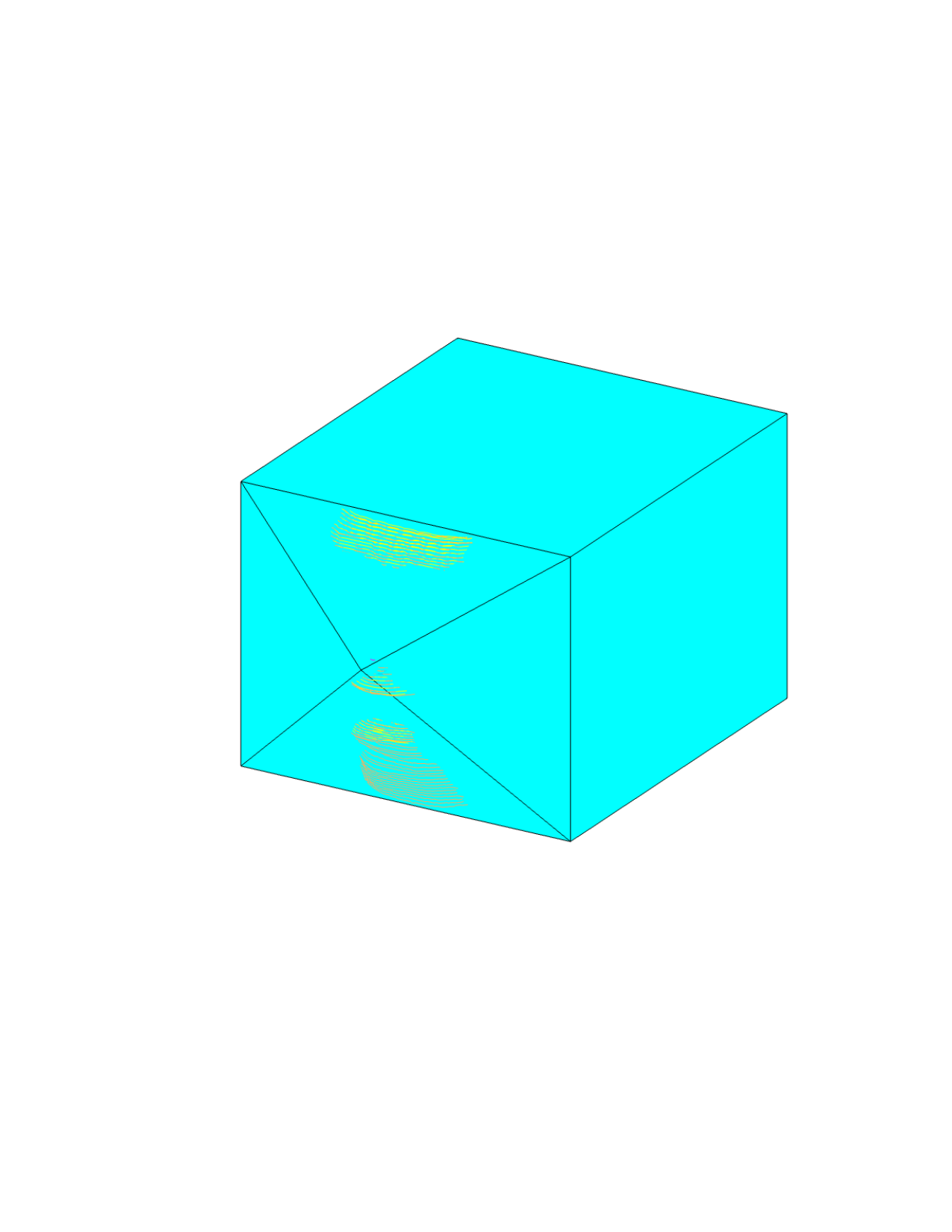}
\vskip -3cm
\captionsetup{justification=centering,font=scriptsize}
\caption{Polytope with seven planes that needles should not intersect}
\label{figure:convexhull:7points}
\end{subfigure}
\caption{Polyhedral approximations of the patient face}
\label{figure:convexhull}
\end{figure}

\subsection{Analyzing variants of clustering approach}
\label{section:os:clusteringvariants}


The clustering approach discussed in Section~\ref{section:modelsmethods:free:phase2} allows for different sets of constraints
$\Systwoneedles^{\mdla}_{\normp,\radius}\left[\cdot\right]$ 
and $\SysneedleP^{\mdlb}_{\normp,\radius} \left[ \cdot \right]$ to be used in $\Mmdl$. 
In this section, we analyze the performance of five variants whose characteristics are described in  Table~\ref{table:characteristics:free:variants}, with respect to needle positions and the following dosimetric indices: ($i$) $V_{100}$, $V_{150}$, and $V_{200}$ for tumor regions ($RV$, $RS$, $LV$, and $LS$) and ($ii$) $V_{50}$, $V_{100}$, $V_{150}$, and $V_{200}$ for OARs ($SW$, $ST$, $RE$, and $ LE$) with only $V_{50}$ computed for $RE$ and $LE$.  

\begin{table}[!htb]
\begin{center}
\begin{tabular}{|c|c|c|c|c|c|c|c} 
 \hline
\backslashbox{Parameter}{Method} & \multicolumn{1}{c|}{Variant-1} &  \multicolumn{1}{c|}{Variant-2}  & \multicolumn{1}{c|}{Variant-3} &  \multicolumn{1}{c|}{Variant-4} & \multicolumn{1}{c|}{Variant-5}  \\
 \hline
$\mdla$	&  $\times/=$	&	 $\times/=$	&	$\times/=$	&	 $\infty$	&  $\infty$	\\	
\hline
$\mdlb$ 	&   $\times$	&	 $\infty$	&	 $\infty$	&	 $\infty$	&  $\infty$\\
\hline
 $\normp$	& 2	&	1	&	2	&	1	&	2	\\
  \hline
\end{tabular}
\caption{Parameters of variants of the clustering approach}
\label{table:characteristics:free:variants}
\end{center}
\end{table}

Figure~\ref{figure:needle:free:clustering:variants} graphically depicts the needles obtained from each variant.
In some variants, such as Variant-4 in Figure~\ref{figure:needle:free:clustering:variant4}, it may appear that needles intersect at the same point. 
However, this is only due to the 2D display limitation of the needle configurations.
Different angles of view illustrate that needles do not intersect, as shown for Variant-4 in Figure~\ref{figure:needle:free:clustering:variant4:angles}.

\begin{figure}
\begin{subfigure}{.32\textwidth}
\centering   
\includegraphics[width=1\linewidth]{ needtest1}  
\captionsetup{justification=centering,font=scriptsize}
\caption{Variant-1}
\label{figure:needle:free:clustering:variant1}
\end{subfigure}
\begin{subfigure}{.32\textwidth}
\centering
\includegraphics[width=1\linewidth]{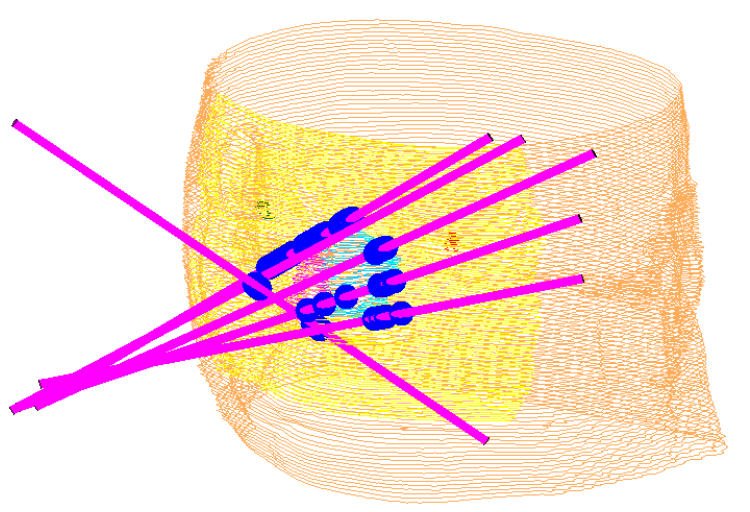}  
\captionsetup{justification=centering,font=scriptsize}
\caption{Variant-2}
\label{figure:needle:free:clustering:variant2}
\end{subfigure}
\begin{subfigure}{.32\textwidth}
\centering
\includegraphics[width=1\linewidth]{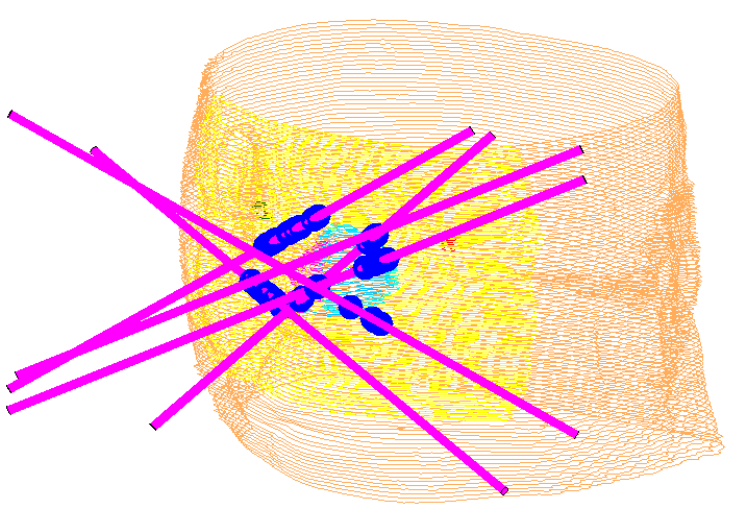} 
\captionsetup{justification=centering,font=scriptsize}
\caption{Variant-3}
\label{figure:needle:free:clustering:variant3}
\end{subfigure}
\begin{subfigure}{.32\textwidth}
\centering
\includegraphics[width=1\linewidth]{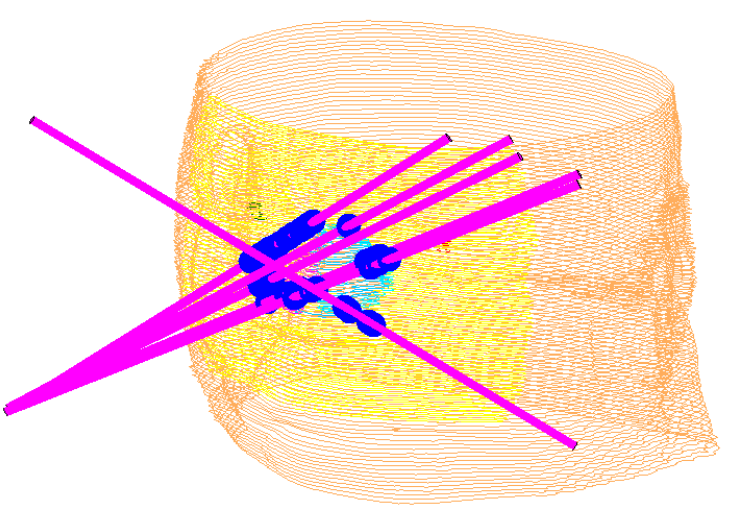}
\captionsetup{justification=centering,font=scriptsize}
\caption{Variant-4}
\label{figure:needle:free:clustering:variant4}
\end{subfigure}
\begin{subfigure}{.32\textwidth}
\centering
\includegraphics[width=1\linewidth]{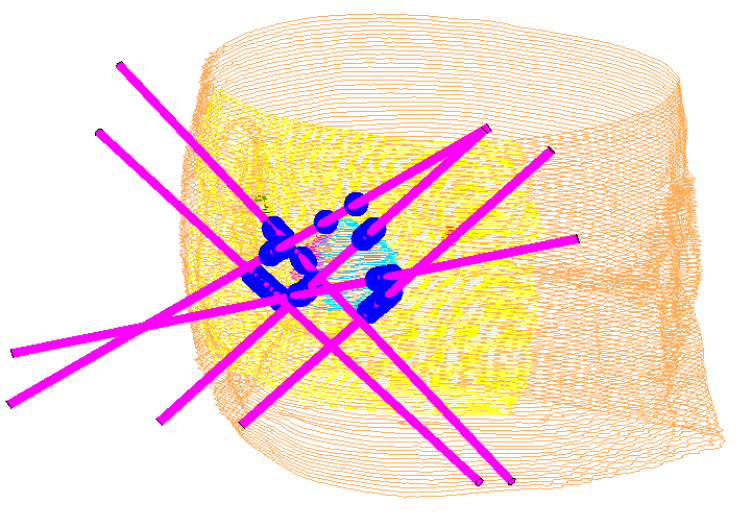} 
\captionsetup{justification=centering,font=scriptsize}
\caption{Variant-5}
\label{figure:needle:free:clustering:variant5}
\end{subfigure}
\begin{subfigure}{.32\textwidth}
\centering
\empty
\label{figure:needle:free:clustering:empty}
\end{subfigure}
\caption{Needle configurations for variants of the clustering approach}
\label{figure:needle:free:clustering:variants}
\end{figure}

\begin{figure}
\centering
\begin{subfigure}{.32\textwidth}
\centering
\includegraphics[width=1.1\linewidth]{ need5}
\label{figure:needle:free:clustering:variant4:angle1}
\end{subfigure}
\hspace*{0.5in}
\begin{subfigure}{.32\textwidth}
\centering
\includegraphics[width=1\linewidth]{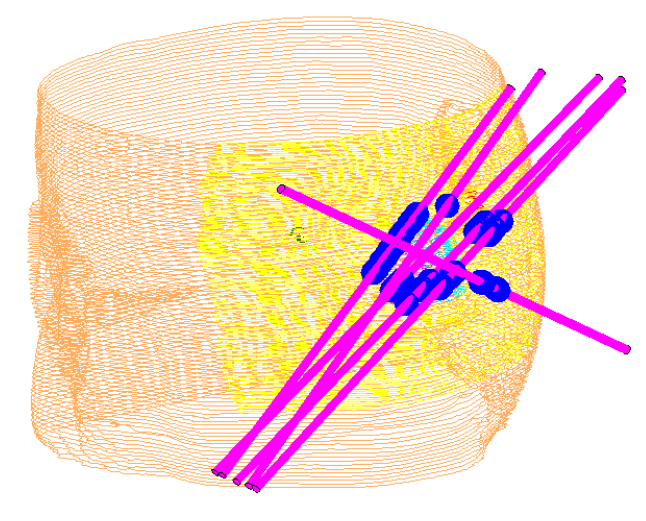} 
\label{figure:needle:free:clustering:variant4:angle2}
\end{subfigure}
\caption{Needle configuration in Variant-4 from different view angles}
\label{figure:needle:free:clustering:variant4:angles}
\end{figure}

Table~\ref{table:comparison:results:free:variants} lists the dosimetric indices for each variant.   
An analysis of this table reveals that all formulations produce similar improvements for most metrics, with only minor differences for the others. 
However, Variant-1 generates a solution much faster than the others. 
This is the reason that we decided to select $\mdla=``\times/="$, $\mdlb=``\times"$, and $\normp=2$ when evaluating the performance of the clustering approach in Section~\ref{section:results:free}.

\begin{table}[!htbp]
\begin{center}
\begin{tabular}{|c|c|c|c|c|c|c|c} 
 \hline
\backslashbox{Metric}{Method} & \multicolumn{1}{c|}{Variant-1} &  \multicolumn{1}{c|}{Variant-2}  & \multicolumn{1}{c|}{Variant-3} &  \multicolumn{1}{c|}{Variant-4} & \multicolumn{1}{c|}{Variant-5}  \\
 \hline

$LE_{V_{50}}$	&  0.00	&	0.00	&	0.00	&	0.00	&               	       0.00\\	
 \hline										
$RE_{V_{50}}$	&  0.02	&	0.00	&	0.00	&	0.00	& 	0.02\\	
 \hline										
$ST_{V_{50}}$	& 0.29	&	0.27	&	0.29	&	0.27	&	0.28	\\
 \hline										
$ST_{V_{100}}$	& 0.08	&	0.07	&	0.07	&	0.07	&	0.07	\\
 \hline										
$ST_{V_{150}}$	&0.00	&	0.00	&	0.00	&	0.00	&	0.00	\\
 \hline										
$ST_{V_{200}}$	& 0.00	&	0.00	&	0.00	&	0.00	&	0.00\\	
 \hline										
$SW_{V_{50}}$	&0.25	&	0.22	&	0.24	&	0.21	&      	0.23\\	
 \hline										
$SW_{V_{100}}$	&0.01	&	0.01	&	0.01	&	0.01	&	0.01\\	
 \hline										
$SW_{V_{150}}$	& 0.00	&	0.00	&	0.00	&	0.00	&          	                0.00                         \\	
 \hline										
$SW_{V_{200}}$	& 0.00	&	0.00	&	0.00	&	0.00	&	0.00	 \\
 \hline										
$RB_{V_{100}}$	&0.91	&	0.91	&	0.91	&	0.91	&                                     0.91  \\	
 \hline										
$RB_{V_{150}}$	&0.04	&	0.05	&	0.04	&	0.04	&	0.04	\\
 \hline										
$RB_{V_{200}}$	& 0.00	&	0.00	&	0.00	&	0.00	&	0.00	 \\
 \hline										
$RS_{V_{100}}$	& 0.98	&	0.98	&	0.98	&	0.98	&	0.98	\\
 \hline										
$RS_{V_{150}}$	& 0.00	&	0.02	&	0.00	&	0.02	&	0.00\\	
 \hline										
$RS_{V_{200}}$	&0.00	&	0.00	&	0.00	&	0.00	&	0.00	\\
 \hline										
$LB_{V_{100}}$	& 0.95	&	0.94	&	0.94	&	0.92	&	0.94\\	
 \hline										
$LB_{V_{150}}$	&0.04	&	0.05	&	0.01	&	0.03	&	0.02	\\
 \hline										
$LB_{V_{200}}$	&0.00	&	0.00	&	0.00	&	0.00	&	0.00	\\
 \hline										
$LS_{V_{100}}$	&0.97	&	0.94	&	0.96	&	0.96	&	0.97	\\
 \hline										
$LS_{V_{150}}$	& 0.04	&	0.04	&	0.04	&	0.03	&	0.06	\\
 \hline										
$LS_{V_{200}}$	&  0.00	&	0.00	&	0.00	&	0.00	& 	0.00 \\	
\hline										
  Solution time (s) & 46.20 &95.28	 &	153.33	&1238.48&	1584.37 \\		
\hline
\hline
\end{tabular}
\caption{Dosimetric indices and solution times for variants of the clustering approach}
\label{table:comparison:results:free:variants}
\end{center}
\end{table}



\end{document}